\documentclass[11pt]{article}
\usepackage{bm}
\usepackage{amsfonts,amsmath,amssymb}
\usepackage{graphicx}
\usepackage{xspace}
\usepackage{boxedminipage}
\usepackage{url}
\usepackage{float}
\usepackage{enumerate,paralist}
\usepackage{accents}
\usepackage{setspace}
\usepackage{hyperref}
\usepackage{mathrsfs}
\usepackage[cm]{fullpage}%
\usepackage{paralist}%
\usepackage{scalerel}%
\usepackage[amsmath,thmmarks]{ntheorem}%
\usepackage{graphicx}
\usepackage[labelfont=bf]{caption}
\usepackage{subcaption}
\usepackage{colortbl}
\usepackage{hhline}
\usepackage{tikz}
\usepackage{multirow}
\usetikzlibrary{decorations.pathreplacing,angles,quotes}
\usepackage[normalem]{ulem}
\usepackage{afterpage}
\theoremseparator{.}%

\usepackage[margin=1in]{geometry}

\usepackage{salgorithm}

\usepackage{mleftright}%

\usepackage{hyperref}%
\hypersetup{%
   breaklinks,%
   ocgcolorlinks,
   colorlinks=true,%
   linkcolor=[rgb]{0.45,0.0,0.0},%
   citecolor=[rgb]{0,0,0.45}
}

\usepackage{slantsc}

\numberwithin{figure}{section}
\numberwithin{table}{section}
\numberwithin{equation}{section}%

\usepackage{titlesec}%
\titlelabel{\thetitle. }%

\newcommand{\HLinkShort}[2]{\hyperref[#2]{#1\ref*{#2}}}
\newcommand{\HLink}[2]{\hyperref[#2]{#1~\ref*{#2}}}
\newcommand{\HLinkPage}[2]{\hyperref[#2]{#1~\ref*{#2}%
      $_\text{p\pageref{#2}}$}}
\newcommand{\HLinkPageOnly}[1]{\hyperref[#1]{Page~\refpage*{#1}%
      $_\text{p\pageref{#1}}$}}

\newcommand{\HLinkSuffix}[3]{\hyperref[#2]{#1\ref*{#2}{#3}}}
\newcommand{\HLinkPageSuffix}[3]{\hyperref[#2]{#1\ref*{#2}%
      #3$_\text{p\pageref{#2}}$}}

\providecommand{\eqlab}[1]{}%
\renewcommand{\eqlab}[1]{\label{equation:#1}}



\theoremstyle{plain}%
\newtheorem{theorem}{Theorem}[section]

\newtheorem{lemma}[theorem]{Lemma}

\newtheorem{corollary}[theorem]{Corollary}

\theoremstyle{plain}%
\theorembodyfont{\upshape}%
\newtheorem*{remark:unnumbered}[theorem]{Remark}%
\newtheorem{definition}[theorem]{Definition}


\theoremheaderfont{\em}%
\theorembodyfont{\upshape}%
\theoremstyle{nonumberplain}%
\theoremseparator{}%
\theoremsymbol{\myqedsymbol}%
\newtheorem{proof}{Proof:}%

\newcommand{\myqedsymbol}{$\square$}

%

\def\compactify{\itemsep=0pt \topsep=0pt \partopsep=0pt \parsep=0pt}
\let\latexusecounter=\usecounter

{\begin{itemize}%
\setlength{\itemsep}{0pt}%
\setlength{\topsep}{0pt}%
\setlength{\partopsep}{0pt}%
\setlength{\parskip}{0pt}}%
{\end{itemize}}

\newcommand{\eps}{\varepsilon}%

\def\floor#1{\lfloor {#1} \rfloor}
\def\ceil#1{\lceil {#1} \rceil}
\def\script#1{\mathcal{#1}}

\def\etal{\text{et al.}\xspace}

\def\sep{\;|\;}
\def\card#1{|#1|}
\def\set#1{\{#1\}}
\def\E{\mathbf{E}}

\newcommand{\sol}{\ensuremath{\mathtt{sol}}}%
\def\path{\mathrm{path}}
\newcommand{\probSetCover}{\prob{Set\,Cover}{}\xspace}

\newcommand{\probCoverVerification}{\prob{Cover\,Verification}{}\xspace}
\newcommand{\Yes}{\prob{Yes}{}\xspace}
\newcommand{\No}{\prob{No}{}\xspace}
\newcommand{\Binomial}{\prob{B}{}}

\newcommand{\pbrcx}[1]{\left[ {#1} \right]}%
\newcommand{\Prob}[1]{\mathop{\mathbf{Pr}}\!\pbrcx{#1}}

\newcommand{\pth}[1]{\mleft({#1}\mright)}%

\def\cost{\mathrm{cost}}

\def\smp{\mathrm{smp}}

\def\rem{\mathrm{rem}}

\def\pr{\mathbf{Pr}}
\def\mypar#1{\smallskip\noindent\textbf{#1}}
\def\EltOf{\textsc{EltOf}}
\def\SetOf{\textsc{SetOf}}
\def\whp{\text{w.h.p.}}

\def\DCI{\mathfrak{D}}
\def\CI{\mathfrak{I}}

\def\replace{\textsf{r}}
\def\cost{\textsf{cost}}
\def\tough{\textsf{tough}}

\newcommand{\tldO}{\widetilde{O}}%
\newcommand{\tldOmega}{\widetilde{\Omega}}%
\newcommand{\tldTheta}{\widetilde{\Theta}}%

\newcommand{\prob}[1]{\textup{\fontencoding{T1}\fontfamily{ppl}\fontseries{m
}\fontshape{n}\selectfont #1}\xspace}

\def\sC{\script{C}}

\def\sM{\script{M}}

\def\sT{\script{T}}

\newcommand{\GSample}{\mathsf{S}}%
\newcommand{\sS}{\mathcal{F}}
\newcommand{\sE}{\mathcal{U}}
\newcommand{\CoverA}{\mathcal{D}}

\def\NP{\ensuremath{\mathrm{\mathbf{NP}}}}

\newcommand{\Algorithm}[1]{{\textsc{#1}}}

\newcommand{\algRandSC}{\Algorithm{LargeSetCover}\xspace}
\newcommand{\algIterSC}{\Algorithm{IterSetCover}\xspace}
\newcommand{\algSubSC}{\Algorithm{SmallSetCover}\xspace}
\newcommand{\genModifiedInst}{\Algorithm{GenModifiedInst}}
\newcommand{\offlineSC}{\Algorithm{OfflineSetCover}\xspace}%

\newcommand{\range}{\mathsf{r}}%

\newcommand{\inst}{I} 
\newcommand{\binst}{I^*} 
\newcommand{\minst}{I'} 
\newcommand{\binstF}{\mathcal{I}}  
\newcommand{\minstD}{\mathcal{D}} 
\newcommand{\alg}{\mathcal{A}} 
\newcommand{\apxF}{\alpha} 
\newcommand{\eltq}{\mathsf{Elt}}
\newcommand{\setq}{\mathsf{Set}}
\newcommand{\swap}{\mathsf{swap}}
\newcommand{\cand}{\mathsf{Candidate}}
\newcommand{\ans}{\mathsf{ans}}

\newcommand{\red}{\mathsf{red}}
\newcommand{\white}{\mathsf{white}}

\def\rnd{\mathsf{rnd}}
\def\up{u}
\def\low{l}

\def\rare{\mathsf{rare}}

\newcommand{\Null}{\mathsf{null}}


\begin{document}
\setlength{\abovedisplayskip}{3pt}
\setlength{\belowdisplayskip}{3pt}

\author{%
	Piotr Indyk  \thanks{CSAIL, MIT, \{indyk, mahabadi, vakilian, anak\}@mit.edu}
	\and
   Sepideh Mahabadi \footnotemark[1]
   \and%
   Ronitt Rubinfeld \thanks{CSAIL, MIT and  TAU, ronitt@csail.mit.edu}
   \and
   Ali Vakilian \footnotemark[1]
   \and%
   Anak Yodpinyanee \footnotemark[1]
}

\title{Set Cover in Sub-linear Time\thanks{This work was supported by the NSF grants, including No. CCF-1650733, CCF-1733808, CCF-1420692, IIS-1741137, and the Simons Investigator award.}}
\date{}
\maketitle

\setcounter{page}{1}%
\pagenumbering{arabic}%

\begin{abstract}

We study the classic set cover problem from the perspective of sub-linear algorithms.
Given access to a collection of $m$ sets over $n$ elements in the query model, we show that sub-linear algorithms derived from existing techniques have almost tight query complexities.

On one hand, first we show an adaptation of the streaming algorithm presented in \cite{imv-ttbsscp-15} to the sub-linear query model, that returns an $\apxF$-approximate cover using $\tldO(m(n/k)^{1/(\apxF-1)} + nk)$ queries to the input,  where $k$ denotes the value of a minimum set cover. We then complement this upper bound by proving that for lower values of $k$, the required number of queries is $\tldOmega(m(n/k)^{1/(2\apxF)})$, even for estimating the optimal cover size. Moreover, we prove that even checking whether a given collection of sets covers all the elements would require $\Omega(nk)$ queries.
These two lower bounds provide strong evidence that the upper bound is almost tight for certain values of the parameter $k$.

On the other hand, we show that this bound is not optimal for larger values of the parameter $k$, as there exists a $(1+\eps)$-approximation algorithm with $\tldO(mn/k\eps^2)$ queries. We show that this bound is essentially tight for sufficiently small constant $\eps$, by establishing a lower bound of $\tldOmega(mn/k)$ query complexity. 

Our lower-bound results follow by carefully designing two distributions of instances that are hard to distinguish. In particular, our first lower bound involves a probabilistic construction of a certain set system with a minimum set cover of size $\apxF k$, with the key property that a small number of ``almost uniformly distributed'' modifications can reduce the minimum set cover size down to $k$. Thus, these modifications are not detectable unless a large number of queries are asked. We believe that our probabilistic construction technique might find applications to lower bounds for other combinatorial optimization problems. 

\end{abstract}
\section{Introduction}
\probSetCover is a classic combinatorial optimization problem, in which we are given a set (universe) of $n$ elements $\sE = \set{e_1, \cdots, e_n}$ and a collection of $m$ sets $\sS = \set{S_1,\cdots, S_m}$. The goal is to find a \emph{set cover}  of $\sE$, i.e., a collection of sets in $\sS$ whose union is $\sE$, of minimum size. \probSetCover is a well-studied problem with applications in operations
research~\cite{gw-ceaas-97}, information retrieval and data mining~\cite{sg-mcsma-09}, learning theory~\cite{kv-iclt-1994}, web host analysis~\cite{ckt-mcmr-10}, and
many others. 
Recently, this problem and other related coverage problems have gained a lot of attention in the context of massive data sets, e.g., streaming model \cite{sg-mcsma-09, er-sssc-14, dimv-sccsc-14, imv-ttbsscp-15, cw-igpcs-16, aky-tbspscscp-16, mv-bsamcp-17, a-tsatmpsscp-17, bem-aosacp-17, imruvy-fscsm-17} or map reduce model \cite{kmvv-fgams-13,mz-rccsdsm-15,bem-dcms-16}.

Although the problem of finding an optimal solution is \NP-complete, a natural greedy algorithm which iteratively picks the ``best'' remaining set (the set that covers the most number of uncovered elements) is widely used.
The algorithm finds a solution of size at most $k \ln n$ where $k$ is the optimum cover size, and can be implemented to run in time linear in the input size. However, the input size itself could be as large as $\Theta(mn)$, so for large data sets even reading the input might be infeasible.

This raises a natural question: {\em is it possible to solve minimum set cover in sub-linear time?}
This question was previously addressed in \cite{no-ctaali-08, yyi-ictaammoop-12}, who showed that one can design constant running-time algorithms by simulating the greedy algorithm, under the assumption that the sets are of constant size and each element occurs in a constant number of sets.  
However,  those constant-time algorithms have a few drawbacks: they only provide a mixed multiplicative/additive guarantee (the output cover size is guaranteed to be at most $k\cdot \ln n+ \epsilon n$), 
 the dependence of their running times on the maximum set size is exponential, and they only output the (approximate) minimum set cover size, not the cover itself. From a different perspective, \cite{Koufogiannakis2014}  (building on~\cite{grigoriadis1995sublinear}) showed that an $O(1)$-approximate solution to the {\em fractional} version of the problem can be found in $\tldO(mk^2+nk^2)$ time\footnote{The method can be further improved to $\tldO(m+nk)$ (N. Young, personal communication).}. Combining this algorithm with the randomized rounding yields an $O(\log n)$-approximate solution to \probSetCover with the same complexity.

In this paper we initiate a systematic study of the complexity of sub-linear time algorithms for set cover with multiplicative approximation guarantees. 
Our upper bounds complement the aforementioned result of~\cite{Koufogiannakis2014} by presenting algorithms  which are fast when $k$ is {\em large}, as well as algorithms that provide more accurate solutions (even with a constant-factor approximation guarantee) that use a sub-linear number of {\em queries}\footnote{Note that polynomial {\em time} algorithm with sub-logarithmic approximation algorithms are unlikely to exist.}.
Equally importantly, we establish nearly matching lower bounds, some of which even hold for estimating the optimal cover size.
Our algorithmic results and lower bounds are presented in Table~\ref{t:results}.

\mypar{Data access model.} 
As in the prior work~\cite{no-ctaali-08,yyi-ictaammoop-12} on \prob{Set Cover}, our algorithms and lower bounds assume that the input can be accessed via the adjacency-list oracle.\footnote{In the context of graph problems, this model is also known as the {\em incidence-list model},  and has been studied extensively, see e.g.,~\cite{chazelle2005approximating, goel2013perfect, bhattacharya2015space}.}
More precisely, the algorithm has access to the following two oracles:

\begin{compactenum}
\item{\textbf{$\EltOf$:}} Given a set $S_i$ and an index $j$, the oracle returns the $j^{\mathrm{th}}$ element of $S_i$. If $j > \card{S_i}$, $\bot$ is returned.
\item{\textbf{$\SetOf$:}} Given an element $e_i$ and an index $j$, the oracle returns the $j^{\mathrm{th}}$ set containing $e_i$. If $e_i$ appears in less than $j$ sets, $\bot$ is returned.
\end{compactenum} 

This is a natural model, providing a ``two-way'' connection between the sets and the elements. Furthermore, for some graph problems modeled by \probSetCover (such as \prob{Dominating Set} or \prob{Vertex Cover}), such oracles are essentially equivalent to the aforementioned incident-list model  studied in sub-linear graph algorithms.
We also note that the other popular access model employing the {\em membership oracle}, where we can query whether an element $e$ is contained in a set $S$, is not suitable for \probSetCover, as it can be easily seen that even checking whether a feasible cover exists requires $\Omega(mn)$ time.

\subsection{Overview of our results}
In this paper we present algorithms and lower bounds for the  \probSetCover problem. The results are summarized  in  Table~\ref{t:results}.   
The $\NP$-hardness of this problem (or even its  $o(\log n)$-approximate version \cite{f-tasc-98,rs-sbepl-97,ams-acskr-06,m-pgcnsc-12,ds-aapr-14}) precludes the existence of highly accurate algorithms with fast running times, while (as we show) it is still possible to design algorithms with sub-linear query complexities and low approximation factors. The lower bound proofs hold for the running time of any algorithm approximation set cover assuming the defined data access model. 

We present two algorithms with sub-linear number of queries. First, we show that the streaming algorithm presented in \cite{imv-ttbsscp-15} can be adapted so that it returns an $O(\alpha)$-approximate cover using $\tldO(m(n/k)^{1/(\apxF-1)} + nk)$ queries, which could be {\em quadratically} smaller than $mn$. 
Second, we present a simple algorithm which is tailored to the case when the value of $k$ is large. This algorithm computes an $O(\log n)$-approximate cover in $\tldO(mn/k)$ {\em time} (not just query complexity). Hence, by combining it with the algorithm of~\cite{Koufogiannakis2014}, we get an $O(\log n)$-approximation algorithm that runs in time $\tldO(m+n\sqrt{m})$.

We complement the first result by proving that for low values of $k$, the required number of queries is $\tldOmega(m(n/k)^{1/(2\apxF)})$ even for estimating the size of the optimal cover. This shows that the first algorithm is essentially optimal for the values of $k$ where the first term in the runtime bound dominates.
Moreover, we prove that even the \probCoverVerification problem, which is checking whether a given collection of $k$ sets covers all the elements, would require $\Omega(nk)$ queries. This provides strong evidence that the term $nk$ in the first algorithm is unavoidable.
Lastly, we complement the second algorithm, by showing a lower bound of  $\tldOmega(mn/k)$ if the approximation ratio is a small constant.

\begin{table*}[t]
\centering
\renewcommand{\arraystretch}{1.5} 
\makebox[\textwidth][c]{
\begin{tabular}{|c||c|c|c|c|}
\hline \prob{Problem}
 &  \prob{Approximation} & \prob{Constraints} & \prob{Query Complexity} & \prob{Section}\\
\hline\hline
\multirow{4}{*}{\probSetCover}  &  $\apxF\rho+\eps$ & $\alpha \geq 2$ & $\tldO({1\over \eps}({m(\frac{n}{k})^{\frac{1}{\apxF-1}} + nk}))$ & \ref{sec:upperbound-small} \\
\hhline{~----}
&  $\rho+\eps$ & - & $\tldO(\frac{mn}{k\eps^2})$ & \ref{sec:upperbound-large}\\
\hhline{~----}
& $\apxF$ & $k < (\frac{n}{\log m})^{\frac{1}{4\apxF+1}}$ & $\tldOmega(m(\frac{n}{k})^{1/(2\apxF)})$ & \ref{sec:general-lb} \\
\hhline{~----}
& $\apxF$ & \begin{tabular}[x]{@{}c@{}}$\alpha\leq 1.01$\\$k = O({\frac{n} {\log m}})$\end{tabular} & $\tldOmega({\frac{mn}{k}})$ &\ref{sec:ind-reduction}\\
\hhline{-----}
\begin{tabular}[x]{@{}c@{}}\prob{Cover}\\\prob{Verification}\end{tabular}& - & $k\leq n/2$ & $\Omega(nk)$ & \ref{sec:verf} \\
\hhline{-----}
\end{tabular}
}
\caption{A summary of our algorithms and lower bounds. We use the following notation: $k \ge 1$ denotes the size of the optimum cover; $\alpha \ge 1$ denotes a parameter that determines the trade-off between the approximation quality and query/time complexities; $\rho \ge 1$ denotes the approximation factor of a ``black box'' algorithm for set cover used as a subroutine; We assume that $\alpha \leq \log n$ and $m\geq n$. 
}
\label{t:results}
\end{table*}

\subsection{Related work}\label{ss:rw}
Sub-linear algorithms for \prob{Set Cover} under the oracle model have been previously studied as an estimation problem; the goal is only to approximate the size of the minimum set cover rather than constructing one. Nguyen and Onak \cite{no-ctaali-08} consider \probSetCover under the oracle model we employ in this paper, in a specific setting 
where both the maximum cardinality of sets in $\sS$, and the maximum number of occurrences of an element over all sets, are bounded by some constants $s$ and $t$; this allows algorithms whose time and query complexities are constant, $(2^{(st)^4}/\eps)^{O(2^s)}$, containing no dependency on $n$ or $m$.
They provide an algorithm for estimating the size of the minimum set cover when, unlike our work, allowing both $\ln s$ multiplicative and $\eps n$ additive errors.
Their result has been subsequently improved to $(st)^{O(s)}/\eps^2$ by Yoshida \etal~\cite{yyi-ictaammoop-12}.
Additionally, the results of Kuhn \etal~\cite{kmw-pbns-06} on general packing/covering LPs in the distributed $\mathcal{LOCAL}$ model,
together with the reduction method of Parnas and Ron~\cite{parnas2007approximating}, implies estimating set cover size to within a $O(\ln s)$-multiplicative factor (with $\eps n$ additive error), can be performed in $(st)^{O(\log s \log t)}/\eps^4$ time/query complexities.

\probSetCover can also be considered as a generalization of the \prob{Vertex Cover} problem. The estimation variant of \prob{Vertex Cover} under the adjacency-list oracle model has been studied in \cite{parnas2007approximating,marko2006distance,orrr-nostaamvcs-12,yyi-ictaammoop-12}.
\probSetCover has been also studied in the sub-linear {\em space} context, most notably for the streaming model of computation~ \cite{sg-mcsma-09, er-sssc-14, cw-igpcs-16, aky-tbspscscp-16, a-tsatmpsscp-17, bem-aosacp-17, imruvy-fscsm-17, dimv-sccsc-14, imv-ttbsscp-15}.  In this model, there are algorithms that compute approximate set covers with only multiplicative errors. Our algorithms use some of the ideas introduced in the last two papers~\cite{dimv-sccsc-14, imv-ttbsscp-15}.

\subsection{Overview of the Algorithms}
The algorithmic results  presented in Section \ref{sec:upperbound},  use the techniques introduced for the streaming
\probSetCover problem by \cite{dimv-sccsc-14, imv-ttbsscp-15} to get new results in the context of sub-linear time algorithms for this problem. 
Two components previously used for the set cover problem in the context of streaming are \prob{Set Sampling} and \prob{Element Sampling}. Assuming the size of the minimum set cover is $k$, \prob{Set Sampling} randomly samples $\tldO(k)$ sets and adds them to the maintained solution. This ensures that all the elements that are well represented in the input (i.e., appearing in at least $m/k$ sets) are covered by the sampled sets. 
On the other hand, the \prob{Element Sampling} technique samples roughly $\tldO(k/\delta)$ elements, and finds a set cover for the sampled elements. It can be shown that the cover for the sampled elements covers a $(1-\delta)$ fraction of the original elements.

Specifically, the first algorithm performs a constant number of iterations. Each iteration uses element sampling to compute a ``partial'' cover, removes the elements covered by the sets selected so far and recurses on the remaining elements.  However, making this process work in sub-linear time (as opposed to sub-linear space) requires new technical development. For example, the algorithm of \cite{imv-ttbsscp-15} relies on the ability to test membership for a set-element pair, which generally cannot be efficiently performed in our model.

The second algorithm performs only one round of set sampling, and then identifies the elements that are not covered by the sampled sets, {\em without} performing a full scan of those sets.
This is possible because with high probability only those elements that belong to few input sets are not covered by the sample sets. Therefore, we can efficiently enumerate all  pairs $(e_i,S_j)$, $e_i \in S_j$, for those elements $e_i$ that were not covered by the sampled sets.
We then run a black box algorithm only on the set system induced by those pairs. This approach lets us avoid the $nk$ term present in the query and runtime bounds for the first algorithm, which makes the second algorithm highly efficient for large values of $k$. 

\subsection {Overview of the Lower Bounds}

\mypar{The \probSetCover lower bound for smaller optimal value $k$.}
We establish our lower bound for the problem of \emph{estimating} the size of the minimum set cover, by constructing two distributions of set systems. All systems in the same distribution share the same optimal set cover size, but these sizes differ by a factor $\apxF$ between the two distributions; thus, the algorithm is required to determine from which distribution its input set system is drawn, in order to correctly estimate the optimal cover size. Our distributions are constructed by a novel use of the probabilistic method. Specifically, we first probabilistically construct a set system called {\em median instance} (see~Lemma~\ref{lem:median-existance}): this set system has the property that (a) its minimum set cover size is $\apxF k$ and (b) a small number of changes to the instance reduces the minimum set cover size to $k$. We set the first distribution to be always this median instance.
Then, we construct the second distribution by a random process that performs the changes (depicted in Figure~\ref{fig:modified-instance}) resulting in a {\em modified instance}. This process distributes the changes almost uniformly throughout the instance, which implies that the changes are unlikely to be detected unless the algorithm performs a large number of queries.
We believe that this construction might find applications to lower bounds for other combinatorial optimization problems.

\mypar{The \probSetCover lower bound for larger optimal value $k$.} Our lower bound for the problem of \emph{computing} an approximate set cover leverages the construction above. We create a combined set system consisting of multiple modified instances all chosen independently at random, allowing instances with much larger $k$. By the properties of the random process generating modified instances, we observe that most of these modified instances have different optimal set cover solution, and that distinguishing these instances from one another requires many queries. Thus, it is unlikely for the algorithm to be able to compute an optimal solution to a large fraction of these modified instances, and therefore it fails to achieve the desired approximation factor for the overall combined instance.

\mypar{The \probCoverVerification lower bound for a cover of size $k$.}
For \probCoverVerification, however, we instead give an explicit construction of the distributions.
We first create an underlying set structure such that initially, the candidate sets contain all but $k$ elements. Then we may swap in each uncovered element from a non-candidate set. Our set structure is systematically designed so that each swap only modifies a small fraction of the answers from all possible queries; hence, each swap is hard to detect without $\Omega(n)$ queries. The distribution of valid set covers is composed of instances obtained by swapping in every uncovered element, and that of non-covers is similarly obtained but leaving one element uncovered. 
\section{Preliminaries for the Lower Bounds}
First, we formally specify the representation of the set structures of input instances, which applies to both \probSetCover and \probCoverVerification.

Our lower bound proofs rely mainly on the construction of instances that are hard to distinguish by the algorithm. To this end, we define the $\swap$ operation that exchanges a pair of elements between two sets, and how this is implemented in the actual representation.
\begin{definition}[$\swap$ operation]\label{def:swap-operation}
Consider two sets $S$ and $S'$. A swap on $S$ and $S'$ is defined over two elements $e, e'$ such that $e\in S\setminus S'$ and $e' \in S' \setminus S$, where $S$ and $S'$ exchange $e$ and $e'$. Formally, after performing $\swap(e,e')$, $S = (S\cup\set{{e'}})\setminus \set{e}$ and $S' = (S'\cup\set{{e}})\setminus \set{e'}$. As for the representation via $\EltOf$ and $\SetOf$, each application of $\swap$ only modifies $2$ entries for each oracle. That is, if previously $e=\EltOf(S,i)$, $S=\SetOf(e,j)$,  $e'=\EltOf(S',i')$, and $S'=\SetOf(e',j')$, then their new values change as follows:  $e'=\EltOf(S,i)$, $S'=\SetOf(e,j)$,  $e=\EltOf(S',i')$, and $S=\SetOf(e',j')$. 
\end{definition}
In particular, we extensively use the property that the amount of changes to the oracle's answers incurred by each $\swap$ is minimal. We remark that when we perform multiple $\swap$s on multiple \emph{disjoint} set-element pairs, every swap modifies distinct entries and do not interfere with one another.

Lastly, we define the notion of query-answer history, which is a common tool for establishing lower bounds for sub-linear algorithms under query models.
\begin{definition}By \emph{query-answer history}, we denote the sequence of query-answer pairs $\langle (q_1, a_1),$ $(q_2, a_2),$ $\ldots,$ $(q_r,a_r)\rangle$ recording the communication between the algorithm and the oracles, where each new query $q_{i+1}$ may only depend on the query-answer pairs $(q_1, a_1), \ldots, (q_i, a_i)$. In our case, each $q_i$ represents either a \SetOf~query or an \EltOf~query made by the algorithm, and each $a_i$ is the oracle's answer to that respective query according to the set structure instance.
\end{definition}

\section{Lower Bounds for the Set Cover Problem}\label{sec:simple-lb}

In this section, we present lower bounds for \prob{Set Cover} both for small values of the optimal cover size $k$ (in Section \ref{sec:lb-small}), and for large values of $k$ (in Section \ref{sec:ind-reduction}).
For low values of $k$, we prove the following theorem whose proof is postponed to Appendix~\ref{sec:general-lb}.

\begin{theorem} \label{thm:lowerbound-general}
For $2\leq k\leq  (\frac{n}{16\apxF\log m})^{1 \over 4\apxF+1}$ and $1< \alpha \leq \log n$, any randomized algorithm that solves the \probSetCover problem with approximation factor $\apxF$ and success probability at least $2/3$ requires $\tldOmega(m(n/k)^{1\over 2\apxF})$ queries.
\end{theorem}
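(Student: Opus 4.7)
The plan is to apply Yao's minimax principle: exhibit two distributions $\DCI_{\alpha k}$ and $\DCI_k$ over set systems such that (i) every instance drawn from $\DCI_{\alpha k}$ has minimum cover size exactly $\alpha k$ while every instance drawn from $\DCI_k$ has minimum cover size exactly $k$, and (ii) no deterministic algorithm making $o(m(n/k)^{1/(2\alpha)})$ queries can distinguish the two distributions with constant advantage. Since any $\alpha$-approximation algorithm for \probSetCover must in particular distinguish cover size $\alpha k$ from cover size $k$, this yields the stated bound.

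For the hard direction I would first prove, via the probabilistic method, the existence of a single ``median'' instance $I^{\star}$ with the following properties: (a) its minimum set cover has size $\alpha k$; (b) there is a collection of $k$ pairwise disjoint ``slots,'' each of size $\Theta((n/k)^{1/\alpha})$, such that for each slot one may select an ``off-support'' set, perform $\Theta((n/k)^{1/\alpha})$ disjoint $\swap$ operations localized in that slot, and reduce the optimum to $k$; and (c) for every element $e$, the number of sets containing $e$ is $\tldO(m (k/n)^{1/\alpha})$, so that the swap locations are ``almost uniformly distributed'' in the sense of \defref{swap-operation}. The balance between $m(n/k)^{1/\alpha}$ pairs involved in swaps and the per-element incidence of $\tldO(m(k/n)^{1/\alpha})$ is the crux: it gives the target $(n/k)^{1/(2\alpha)}$ threshold after the square root arising from randomization of the planted slots. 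This is the step I expect to be the main obstacle, since one must simultaneously control the optimum of $I^{\star}$ from below (no set cover of size $<\alpha k$ exists) while allowing for a very rigid planted structure of size $k$. I would do this by choosing each set of $I^{\star}$ randomly with appropriate density and applying a union bound over candidate covers of size $<\alpha k$, combined with LP-duality or an explicit greedy packing argument to certify that the cover number is indeed $\alpha k$; the constraint $k\leq (n/(16\alpha\log m))^{1/(4\alpha+1)}$ is precisely what makes the union bound go through.

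Distribution $\DCI_{\alpha k}$ is then defined to be the point-mass on $I^{\star}$, while $\DCI_k$ is obtained by picking a uniformly random permutation of the slots and of the swap targets inside $I^{\star}$ and performing the disjoint swaps described above. By (b) the resulting instance has a cover of size $k$ (the $k$ newly ``completed'' sets), and by (a) together with a short argument ruling out other small covers it has cover number exactly $k$. The total number of oracle entries modified by the swap process is $O(k\cdot (n/k)^{1/\alpha}) = O(n^{1/\alpha}k^{1-1/\alpha})$, and by (c) these modifications are spread almost uniformly among the $\Theta(mn^{1/\alpha}k^{-1/\alpha})$ answer entries touched by any given ``row'' or ``column'' of the oracle.

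Finally I would carry out a standard query-history coupling argument. Fix a deterministic algorithm $\mathcal{A}$ asking $q$ queries, and couple the executions on $I^{\star}$ and on a sample from $\DCI_k$ using the identity coupling until the first query that hits a modified entry. By uniformity and disjointness of the swap locations, each query hits a modified entry with probability $O((n/k)^{1/\alpha}/(m(n/k)^{1/\alpha})) \cdot (\text{incidence factor})$, which after simplification is $\tldO((n/k)^{-1/(2\alpha)}/m)\cdot q$; hence $q = \tldOmega(m(n/k)^{1/(2\alpha)})$ is needed before the total-variation distance of the query-answer histories exceeds a small constant. Applying Yao's principle to the resulting indistinguishability lower bound yields the theorem; the hypothesis $\alpha\leq \log n$ is used only to absorb polylogarithmic factors into the $\tldOmega$ notation.
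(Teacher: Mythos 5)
Your high-level plan---a probabilistically constructed ``median'' instance with cover size $\alpha k$, a random swapping procedure that brings the optimum down to $k$ while spreading modifications nearly uniformly over the oracle tables, and a query-history argument via Yao's principle---is exactly the paper's strategy. However, the quantitative skeleton you propose does not hold up, and the gap is not cosmetic. Your property (c), that each element lies in only $\tldO(m(k/n)^{1/\alpha})$ sets, is inconsistent with property (a), that a cover of size $\alpha k$ exists: if a random set contains $e$ with probability roughly $(k/n)^{1/\alpha}$, then for $\alpha k$ sets to cover all $n$ elements with non-trivial probability one needs $\alpha k\cdot(k/n)^{1/\alpha}\gtrsim \log n$, i.e.\ $n/k\lesssim(\alpha k/\log n)^{\alpha}$, which fails badly throughout the regime $k\leq(n/(16\alpha\log m))^{1/(4\alpha+1)}$. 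The paper's construction is the opposite of sparse: each element belongs to $\approx (1-p_0)m\geq m/2$ sets, with $p_0=(8(\alpha k+2)\log m/n)^{1/(\alpha k)}$, and the number of swaps is $\Theta(np_0^k)$ (so it is $n$, not $k$, times $(k/n)^{1/\alpha}$ up to logs); that count also does not match your $k(n/k)^{1/\alpha}$ or the later $m(n/k)^{1/\alpha}$ (these are mutually inconsistent as written). The ``$k$ pairwise disjoint slots'' picture you sketch does not appear in the paper's Set Cover lower bound at all; it closely resembles the slab structure used for the $\probCoverVerification$ bound in Section~\ref{sec:verf} and, for the multi-block version, for the large-$k$ bound in Section~\ref{sec:ind-reduction}, but the small-$k$ construction uses a single random $k$-tuple $S_1,\ldots,S_k$ with no slab decomposition.

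The step where the proof actually produces the $(n/k)^{1/(2\alpha)}$ exponent is also missing. You assert it appears ``after simplification'' and invoke ``the square root arising from randomization of the planted slots,'' but neither the source of the square root nor the cancellation is derived, and with your quantities the arithmetic does not come out (e.g.\ $(n/k)^{1/\alpha}/(m(n/k)^{1/\alpha})=1/m$, so the entire $(n/k)^{\pm\cdot}$ dependence would have to come from an ``incidence factor'' you never define). In the paper, the exponent arises through a six-term case analysis bounding the per-cell modification probability $P_{\eltq-\setq}(e,S)\leq O\!\left(\frac{p_0^k}{(1-p_0)^2 m}\right)$ (Lemma~\ref{lem:gen_p_cell_uniform}), where the $(1-p_0)^{-2}$ comes from the sizes $\card{S_k\cap(S_1\cup\cdots\cup S_{k-1})}$ and $\card{\cand(e,e')}$ in the denominators; and then Lemma~\ref{lem:gen-k-property}(a) gives $(1-p_0)^2\geq p_0^{k/2}$, so $\frac{p_0^k}{(1-p_0)^2}\leq p_0^{k/2}=\left(\frac{8(\alpha k+2)\log m}{n}\right)^{1/(2\alpha)}$. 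Without that mechanism---or a worked-out replacement---the target bound $\tldOmega(m(n/k)^{1/(2\alpha)})$ is not substantiated, and I would expect a careful attempt to carry out your construction to stall at exactly this point.
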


Instead, in Section \ref{sec:lb-small} we focus on the simple setting of this theorem which applies to approximation protocols for distinguishing between instances with minimum set cover sizes $2$ and $3$, and show a lower bound of $\tldOmega(mn)$ (which is tight up to a polylogarithmic factor) for approximation factor $3/2$. 
This simplification is for the purpose of both clarity and also for the fact that the result for this case is used in Section \ref{sec:ind-reduction} to establish our lower bound for large values of $k$.

\mypar{High level idea.}
Our approach for establishing the lower bound is as follows. First, we construct a \emph{median instance} $\binst$ for \probSetCover, whose minimum set cover size is $3$. We then apply a randomized procedure $\genModifiedInst$, which slightly modifies the median instance into a new instance containing a set cover of size $2$. Applying Yao's principle, the distribution of the input to the deterministic algorithm is either $\binst$ with probability $1/2$, or a modified instance generated thru $\genModifiedInst(\binst)$, which is denoted by $\minstD(\binst)$, again with probability $1/2$. Next, we consider the execution of the deterministic algorithm. We show that unless the algorithm asks at least $\tldOmega(mn)$ queries, the resulting query-answer history generated over $\binst$ would be the same as those generated over instances constituting a constant fraction of $\minstD(\binst)$, reducing the algorithm's success probability to below $2/3$. More specifically, we will establish the following theorem.

\begin{theorem} \label{thm:lowerbound-2}
Any algorithm that can distinguish whether the input instance is $\binst$ or belongs to $\minstD(\binst)$ with probability of success greater than $2/3$, requires $\Omega(mn /\log m)$ queries.
\end{theorem}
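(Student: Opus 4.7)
The plan is to apply Yao's minimax principle to the binary task ``is the input $\binst$ or a sample from $\minstD(\binst)$?''. Fix the input distribution that outputs $\binst$ with probability $1/2$ and $\minst \sim \minstD(\binst)$ with probability $1/2$; any randomized distinguisher with success probability greater than $2/3$ yields, by averaging over its coins, a deterministic algorithm $\alg$ that also succeeds with probability greater than $2/3$ against this distribution. Since $\binst$ has optimal cover size $3$ while every instance in the support of $\minstD(\binst)$ has optimal cover size $2$, each possible output is correct for exactly one of the two sources. Thus it suffices to show that no deterministic $\alg$ making $q = o(mn/\log m)$ oracle queries can distinguish these two distributions with the required probability.

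For the fixed deterministic $\alg$, let $H^\star$ denote the canonical query-answer history produced when $\alg$ is executed on $\binst$; this is a deterministic sequence of at most $q$ queries together with the answers dictated by $\binst$. The core coupling argument is: for a sample $\minst \sim \minstD(\binst)$, if none of the queries in $H^\star$ hits an oracle entry whose value was changed when $\genModifiedInst$ transformed $\binst$ into $\minst$, then $\alg$ executed on $\minst$ reproduces $H^\star$ step by step and therefore outputs the same decision on both inputs. Whenever the histories agree, $\alg$ must err on at least one of $\binst$ or $\minst$, so its overall success probability is at most
\[
\tfrac{1}{2} + \tfrac{1}{2}\cdot \Prob{H^\star \text{ differs on } \minst}.
\]

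To bound $\Prob{H^\star \text{ differs on } \minst}$ by $1/3$, I would use Definition~\ref{def:swap-operation}, by which each $\swap$ applied by $\genModifiedInst$ alters only a constant number of $\EltOf$ and $\SetOf$ entries. Combined with the design of the median instance (Lemma~\ref{lem:median-existance}) and of $\genModifiedInst$, the modification $\binst \to \minst$ is carried out by $O(\log m)$ swaps whose affected positions are distributed almost uniformly among the $\Theta(mn)$ oracle entries of $\binst$. Consequently, for any fixed query $q_i$ occurring in $H^\star$, the probability over $\minst \sim \minstD(\binst)$ that $q_i$ lands on a modified entry is $O(\log m / (mn))$. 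A union bound over the at most $q$ queries of $H^\star$ yields $\Prob{H^\star \text{ differs on } \minst} = O(q \log m / (mn))$, which is $o(1)$ whenever $q = o(mn/\log m)$, contradicting the $2/3$-success assumption and establishing the $\Omega(mn/\log m)$ lower bound.

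The main obstacle is the near-uniformity claim: that the $\swap$ locations produced by $\genModifiedInst$ are spread essentially uniformly across the oracle representation of $\binst$. This requires the probabilistic construction underlying Lemma~\ref{lem:median-existance} to guarantee that, for a randomly chosen target pair of sets $(S, S')$ used by $\genModifiedInst$ to induce a $2$-cover, the $\Theta(\log m)$ elements swapped in and out occupy positions in $\EltOf$ and $\SetOf$ whose distribution is flat enough that no fixed index is disproportionately likely to be modified. Any concentration of modifications on predictable indices would let $\alg$ cheaply probe those positions and detect the modification with $o(mn/\log m)$ queries, so the median-instance construction must be carefully randomized so that the resulting swap locations appear essentially as $O(\log m)$ uniform samples from the oracle's entry space.
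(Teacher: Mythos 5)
Your proposal mirrors the paper's proof exactly: Yao's principle against the uniform mixture of $\binst$ and $\minstD(\binst)$, the coupling observation that identical query-answer histories force identical outputs, and a per-query uniformity bound followed by a union bound over the at most $q$ queries. The near-uniformity claim you flag as the main obstacle is precisely the paper's Lemma~\ref{lem:p_cell_uniform} ($P_{\eltq-\setq}(e,S) \le 4800\log m/(mn)$), which is proved by decomposing the probability over the random choices made by $\genModifiedInst$ (the pair $S_1,S_2$, the matching $\sM$, and the candidate set $S$) and bounding each factor via properties~(\ref{item:mi-uncovered-elem-size})--(\ref{item:mi-degneg}) of the fixed median instance $\binst$; note the randomness lives entirely in $\genModifiedInst$, not in the construction of $\binst$ itself, which only needs to satisfy the deterministic structural properties of Definition~\ref{def:median-instance}.
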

\begin{corollary} \label{col:lowerbound-cover-3/2}
For $1< \alpha < 3/2$, and $k \leq 3$, any randomized algorithm that approximates by a factor of $\alpha$, the size of the optimal cover for the \probSetCover problem with success probability at least $2/3$ requires $\tldOmega(mn)$ queries.
\end{corollary}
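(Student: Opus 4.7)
The plan is to derive the corollary from Theorem \ref{theo:lowerbound-2} by a standard ``gap'' argument, using the fact that the two families of instances have optimal cover sizes that differ by a factor strictly larger than $\alpha$.

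First, I would recall the key property of the construction: the median instance $\binst$ has a minimum set cover of size $3$, while every instance in the support of $\minstD(\binst)$ has a minimum set cover of size $2$. Now suppose an algorithm $\alg$ outputs an $\alpha$-approximate estimate $\hat{k}$ of the optimum cover size $k^*$, meaning $k^* \leq \hat{k} \leq \alpha k^*$. On the median instance, $\hat{k} \in [3, 3\alpha)$, while on any instance from $\minstD(\binst)$, $\hat{k} \in [2, 2\alpha)$. Since $\alpha < 3/2$ gives $2\alpha < 3$, these two intervals are disjoint, so the output of $\alg$ determines the correct class with the same success probability.

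Next, I would appeal to Yao's principle in the reverse direction: given any randomized $\alpha$-approximation algorithm $\alg$ with success probability $\geq 2/3$, define a distinguisher $\alg'$ that runs $\alg$ and outputs ``median'' if $\hat{k} \geq 3$ and ``modified'' otherwise. By the disjointness of the intervals, $\alg'$ correctly distinguishes $\binst$ from $\minstD(\binst)$ with probability at least $2/3$ on any input drawn from either class. Applying Theorem \ref{theo:lowerbound-2}, the query complexity of $\alg'$, and hence of $\alg$, must be $\Omega(mn/\log m) = \tldOmega(mn)$.

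There is no real obstacle here since Theorem \ref{theo:lowerbound-2} already encapsulates the hard work of constructing the two indistinguishable distributions. The only thing to verify is that the gap condition $2\alpha < 3$ is enough, and that the constraint $k \leq 3$ in the corollary statement is consistent with the fact that the optimum of the hard instances is $2$ or $3$. Both of these are immediate.
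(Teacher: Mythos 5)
Your proof is correct and takes the same (implicit) route the paper intends: the corollary follows from Theorem~\ref{theo:lowerbound-2} by the gap argument, since the median instance has optimum $\geq 3$ and every modified instance has optimum exactly $2$, and $\alpha<3/2$ forces the output intervals $[3,3\alpha)$ and $[2,2\alpha)$ to be disjoint. One small terminological note: the step you label ``Yao's principle in the reverse direction'' is really just a black-box reduction from the distinguishing problem to the approximation problem---Yao's principle is already applied inside the proof of Theorem~\ref{theo:lowerbound-2}, and you do not need to invoke it again.
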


For simplicity, we assume that the algorithm has the knowledge of our construction (which may only strengthens our lower bounds); this includes $\binst$ and $\minstD(\binst)$, along with their representation via $\textsc{EltOf}$ and $\textsc{SetOf}$. The objective of the algorithm is simply to distinguish them. Since we are distinguishing a distribution of instances $\minstD(\binst)$ against a single instance $\binst$, we may individually upper bound the probability that each query-answer pair reveals the modified part of the instance, then apply the union bound directly. However, establishing such a bound requires a certain set of properties that we obtain through a careful design of $\binst$ and $\genModifiedInst$. We remark that our approach shows the hardness of {\em distinguishing} instances with with different cover sizes. That is, our lower bound on the query complexity also holds for the problem of approximating the size of the minimum set cover (without explicitly finding one).

\medskip
Lastly, in Section~\ref{sec:ind-reduction} we provide a construction utilizing Theorem~\ref{thm:lowerbound-2} to extend Corollary~\ref{col:lowerbound-cover-3/2}, establish the following theorem on lower bounds for larger minimum set cover sizes.
\begin{theorem}\label{thm:mn-k}
For any sufficiently small approximation factor $\alpha\leq 1.01$ and $k= O(m/\log n)$, any randomized algorithm that computes an $\alpha$-approximation to the \probSetCover problem with success probability at least $0.99$ requires $\tldOmega(mn/k)$ queries.
\end{theorem}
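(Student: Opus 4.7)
The plan is to amplify the base lower bound of Theorem~\ref{thm:lowerbound-2} by embedding $k$ independent copies of the hard base instance into a single larger instance. First I partition the $n$ elements into $k$ disjoint blocks of size $n/k$ and the $m$ sets into $k$ disjoint groups of size $m/k$, with the $i$\th set-group living entirely inside the $i$\th element-block. The constraint $k = O(m/\log n)$ ensures $m/k = \Omega(\log(n/k))$, which is enough for the base construction in Section~\ref{sec:lb-small} to apply at this reduced scale. Independently for each $i\in[k]$, let the $i$\th gadget be the median instance $\binst$ with probability $1/2$ (giving optimal cover size $3$) or a random modified instance from $\minstD(\binst)$ with probability $1/2$ (giving optimal cover size $2$). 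Since the gadgets share no elements or sets, every cover of the combined instance decomposes into per-gadget covers, and $\opt = 2a+3b$, where $a$ and $b$ are the random numbers of modified and median gadgets respectively.

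If the algorithm outputs covers using $c_i$ sets on the $i$\th gadget with $\sum_i c_i \leq \alpha\cdot\opt$, then $c_i\geq 2$ always and $c_i\geq 3$ on median gadgets, so writing $x$ for the number of modified gadgets receiving $c_i\geq 3$ we get $\sum c_i \geq 2a+3b+x$ and hence $x \leq (\alpha-1)(2a+3b) \leq 3(\alpha-1)k \leq 0.03k$ when $\alpha\leq 1.01$. A Chernoff bound on $a\sim\mathrm{Bin}(k,1/2)$ yields $a\geq 0.4k$ with probability $\geq 0.995$ once $k$ exceeds an absolute constant (for smaller $k$ the bound $\tldOmega(mn/k)=\tldOmega(mn)$ follows directly from Corollary~\ref{col:lowerbound-cover-3/2}). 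Conditioning on that event, the algorithm must produce a valid $2$-set cover on at least $0.37k$ of the modified gadgets whenever it succeeds.

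Next I convert this combinatorial requirement into a query bound via Yao's principle and a per-gadget reduction. Because the gadgets use disjoint oracle entries and their types are drawn independently, any query touching gadget $j\neq i$ reveals nothing about the type of gadget $i$; the distribution of the algorithm's behavior on gadget $i$ is therefore determined by the $T_i$ queries it spends on gadget $i$ alone. Crucially, any pair of sets from gadget $i$ whose union covers the $i$\th block certifies the gadget to be modified, because the median instance has no $2$-cover. Hence outputting such a pair constitutes a distinguisher between $\binst$ and $\minstD(\binst)$ at scale $(m/k,n/k)$; applying Theorem~\ref{thm:lowerbound-2} at this scale, the success probability of this distinguisher cannot exceed $2/3$ unless at least $\tau := \tldOmega\bigl(\tfrac{mn}{k^{2}\log(m/k)}\bigr)$ queries are spent on gadget $i$. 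A short calculation, examining the natural distinguisher ``declare modified iff the output is a valid $2$-cover,'' turns this into a bound of the form $\Pr[\text{valid $2$-cover on gadget $i$}\mid\text{modified},\,T_{i}<\tau]\leq 1/3$.

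Finally, I assemble the per-gadget bound. At most $T/\tau$ gadgets satisfy $T_i\geq\tau$, while the number of modified-and-correctly-$2$-covered gadgets among those with $T_i<\tau$ has expectation at most $(1/2)(1/3)(k-T/\tau)\leq k/6$; a concentration argument using the independence of the gadget types (Hoeffding over the $k$ type indicators, or a Doob martingale to handle the adaptive allocation of the $T_i$'s) keeps the count below $0.2k$ with probability $\geq 0.99$. To meet the required $0.37k$ correct $2$-covers, we therefore need $T/\tau\geq 0.17k$, giving $T=\tldOmega(mn/k)$. The main obstacle is the last concentration step: handling the adaptivity of the algorithm, which may shift queries toward ``promising'' gadgets based on answers from other gadgets, requires a careful argument that despite the coupling between the $T_i$'s, the independence of the gadget types is enough to concentrate the per-gadget success indicators.
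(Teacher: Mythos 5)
Your approach is a genuinely different route from the paper's, but it has a real gap at exactly the step you flag as ``the main obstacle,'' and that obstacle is not a loose end---it is the crux of the argument.

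To compare: the paper also builds a compound of $t=\Theta(k)$ disjoint gadgets, but it sets each gadget to be the median instance with the \emph{tiny} probability $c/\binom{m}{2}$ rather than $1/2$, and then runs a careful \emph{charging argument}: if the algorithm gambles on a $2$-set cover for a gadget it has not distinguished from $\binst$, it must err on the ``replaced compound'' where that gadget is $\binst$; because the median instance is $c$ times more likely per slot than all modified instances sharing a fixed $2$-cover combined, the mass of replaced compounds overwhelms the mass of charging compounds, and the error budget $\delta$ then bounds $\E[\cost]$ via Jensen. Your $1/2$--$1/2$ split, by contrast, commits you to a direct per-gadget distinguisher argument plus a concentration bound; the charging route is unavailable to you (the replaced compound would not be heavier). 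Your dimensional analysis is right and the skeleton is sound, but two points need repair.

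First, the per-gadget bound you state, $\Pr[\text{valid $2$-cover on gadget }i\mid\text{modified},\,T_i<\tau]\le 1/3$, is not what Theorem~\ref{thm:lowerbound-2} gives you, and as a \emph{conditional} probability it can in fact be close to $1$: consider an algorithm that queries gadget $i$ until it finds the modification, outputs the $2$-cover and stops early if it succeeds within $\tau/10$ queries, and otherwise continues up to $10\tau$; conditioned on $\{T_i<\tau\}$, the gadget is essentially always modified-and-found. What you actually need (and what your expectation calculation implicitly uses) is the \emph{joint}-event bound $\Pr[\text{valid $2$-cover on gadget }i\ \wedge\ T_i<\tau\mid\text{modified}]\le 1/3$, obtained by truncating queries to gadget $i$ at $\tau$ and applying Theorem~\ref{thm:lowerbound-2} (at scale $(m/k,n/k)$) to the truncated sub-algorithm; this is fixable but should be stated correctly, and it makes the spurious $(k-T/\tau)$ factor in your expectation calculation disappear.

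Second, and more seriously, the concentration step is not a routine Hoeffding or Doob-martingale application. Exposing the gadget types one at a time, a change in gadget $j$'s type can redirect the algorithm's future queries arbitrarily, changing $T_i$ and hence the success indicator for \emph{every} other gadget $i$; the martingale differences can therefore be as large as $\Theta(k)$ and Azuma yields nothing. This is precisely why the paper's Lemma~\ref{lem:undetected} does not try to concentrate the per-gadget success indicators directly. Instead, it introduces an auxiliary i.i.d.\ variable $T_i\sim\mathrm{Unif}[s]$ per gadget (the ``position of the red marble''), proves a concentration statement about the $T_i$ alone (Lemma~\ref{lem:high-threshold-slabs}), and then turns the bound on the number of detected gadgets into a purely \emph{deterministic} pigeonhole argument given the $T_i$'s (Lemma~\ref{lem:sim-output}). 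That decoupling is the missing ingredient in your proposal. Without importing that device (or some equivalent), your final step does not go through, so as written the argument is incomplete.
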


\subsection{The \probSetCover Lower Bound for Small Optimal Value $k$}\label{sec:lb-small}
\subsubsection{Construction of the Median Instance $\binst$}
Let $\sS$ be a collection of $m$ sets such that (independently for each set-element pair $(S, e)$) $S$ contains $e$ with probability $1-p_0$, where $p_0 = \sqrt{9\log m \over n}$ (note that since we assume $\log m \leq n/c$ for large enough $c$, we can assume that $p_0\leq 1/2$). Equivalently, we may consider the incidence matrix of this instance: each entry is either $0$ (indicating $e \notin S$) with probability $p_0$, or $1$ (indicating $e \in S$) otherwise. We write $\sS \sim \binstF(\sE,p_0)$ denoting the collection of sets obtained from this construction. 

\begin{definition}[Median instance]\label{def:median-instance}
An instance of \prob{Set Cover}, $\inst$, is a \emph{median instance} if it satisfies all the following properties.
\begin{compactenum}[(a)]
\item No two sets cover all the elements. (The size of its minimum set cover is at least $3$.)\label{item:mi-sc-size}
\item For any two sets the number of elements not covered by the union of these sets is at most $18\log m$. \label{item:mi-uncovered-elem-size}
\item The intersection of any two sets has size at least $n/8$. \label{item:mi-intersection-size}
\item For any pair of elements $e, e'$, the number of sets $S$ s.t. $e\in S$ but $e'\notin S$ is at least ${m\sqrt{9\log m} \over 4\sqrt{n}}$. \label{item:mi-candidate-size}
\item For any triple of sets $S,S_1$ and $S_2$, $\card{(S_1\cap S_2)\setminus S} \leq 6\sqrt{n\log m}$. \label{item:mi-triple}
\item For each element, the number of sets that do not contain that element is at most $6 m\sqrt{\log m \over n}$. \label{item:mi-degneg}
\end{compactenum}
\end{definition}

\begin{lemma}\label{lem:median-existance}
There exists a median instance $\binst$ satisfying all properties from Definition~\ref{def:median-instance}. In fact, with high probability, an instance drawn from the distribution in which $\Prob{e\in S}=1-p_0$ independently at random, satisfies the median properties.
\end{lemma}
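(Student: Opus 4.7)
The plan is to prove the stronger claim---that a random instance $\binst \sim \binstF(\sE, p_0)$ with $p_0 = \sqrt{9\log m/n}$ satisfies all six properties of \defref{median-instance} with high probability---which immediately yields the existence claim. Each of (a)--(f) is a quantitative statement about a family of substructures: pairs of sets for (a), (b), (c); ordered pairs of elements for (d); triples of sets for (e); and single elements for (f). For any fixed such substructure, the relevant quantity is a sum of independent Bernoulli random variables drawn from the $mn$ independent incidence-matrix entries defining $\binst$, and its expectation is a simple polynomial in $p_0$ and $1-p_0$. My plan for each property is therefore: (i) compute the expectation $\mu$; (ii) apply a multiplicative Chernoff bound to obtain a super-polynomially small per-substructure failure probability; (iii) take a union bound over the $O(m^2)$, $O(m^3)$, $O(n^2)$, or $O(n)$ substructures; and (iv) take a final union bound over the six properties.

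The choice $p_0 = \sqrt{9\log m/n}$ is calibrated so that each expectation sits a constant factor (essentially $2$) from the threshold in the corresponding property, which makes a plain Chernoff bound sufficient. Concretely, for (a) the probability that a fixed pair covers $\sE$ is $(1-p_0^2)^n \le e^{-np_0^2} = m^{-9}$, which trivially survives the $\binom{m}{2}$-term union bound. For (b) the uncovered count is $B(n, p_0^2)$ with mean $9\log m$, and the upper-tail Chernoff $\Pr[X \ge 2\mu] \le (e/4)^\mu$ gives an $m^{-\Omega(1)}$ bound that beats $m^2$. For (c) the pairwise intersection is $B(n,(1-p_0)^2)$ with mean $\ge n/4$ once $p_0 \le 1/2$, giving a lower-tail bound of $e^{-\Omega(n)}$. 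For (d), (e), and (f) the counts are $B(m, p_0(1-p_0))$, $B(n, p_0(1-p_0)^2)$, and $B(m, p_0)$ with means $\Theta(m\sqrt{\log m/n})$, at most $3\sqrt{n\log m}$, and $3m\sqrt{\log m/n}$, respectively; plugging into the appropriate upper- or lower-tail Chernoff inequality yields deviation probabilities that dominate the corresponding union bounds under the standing assumption $m \ge n$.

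The only quantitatively delicate case is (e), which combines the largest union-bound cost ($m^3$ triples) with the smallest mean ($\Theta(\sqrt{n\log m})$): the per-triple Chernoff bound is of order $e^{-\Omega(\sqrt{n\log m})}$, so closing the $m^3$ union bound requires $\sqrt{n\log m} \gg \log m$, equivalently $n = \Omega(\log m)$ with a sufficiently large constant. This regularity condition is consistent with---and implicit in---the regimes of $(k,n,m)$ targeted by \thmref{lowerbound-general} and the surrounding statements, where the hypothesis $k \le (n/\log m)^{1/(4\alpha+1)}$ already forces $n$ to be polynomial in $\log m$. Once (e) closes, the remaining five tail bounds have substantial slack, and the final union bound over the six events yields that a random $\binst$ is a median instance with probability $1 - o(1)$, establishing both the existence statement and the with-high-probability strengthening.
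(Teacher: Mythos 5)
Your approach matches the paper's exactly: the proof of \lemref{median-existance} simply cites the union bound together with Lemmas~\ref{lemma:solution-size}--\ref{lemma:element-deg} in Appendix~\ref{sec:missing-proofs}, each of which computes the mean of the relevant binomial count and applies a Chernoff tail bound followed by a union bound over the family of pairs/triples of sets or pairs of elements, just as you outline. Your identifications of the distributions, means, and thresholds for (a)--(f) are all correct, and your isolation of (e) as the binding case requiring $n = \Omega(\log m)$ is sound---the only small imprecision is that the paper supplies this condition directly as the standing assumption ``$\log m \le n/c$ for large enough $c$'' (made to ensure $p_0 \le 1/2$) when $p_0$ is first introduced in \secref{lb-small}, rather than inheriting it from the hypotheses of \thmref{lowerbound-general}, which governs the generalized construction of Appendix~\ref{sec:general-lb} with a different choice of $p_0$.
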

The proof of the lemma follows from standard applications of concentration bounds.  Specifically, it follows from the union bound and Lemmas~\ref{lem:solution-size}--\ref{lem:element-deg}, appearing in Appendix~\ref{sec:missing-proofs}.

\subsubsection{Distribution $\minstD(\binst)$ of Modified Instances $\minst$ Derived from $\binst$}

Fix a median instance $\binst$. We now show that we may perform $O(\log m)$ $\swap$ operations on $\binst$ so that the size of the minimum set cover in the modified instance becomes $2$.
Moreover, its incidence matrix differs from that of $\binst$ in $O(\log m)$ entries. Consequently, the number of queries to $\EltOf$ and $\SetOf$ that induce different answers from those of $\binst$ is also at most $O(\log m)$.

We define $\minstD(\binst)$ as the distribution of instances $\minst$ generated from a median instance $\binst$ by $\genModifiedInst(\binst)$ given below in Figure~\ref{fig:modified-instance} as follows. Assume that $\binst = (\sE,\sS)$. We select two different sets $S_1, S_2$ from $\sS$ uniformly at random; we aim to turn these two sets into a set cover. To do so, we swap out some of the elements in $S_2$ and bring in the uncovered elements. For each uncovered element $e$, we pick an element $e'\in S_2$ that is also covered by $S_1$. Next, consider the candidate set that we may exchange its $e$ with $e' \in S_2$:

\begin{definition}[Candidate set]\label{def:candidate-set}
For any pair of elements $e,e'$, the \emph{candidate set} of $(e,e')$ are all sets that contain $e$ but not $e'$. The collection of candidate sets of $(e,e')$ is denoted by $\cand(e,e')$. Note that $\cand(e,e') \neq \cand(e',e)$ (in fact, these two collections are disjoint).   
\end{definition}

\begin{figure}[h]
    \begin{center}
        \begin{algorithmEnv}
            \underline{{\genModifiedInst}%
               $\pth{ \binst=\pth{\sE, \sS}}$}: $\Bigl.$ \qquad
            \+\\ %

            $\sM \leftarrow \emptyset$ \\ 
            {\bf pick} two \emph{different} sets $S_1, S_2$ from $\sS$ uniformly at random\\
            \Foreach $e\in \sE\setminus (S_1 \cup S_2)$ \Do\\
            \> {\bf pick} $e'\in (S_1\cap S_2)\setminus \sM$ uniformly at random \\
            \> $\sM \leftarrow \sM \cup \set{e'}$\\
            \> {\bf pick} a random set $S$ in $\cand(e,e')$\\
            \> $\swap(e,e')$ between $S,S_2$
        \end{algorithmEnv}%
    \end{center}
    \vspace{-0.15in}
    \caption{The procedure of constructing a modified instance of $\binst$.}
    \label{fig:modified-instance}
\end{figure}

We choose a random set $S$ from $\cand(e,e')$, and  swap $e \in S$ with $e' \in S_2$ so that $S_2$ now contains $e$. We repeatedly apply this process for all initially uncovered $e$ so that eventually $S_1$ and $S_2$ form a set cover. We show that the proposed algorithm, \genModifiedInst, can indeed be executed without getting stuck.

\begin{lemma} \label{lem:2/3-well-defined}
The procedure \emph{$\genModifiedInst$} is well-defined under the precondition that the input instance $\binst$ is a median instance.
\end{lemma}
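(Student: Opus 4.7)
The plan is to verify that each probabilistic choice in \genModifiedInst is made over a non-empty set and that every $\swap$ meets its precondition, by exploiting the slack built into the median-instance properties. Since the foreach loop iterates over $\sE \setminus (S_1 \cup S_2)$ for the initially sampled pair $S_1, S_2$, property~(\ref{item:mi-uncovered-elem-size}) immediately bounds both the number of iterations and the total number of swaps by $18 \log m$. The initial random choice of two distinct sets is trivially possible since $m\geq n\geq 2$.

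I would next show that the set $(S_1\cap S_2) \setminus \sM$ from which $e'$ is drawn is non-empty at every iteration. Since $S_1$ is never modified, and each previous iteration removes exactly one element $e'_i \in S_1 \cap S_2$ from $S_2$ while inserting some $e_i \notin S_1$, the current intersection $S_1\cap S_2$ equals the initial intersection minus $\sM = \set{e'_1,\ldots,e'_t}$. Hence $(S_1\cap S_2)\setminus \sM$ in the current state contains exactly the elements of the initial intersection not in $\sM$, and its size is at least $n/8 - 18\log m > 0$ by property~(\ref{item:mi-intersection-size}), which is positive in the regime $m\geq n$ of interest.

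The more delicate step is to verify that the candidate set $\cand(e, e')$ is non-empty in the \emph{current} state, not merely in the original median instance. Property~(\ref{item:mi-candidate-size}) provides the initial lower bound $|\cand(e,e')|\geq \tfrac{m\sqrt{9\log m}}{4\sqrt{n}}$, and the only sets whose incidences with $e$ or $e'$ may have changed are $S_2$ together with the at most $t\leq 18\log m$ sets modified by previous swaps. Subtracting this count from the initial bound and invoking $m\geq n$ leaves a positive (in fact $\Omega(\sqrt{n\log m})$) surplus, so a valid $S$ is always available. Finally, the resulting $\swap(e,e')$ between $S$ and $S_2$ is legitimate: $e\in S$ and $e'\notin S$ follow from $S\in\cand(e,e')$; $e'\in S_2$ is guaranteed by the choice of $e'$; and $e\notin S_2$ holds since $e \in \sE\setminus(S_1\cup S_2)$ initially and the only elements ever inserted into $S_2$ during the procedure are the distinct previously-processed $e_1,\ldots,e_{t-1}$. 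The main obstacle is precisely this bookkeeping across iterations—tracking which sets have been modified and confirming that the $O(\log m)$ modifications performed are dwarfed by the $\Omega(n)$- and $\Omega(\sqrt{n\log m})$-sized slack in properties~(\ref{item:mi-intersection-size}) and~(\ref{item:mi-candidate-size}), so that the guarantees inherited from the median instance persist throughout the procedure.
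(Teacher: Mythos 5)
Your proof is correct and rests on the same three median-instance properties (\ref{item:mi-uncovered-elem-size}), (\ref{item:mi-intersection-size}), and (\ref{item:mi-candidate-size}) as the paper's, just with considerably more explicit bookkeeping across iterations and an explicit check of the $\swap$ precondition. One small simplification worth noting: since each $\swap$ touches only the pair $(e_j,e'_j)$, and these pairs are mutually disjoint from the current $(e_i,e'_i)$ as well as from each other, the candidate set $\cand(e_i,e'_i)$ is in fact \emph{invariant} under all prior swaps, so the $O(\log m)$ subtraction from property~(\ref{item:mi-candidate-size})'s bound is safe but unnecessary.
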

\begin{proof}
To carry out the algorithm, we must ensure that the number of the initially uncovered elements is at most that of the elements covered by both $S_1$ and $S_2$. This follows from the properties of median instances (Definition~\ref{def:median-instance}): $\card{\sE\setminus(S_1\cup S_2)} \leq 18\log m$ by property (\ref{item:mi-uncovered-elem-size}), and that the size of the intersection of $S_1$ and $S_2$ is greater than $n/8$ by property (\ref{item:mi-intersection-size}). That is, in our construction there are sufficiently many possible choices for $e'$ to be matched and swapped with each uncovered element $e$. Moreover, by property (\ref{item:mi-candidate-size}) there are plenty of candidate sets $S$ for performing $\swap(e,e')$ with $S_2$.
\end{proof}

\subsubsection{Bounding the Probability of Modification}\label{sec:almost-uniform}

Let $\minstD(\binst)$ denote the distribution of instances generated by $\genModifiedInst(\binst)$. If an algorithm were to distinguish between $\binst$ or $\minst \sim \minstD(\binst)$, it must find some cell in the $\EltOf$ or $\SetOf$ tables that would have been modified by $\genModifiedInst$, to confirm that $\genModifiedInst$ is indeed executed; otherwise it would make wrong decisions half of the time. We will show an additional property of this distribution: none of the entries of $\EltOf$ and $\SetOf$ are significantly more likely to be modified during the execution of $\genModifiedInst$. Consequently, no algorithm may strategically detect the difference between $\binst$ or $\minst$ with the desired probability, unless the number of queries is asymptotically the reciprocal of the maximum probability of modification among any cells.

Define $P_{\eltq-\setq}:\sE\times\sS \rightarrow [0,1]$ as the probability that an element is swapped by a set. More precisely, for an element $e\in \sE$ and a set $S\in \sS$, if $e\notin S$ in the median instance $\binst$, then $P_{\eltq-\setq}(e,S) = 0$; otherwise, it is equal to the probability that $S$ swaps $e$. We note that these probabilities are taken over $I'\sim\minstD(\binst)$ where $\binst$ is a fixed median instance. That is, as per Figure~\ref{fig:modified-instance}, they correspond to the random choices of $S_1, S_2$, the random matching $\sM$ between $\sE\setminus(S_1\cup S_2)$ and $S_1 \cap S_2$, and their random choices of choosing each candidate set $S$.  We bound the values of $P_{\eltq-\setq}$ via the following lemma.

\begin{lemma}\label{lem:p_cell_uniform}
For any $e \in \sE$ and $S \in \sS$, $P_{\eltq-\setq}(e,S)\leq {4800 \log m \over mn}$ where the probability is taken over $\minst\sim\minstD(\binst)$.
\end{lemma}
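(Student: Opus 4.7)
Fix an element $e \in \sE$ and a set $S \in \sS$ with $e \in S$ in $\binst$ (otherwise $P_{\eltq-\setq}(e,S)=0$ and there is nothing to prove). Inspecting \genModifiedInst, the element $e$ can be removed from $S$ in exactly two mutually exclusive scenarios: (A) $S$ is picked as the candidate set in the iteration handling $e$ as the uncovered element, which requires $e \in \sE\setminus (S_1 \cup S_2)$, the matched partner $e'$ satisfying $e'\notin S$, and $S$ being drawn from $\cand(e,e')$; (B) $S = S_2$ and $e$ is chosen as the matched element $e'$ (so $e\in S_1\cap S_2$) in some iteration. I will bound each case separately and then sum them.

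For scenario (A), I will chain together three conditional bounds obtained from the median-instance properties. First, property~(\ref{item:mi-degneg}) bounds the number of sets missing $e$ by $6m\sqrt{\log m / n}$, so by a direct calculation over the uniform choice of the ordered pair $(S_1,S_2)$, $\Pr[e\notin S_1\cup S_2] \le (6\sqrt{\log m/n})^2 = 36\log m / n$. Conditioned on this, the matched partner $e'$ for $e$ is drawn uniformly from $(S_1\cap S_2)\setminus \sM$, whose size is at least $n/8 - 18\log m = \Omega(n)$ by properties~(\ref{item:mi-intersection-size}) and~(\ref{item:mi-uncovered-elem-size}); by property~(\ref{item:mi-triple}) at most $6\sqrt{n\log m}$ of these candidates lie outside $S$, so $\Pr[e'\notin S\mid e\notin S_1\cup S_2] = O(\sqrt{\log m / n})$. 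Finally, conditioned on $e'\notin S$ (so that $S \in \cand(e,e')$), the set $S$ is chosen uniformly from $\cand(e,e')$, whose size is $\Omega(m\sqrt{\log m/n})$ by property~(\ref{item:mi-candidate-size}), giving a factor of $O(\sqrt{n/\log m}/m)$. Multiplying these three bounds, the $\sqrt{\log m}$ and $\sqrt{n}$ factors cancel and the product is $O(\log m / (mn))$.

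For scenario (B), I bound $\Pr[S_2 = S] = 1/m$ directly by the uniform choice over distinct ordered pairs. Conditioned on $S_2 = S$ (hence $e \in S_2$), I estimate $\Pr[e \in \sM]$ by a union bound over the at most $|U|\le 18\log m$ iterations: in each iteration the partner is drawn uniformly from $(S_1\cap S_2)\setminus \sM$, which has size $\Omega(n)$ throughout, so the probability $e$ is picked in any given iteration is $O(1/n)$, and hence $\Pr[e\in \sM \mid S_2=S] = O(\log m / n)$. This yields $\Pr[\text{(B)}] = O(\log m / (mn))$.

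Summing the two cases and tracking the constants (the dominant contribution is from (A), with (B) a lower-order additive term) yields $P_{\eltq-\setq}(e,S) \le 4800\log m / (mn)$, as claimed. The main subtlety lies in case~(A), where the three events are not independent: the choice of $e'$ depends on the state of $\sM$ and on the pair $(S_1,S_2)$, and $\cand(e,e')$ is a random collection. I will handle this by taking the worst-case bounds $|(S_1\cap S_2)\setminus \sM|\ge n/8 - 18\log m$ and $|\cand(e,e')|\ge m\sqrt{9\log m}/(4\sqrt{n})$ uniformly over all attainable $\sM$ and all choices of $e'$, so that the conditional probabilities telescope cleanly into a single product bound.
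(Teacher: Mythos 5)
Your proposal is correct and follows essentially the same approach as the paper's proof: you use the identical decomposition into two disjoint scenarios ($S$ is the candidate set that loses $e$, versus $S = S_2$ and $e$ is the matched partner swapped out), and you invoke the same six median-instance properties to bound each factor. The only difference is stylistic: where the paper argues that by symmetry each pair $ee'$ of matched elements is equiprobable (giving the factor $1/|S_1 \cap S_2|$), you instead use a union bound over the at most $18\log m$ iterations together with a uniform worst-case lower bound on $|(S_1\cap S_2)\setminus \sM|$; both yield a comparable constant and the same $O(\log m/(mn))$ order.
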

\begin{proof}
Let $S_1, S_2$ denote the first two sets picked (uniformly at random) from $\sS$ to construct a modified instance of $\binst$. For each element $e$ and a set $S$ such that $e\in S$ in the basic instance $\binst$,
\begin{align*}
P_{\eltq-\setq}(e,S) &=  \Prob{S = S_2} 
		\cdot \Prob{e\in S_1\cap S_2}\\ 
		&\cdot 
		\Prob{e \text{ matches to } \sE\setminus(S_1\cup S_2) \sep e\in S_1\cap S_2} \\
&+ \Prob{S\notin\set{S_1,S_2}}\\ 
		&\cdot \Prob{e\in S\setminus (S_1\cup S_2) \sep e\in S}\\	
		&\cdot \Prob{S \text{ swaps } e \text{ with } S_2 \sep e\in S\setminus (S_1\cup S_2)}.
\end{align*} 
where all probabilities are taken over $\minst \sim \minstD(\binst)$.
Next we bound each of the above six terms. Since we choose the sets $S_1, S_2$ randomly, $\Prob{S=S_2} = 1/m$. We bound the second term by $1$.
For the third term, since we pick a matching uniformly at random among all possible (maximum) matchings between $\sE\setminus (S_1\cup S_2)$ and $S_1\cap S_2$, by symmetry, the probability that a certain element $e\in S_1\cap S_2$ is in the matching is (by properties (\ref{item:mi-uncovered-elem-size}) and (\ref{item:mi-intersection-size}) of median instances),
\begin{align*}
{\card{\sE\setminus(S_1\cup S_2)} \over \card{S_1\cap S_2}} \leq {18\log m \over n/8} = {144\log m \over n}.
\end{align*}  
We bound the fourth term by $1$. To compute the fifth term, let $d_e$ denote the number of sets in $\sS$ that do not contain $e$. By property (\ref{item:mi-degneg}) of median instances, the probability that $e\in S$ is in $S\setminus (S_1\cup S_2)$ given that $S\notin \{S_1,S_2\}$ is at most,
\begin{align*}
{d_e (d_e-1) \over (m-1)(m-2)} \leq {36m^2\cdot {\log m \over n} \over m^2/2} = {72\log m \over n}.
\end{align*} 
Finally for the last term, note that by symmetry, each pair of matched elements $ee'$ is picked by $\genModifiedInst$ equiprobably. Thus, for any $e\in S\setminus (S_1 \cup S_2)$, the probability that each element $e'\in S_1 \cap S_2$ is matched to $e$ is ${1 \over \card{S_1\cap S_2}}$. By properties (\ref{item:mi-intersection-size})--(\ref{item:mi-triple}) of median instances, the last term is at most
\begin{align*}
&\sum_{e'\in (S_1 \cap S_2)\setminus S} \Prob{ee'\in \sM}\cdot
{1 \over \card{\cand(e,e')}}\\ 
&= \card{(S_1\cap S_2)\setminus S}\cdot {1 \over \card{S_1\cap S_2}}\cdot {1 \over \cand(e,e')} \\
&\leq 6\sqrt{n\log m} \cdot {1\over n/8} \cdot {1 \over {m\sqrt{9\log m} \over 4\sqrt{n}}}
={64 \over m}.
\end{align*}
Therefore,
\begin{align*}
P_{\eltq-\setq}(e,S) &\leq {1 \over m} \cdot 1 \cdot {144\log m \over n} + 1 \cdot {72\log m \over n} \cdot {64 \over m} \leq {4800 \log m \over mn}.
\end{align*} 
\end{proof}

\subsubsection{Proof of Theorem \ref{thm:lowerbound-2}}\label{sec:sc-lowerbound-proof}

Now we consider a median instance $\binst$, and its corresponding family of modified sets $\minstD(\binst)$. To prove the promised lower bound for randomized protocols distinguishing $\binst$ and $\minst\sim\minstD(\binst)$, we apply Yao's principle and instead show that no deterministic algorithm $\alg$ may determine whether the input is $\binst$ or $\minst \sim \minstD(\binst)$ with success probability at least $2/3$ using $r=o({mn\over\log m})$ queries. 
Recall that if $\alg$'s query-answer history $\langle (q_1, a_1),\ldots,(q_r, a_r) \rangle$ when executed on $\minst$ is the same as that of $\binst$, then $\alg$ must unavoidably return a wrong decision for the probability mass corresponding to $\minst$. 
We bound the probability of this event as follows.

\begin{lemma}\label{lem:elemnt-query}
Let $Q$ be the set of queries made by $\alg$ on $\binst$. Let $\minst\sim\minstD(\binst)$ where $\binst$ is a given median instance. Then the probability that $\alg$ returns different outputs on $\binst$ and $\minst$ is at most ${4800 \log m \over mn} \card{Q}$. 
\end{lemma}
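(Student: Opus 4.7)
The plan is to argue that $\alg$ can only produce different outputs on $\binst$ and $\minst$ if some query receives a different oracle answer, and then apply a union bound over the queries in $Q$ using the per-cell modification probability already established in \lemref{p_cell_uniform}.

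Since $\alg$ is deterministic, its $i$-th query is a function only of the answers to its first $i-1$ queries. Let $\langle q_1, \ldots, q_r \rangle$ be the queries $\alg$ poses on $\binst$, so that $Q = \{q_1, \ldots, q_r\}$. A straightforward induction on $i$ shows that if each of $q_1, \ldots, q_i$ returns the same answer on $\minst$ as on $\binst$, then $\alg$'s $(i+1)$-st query on $\minst$ is again $q_{i+1}$ with the same answer. Consequently, if $\alg$ returns different outputs on the two instances, at least one query in $Q$ must receive a different answer on $\minst$; it therefore suffices to upper bound the probability of this latter event over $\minst \sim \minstD(\binst)$.

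The next step is to identify each query with a single oracle cell. A query $\EltOf(S, j)$ whose answer on $\binst$ is an element $e$ returns a different answer on $\minst$ exactly when the $j$-th slot of $S$'s list has been altered by $\genModifiedInst$; by \defref{swap-operation}, this happens if and only if the pair $(e, S)$ was involved in some swap, whose probability is $P_{\eltq\text{-}\setq}(e, S)$. The case of a $\SetOf(e, j)$ query with answer $S$ is symmetric, and involves the same event that $(e, S)$ is swapped. In either case, \lemref{p_cell_uniform} bounds the probability by $4800 \log m / (mn)$, and summing (union-bounding) over the at most $|Q|$ queries yields the claimed bound.

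The one subtle point worth emphasizing is the coupling implicit in the first step: the union bound is taken over the deterministic query set produced on $\binst$, not over the (possibly different) query set $\alg$ would generate on $\minst$. This is legitimate precisely because the two executions remain identical up to the first query whose answer disagrees, so any discrepancy must be witnessed by a query already contained in $Q$. No further calculation is required, since all of the nontrivial probabilistic content is packaged in \lemref{p_cell_uniform}.
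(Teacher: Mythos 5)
Your proposal is correct and follows essentially the same route as the paper's proof: use determinism to reduce ``$\alg$ outputs differently on $\binst$ and $\minst$'' to ``some query in $Q$ receives a different answer,'' then union-bound over $|Q|$ queries using Lemma~\ref{lem:p_cell_uniform}. Your explicit discussion of the coupling and of how a query maps to the cell $(e,S)$ is a slightly more spelled-out version of what the paper states tersely, but the argument is the same.
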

\begin{proof}
Let $\alg(\inst)$ denote the algorithm's output for input instance $I$ (whether the given instance is $\binst$ or drawn from $\minstD(\binst)$). For each query $q$, let $\ans_{\inst}(q)$ denote the answer of $\inst$ to query $q$. Observe that since $\alg$ is deterministic, if all of the oracle's answers to its previous queries are all the same, then it must make the same next query. Combining this fact with the union bound, we may lower  bound the probability that $\alg$ returns the same outputs on $\binst$ and $\minst\sim\minstD(\binst)$ as follows:
\begin{align*}
\Prob{\alg(\binst) \neq \alg(\minst)} \leq  \sum_{t=1}^{\card{Q}} \Prob{\ans_{\binst}(q_t) \neq \ans_{\minst}(q_t)}.
\end{align*}
For each $q\in Q$, let $S(q)$ and $e(q)$ denote respectively the set and element queried by $q$. Applying  Lemma~\ref{lem:p_cell_uniform}, we obtain
\begin{align*}
\Prob{\alg(\binst) \neq \alg(\minst)} \leq  \sum_{t=1}^{\card{Q}} \Prob{\ans_{\binst}(q_t) \neq \ans_{\minst}(q_t)} \leq \sum_{t=1}^{\card{Q}} P_{\eltq-\setq}(e(q_t), S(q_t)) \leq {4800 \log m \over mn} \card{Q}.
\end{align*}
\end{proof}

\noindent
{\em Proof of Theorem~\ref{thm:lowerbound-2}.}
If $\alg$ does not output correctly on $\binst$, the probability of success of $\alg$ is less than $1/2$; thus, we can assume that $\alg$ returns the correct answer on $\binst$. This implies that $\alg$ returns an incorrect solution on the fraction of $\minst \sim \minst(\binst)$ for which $\alg(\binst) = \alg(\minst)$. Now recall that the distribution in which we apply Yao's principle consists of $\binst$ with probability $1/2$, and drawn uniformly at random from $\minstD(\binst)$ also with probability $1/2$. Then over this distribution, by Lemma~\ref{lem:elemnt-query},
\begin{align*}
\Prob{\alg \text{ succeeds}} \leq 1-{1\over 2}\pr_{\minst\sim\minstD(\binst)}[\alg(\binst) = \alg(\minst)] &\leq 1- {1 \over 2} \left(1- {4800 \log m \over mn} \card{Q} \right)\\ 
	&= {1\over 2} + {2400 \log m \over mn} \card{Q}.
\end{align*}
Thus, if the number of queries made by $\alg$ is less than ${mn \over 14400 \log m}$, then the probability that $\alg$ returns the correct answer over the input distribution is less than $2/3$ and the proof is complete.

\subsection{The \probSetCover Lower Bound for Large Optimal Value $k$.}\label{sec:ind-reduction}
Our construction of the median instance $\binst$ and its associated distribution $\minstD(\binst)$ of modified instances also leads to the lower bound of $\tldOmega(\frac{mn}{k})$ for the problem of computing an approximate solution to \probSetCover. This lower bound matches the performance of our algorithm for large optimal value $k$ and shows that it is tight for some range of value $k$, albeit it only applies to sufficiently small approximation factor $\alpha \leq 1.01$.

\mypar{Proof overview.}
We construct a distribution over \emph{compounds}: a compound is a \probSetCover instance that consists of $t=\Theta(k)$ smaller instances $\inst_1, \ldots, \inst_t$, where each of these $t$ instances is either the median instance $\binst$ or a random modified instance drawn from $\minstD(\binst)$. By our construction, a large majority of our distribution is composed of compounds that contains at least $0.2t$ modified instances $\inst_i$ such that, any deterministic algorithm $\alg$ must fail to distinguish $\inst_i$ from $\binst$ when it is only allowed to make a small number of queries. A deterministic $\alg$ can safely cover these modified instances with three sets, incurring a \cost~(sub-optimality) of $0.2t$. Still, $\alg$ may choose to cover such an $\inst_i$ with two sets to reduce its \cost, but it then must err on a different compound where $\inst_i$ is replaced with $\binst$. We track down the trade-off between the amount of \cost~that $\alg$ saves on these compounds by covering these $\inst_i$'s with two sets, and the amount of error on other compounds its scheme incurs. $\alg$ is allowed a small probability $\delta$ to make errors, which we then use to upper-bound the expected $\cost$ that $\alg$ may save, and conclude that $\alg$ still incurs an expected $\cost$ of $0.1t$ overall. We apply Yao's principle (for algorithms with errors) to obtain that randomized algorithms also incur an expected $\cost$ of $0.05t$, on compounds with optimal solution size $k\in[2t,3t]$, yielding the impossibility result for computing solutions with approximation factor $\alpha = \frac{k+0.1t}{k} > 1.01$ when given insufficient queries.

\subsubsection{Overall Lower Bound Argument}

\mypar{Compounds.} Consider the median instance $\binst$ and its associated distribution $\minstD(\binst)$ of modified instances for \probSetCover with $n$ elements and $m$ sets, and let $t = \Theta(k)$ be a positive integer parameter. We define a \emph{compound} $\CI = \CI(\inst_1, \inst_2, \ldots, \inst_t)$ as a set structure instance consisting of $t$ median or modified instances $\inst_1, \inst_2, \ldots, \inst_t$, forming a set structure $(\sE^t,\sS^t)$ of $n'\triangleq nt$ elements and $m'\triangleq mt$ sets, in such a way that each instance $\inst_i$ occupies separate elements and sets. Since the optimal solution to each instance $\inst_i$ is $3$ if $\inst_i = \binst$, and $2$ if $\inst_i$ is any modified instance, the optimal solution for the compound is $2t$ plus the number of occurrences of the median instance; this optimal objective value is always $\Theta(k)$.

\mypar{Random distribution over compounds.} Employing Yao's principle, we construct a distribution $\DCI$ of compounds $\CI(\inst_1, \inst_2, \ldots, \inst_t)$: it will be applied against any deterministic algorithm $\alg$ for computing an approximate minimum set cover, which is allowed to err on at most a $\delta$-fraction of the compounds from the distribution (for some small constant $\delta > 0$). For each $i\in[t]$, we pick $\inst_i = \binst$ with probability $c/{m \choose 2}$ where $c > 2$ is a sufficiently large constant. 
Otherwise, simply draw a random modified instance $\inst_i \sim \minstD(\binst)$. We aim to show that, in expectation over $\DCI$, $\alg$ must output a solution that of size $\Theta(t)$ more than the optimal set cover size of the given instance $\CI \sim \DCI$.

\mypar{$\alg$ frequently leaves many modified instances undetected.}
Consider an instance $\CI$ containing at least $0.95t$ modified instances. These instances constitute at least a $0.99$-fraction of $\DCI$: the expected number of occurrences of the median instance in each compound is only $c/{m \choose 2} \cdot t = O(t/m^2)$, so by Markov's inequality, the probablity that there are more than $0.05t$ median instances is at most $O(1/m^2) < 0.01$ for large $m$.
We make use of the following useful lemma, whose proof is deferred to Section~\ref{sec:bad-events}. In what follow, we say that the algorithm ``distinguishes'' or ``detects the difference'' between $\inst_i$ and $\binst$ if it makes a query that induces different answers, and thus may deduce that one of $\inst_i$ or $\binst$ cannot be the input instance. In particular, if $\inst_i = \binst$ then detecting the difference between them would be impossible.

\begin{lemma}\label{lem:undetected}
Fix $M \subseteq [t]$ and consider the distribution over compounds $\CI(\inst_1, \ldots, \inst_t)$ with $\inst_i \sim \minstD(\binst)$ for $i \in M$ and $\inst_i = \binst$ for $i \notin M$. If $\alg$ makes at most $o(\frac{mnt}{\log m})$ queries to $\CI$, then it may detect the differences between $\binst$ and at least $0.75t$ of the modified instances $\{\inst_i\}_{i\in M}$, with probability at most $0.01$.
\end{lemma}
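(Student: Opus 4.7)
The plan is to bound $\E\bigl[\sum_{i \in M} X_i\bigr]$, where $X_i$ is the indicator that $\alg$'s execution on $\CI$ includes at least one query in region $i$ whose answer under $\inst_i$ differs from that under $\binst$, and then conclude via Markov's inequality. By Yao's principle I treat $\alg$ as deterministic throughout. For each $i \in M$, let $\CI^{(-i)}$ denote the compound obtained by forcing $\inst_i = \binst$ and keeping the other instances as in $\CI$. The coupling between the executions on $\CI$ and $\CI^{(-i)}$ agrees until the first query in region $i$ whose answer under $\inst_i$ disagrees with that under $\binst$. Set $P := 4800 \log m/(mn)$, the per-cell bound from \lemref{p_cell_uniform}, and $L := \lfloor 1/(2P)\rfloor$; let $X_i^{\mathrm{reg}}$ be the indicator that such a disagreement occurs within the first $L$ queries $\alg$ makes in region $i$ on $\CI$. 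Extending the argument of \lemref{elemnt-query}, conditional on $\{\inst_j\}_{j \ne i}$ the first $L$ queries of $\alg$ in region $i$ on $\CI^{(-i)}$ form a deterministic set $R_i$ of $|R_i| \le L$ cells, each having probability at most $P$ of being modified in $\inst_i$ relative to $\binst$; the union bound then gives $\Prob{X_i^{\mathrm{reg}} = 1 \sep \{\inst_j\}_{j \ne i}} \le |R_i| \cdot P$.

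Summing these bounds over $i \in M$ is the crux of the argument, since each $R_i$ is defined through a different compound $\CI^{(-i)}$. Let $R_i'$ denote the analogous set of first $L$ queries in region $i$ during $\alg$'s execution on the actual $\CI$. The coupling gives the deviation bound $\big| |R_i| - |R_i'| \big| \le L \cdot X_i^{\mathrm{reg}}$, while $\sum_{i \in M} |R_i'| \le Q$ pointwise since these are queries made within $\alg$'s single execution on $\CI$. Combining this with the per-instance bound above and using $LP \le 1/2$ solves the resulting fixed-point inequality, yielding $\sum_{i \in M} \E[|R_i|] \le 2Q$, and hence $\sum_{i \in M} \Prob{X_i^{\mathrm{reg}} = 1} \le 2PQ$. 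Any detection by $\alg$ not captured by $X_i^{\mathrm{reg}}$ must occur after more than $L$ queries in region $i$, and the number of such ``heavy'' regions is pointwise at most $Q/L = O(PQ)$. Therefore $\E\bigl[\sum_{i \in M} X_i\bigr] = O(PQ)$.

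Since $PQ = O\bigl(\log m/(mn)\bigr)\cdot o\bigl(mnt/\log m\bigr) = o(t)$, Markov's inequality yields $\Prob{\sum_{i \in M} X_i \ge 0.75t} \le O(PQ)/(0.75t) = o(1)$, which is at most $0.01$ once the hidden $o(\cdot)$ constant in the query budget is chosen sufficiently small. The main obstacle is $\alg$'s adaptivity across the $t$ subinstances: once a detection happens in one region, later queries everywhere can shift in response, so a naive per-instance argument loses an extra factor of $t$. The decomposition into ``regular'' detections (controlled by the fixed-point argument at truncation level $L$) and ``heavy'' regions (controlled by the total query budget $Q$) is what resolves this trade-off cleanly, and is the only substantive step beyond the per-cell bound of \lemref{p_cell_uniform} and the coupling argument of \lemref{elemnt-query}.
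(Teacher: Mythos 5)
Your proof is correct but takes a genuinely different route from the paper's. The paper reduces the problem to a ``marbles and urns'' scenario: it identifies the detection threshold $T_i$ for each modified instance with the position of a hidden red marble among $s = mn/(4800\log m)$ draws without replacement, then shows (via a Chernoff bound on $|\sT_{\mathrm{high}}| = |\{i : T_i \ge s/b\}|$, Lemma~\ref{lem:high-threshold-slabs}) that a budget of $o(st)$ total queries cannot reach the threshold in more than $(1 - 1/b)t$ urns simultaneously, with probability $\ge 0.99$. Your argument instead bounds $\E[\sum_i X_i]$ directly by separating regions into those that receive at most $L = \lfloor 1/(2P)\rfloor$ queries (handled by a per-cell union bound on the sets $R_i$ defined through the counterfactual executions on $\CI^{(-i)}$) and ``heavy'' regions (at most $Q/L$ of them pointwise), resolving the self-referential dependence of $|R_i|$ on $X_i^{\mathrm{reg}}$ via the fixed-point inequality $(1-LP)\E[|R_i|] \le \E[|R_i'|]$, and then closes with Markov. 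Both arguments hinge on the same per-cell bound from Lemma~\ref{lem:p_cell_uniform} and both are sound. The paper's route gives a high-probability statement directly and has slightly sharper explicit constants; your route makes the adaptivity issue fully explicit through the coupling and is more mechanical, but it loses roughly a factor of four in the constants, so the final appeal to Markov only yields the $0.01$ bound once the implicit constant in the $o(\cdot)$ query budget is taken small enough --- a point you correctly flag. One small caveat worth spelling out if you write this up formally: the step $\Prob{X_i^{\mathrm{reg}}=1 \mid \{\inst_j\}_{j\neq i}} \le |R_i| P$ needs the observation, which your coupling does justify but you state only implicitly, that $X_i^{\mathrm{reg}}=1$ if and only if $\inst_i$ modifies some cell of $R_i$ (not merely that the ``if'' direction holds), since it is this equivalence that makes the union bound tight enough to feed into the fixed-point inequality.
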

We apply this lemma for any $|M| \geq 0.95t$ (although the statement holds for any $M$, even vacuously for $|M| < 0.75t$).
Thus, for $0.99\cdot 0.99 > 0.98$-fraction of $\DCI$, $\alg$ fails to identify, for at least $0.95t-0.75t=0.2t$ modified instances $\inst_i$ in $\CI$, whether it is a median instance or a modified instance. Observe that the query-answer history of $\alg$ on such $\CI$ would not change if we were to replace any combination of these $0.2t$ modified instances by copies of $\binst$. Consequently, if the algorithm were to correctly cover $\CI$ by using two sets for some of these $\inst_i$, it must unavoidably err (return a non-cover) on the compound where these $\inst_i$'s are replaced by copies of the median instance.

\mypar{Charging argument.}
We call a compound $\CI$ \emph{tough} if $\alg$ does not err on $\CI$, and $\alg$ fails to detect at least $0.2t$ modified instances; denote by $\DCI^\tough$ the conditional distribution of $\DCI$ restricted to tough instances. For tough $\CI$, let $\cost(\CI)$ denote the number of modified instances $\inst_i$ that the algorithm decides to cover with three sets. That is, for each tough compound $\CI$, $\cost(\CI)$ measures how far the solution returned by $\alg$ is, from the optimal set cover size. Then, there are at least $0.2t - \cost(\CI)$ modified instances $\inst_i$ that $\alg$ chooses to cover with only two sets despite not being able to verify whether $\inst_i = \binst$ or not. Let $R_\CI$ denote the set of the indices of these modified instances, so $|R_\CI| = 0.2t - \cost(\CI)$. By doing so, $\alg$ then errs on the \emph{replaced compound} $r(\CI, R_\CI)$, denoting the compound similar to $\CI$, except that each modified instance $\inst_i$ for $i \in R_\CI$ is replaced by $\binst$. In this event, we say that the tough compound $\CI$ \emph{charges} the replaced compound $r(\CI,R_\CI)$ via $R_\CI$. Recall that the total error of $\alg$ is $\delta$: this quantity upper-bounds the total probability masses of charged instances, which we will then manipulate to obtain a lower bound on $\E_{\CI\sim\DCI}[\cost(\CI)]$.

\mypar{Instances must share optimal solutions for $R$ to charge the same replaced instance.}
 Observe that many tough instances may charge to the same replaced instance: we must handle these duplicities. First, consider two tough instances $\CI^1\neq\CI^2$ charing the same $\CI_\replace = r(\CI^1,R) = r(\CI^2,R)$ via the same $R=R_{\CI^1} = R_{\CI^2}$. As $\CI^1\neq\CI^2$ but $r(\CI^1,R) = r(\CI^2,R)$, these tough instances differ on some modified instances with indices in $R$. Nonetheless, the query-answer histories of $\alg$ operating on $\CI^1$ and $\CI^2$ must be the same as their instances in $R$ are both indistinguishable from $\binst$ by the deterministic $\alg$. Since $\alg$ does not err on tough instances (by definition), both tough $\CI^1$ and $\CI^2$ must share the same optimal set cover on every instance in $R$. Consequently, for each fixed $R$, only tough instances that have the same optimal solution for modified instances in $R$ may charge the same replaced instance via $R$.

\mypar{Charged instance is much heavier than charging instances combined.}
By our construction of $\CI(\inst_1,\ldots,\inst_t)$ drawn from $\DCI$, $\Pr[\inst_i = \binst] = c/{m\choose 2}$ for the median instance. On the other hand, $\sum_{j=1}^\ell \Pr[\inst_i = \inst^j] \leq (1-c/{m\choose 2})\cdot(1/{m\choose 2}) < 1/{m\choose2}$ for modified instances $\inst^1, \ldots, \inst^\ell$ sharing the same optimal set cover, because they are all modified instances constructed to have the two sets chosen by \genModifiedInst~as their optimal set cover: each pair of sets is chosen uniformly with probability $1/{m\choose 2}$. Thus, the probability that $\binst$ is chosen is more than $c$ times the total probability that any $\inst^j$ is chosen. Generalizing this observation, we consider tough instances $\CI^1,\CI^2,\ldots,\CI^\ell$ charging the same $\CI_\replace$ via $R$, and bound the difference in probabilities that $\CI_\replace$ and any $\CI^j$ are drawn. For each index in $R$, it is more than $c$ times more likely for $\DCI$ to draw the median instance, rather than any modified instances of a fixed optimal solution. Then, for the replaced compound $\CI_\replace$ that $\alg$ errs, $p(\CI_\replace) \geq c^{|R|}\cdot \sum_{j=1}^\ell p(\CI^j)$ (where $p$ denotes the probability mass in $\DCI$, not in $\DCI^\tough$). In other words, the probability mass of the replaced instance charged via $R$ is always at least $c^{|R|}$ times the total probability mass of the charging tough instances.

\mypar{Bounding the expected $\cost$ using $\delta$.}
In our charging argument by tough instances above, we only bound the amount of charges on the replaced instances via a fixed $R$.
As there are up to $2^t$ choices for $R$, we scale down the total amount charged to a replaced instance by a factor of $2^t$, so that 
$\sum_{\tough~\CI} c^{|R_\CI|} p(\CI)/{2^t}$ lower bounds the total probability mass of the replaced instances that $\alg$ errs.

Let us first focus on the conditional distribution $\DCI^\tough$ restricted to tough instances. Recall that at least a $(0.98-\delta)$-fraction of the compounds in $\DCI$ are tough: $\alg$ fails to detect differences between $0.2t$ modified instances from the median instance with probability $0.98$, and among these compounds, $\alg$ may err on at most a $\delta$-fraction. So in the conditional distribution $\DCI^\tough$ over tough instances, the individual probability mass is scaled-up to $p^\tough(\CI) \leq \frac{p(\CI)}{0.98-\delta}$. Thus, 
\begin{align*}
\frac{\sum_{\tough~\CI} c^{|R_\CI|} p(\CI)}{2^t} &\geq \frac{\sum_{\tough~\CI} c^{|R_\CI|} (0.98-\delta) p^\tough(\CI)}{2^t} = \frac{(0.98-\delta)\E_{\CI\sim\DCI^\tough} \left[c^{|R_\CI|}\right]}{2^t}.
\end{align*}

As the probability mass above cannot exceed the total allowed error $\delta$, we have
\begin{align*}
\frac{\delta}{0.98-\delta}\cdot 2^t &\geq \E_{\CI\sim\DCI^\tough} \left[c^{|R_\CI|}\right] \geq \E_{\CI\sim\DCI^\tough} \left[c^{0.2t-\cost(\CI)}\right] \geq c^{0.2t-\E_{\CI\sim\DCI^\tough} [\cost(\CI)]},
\end{align*}
where Jensen's inequality is applied in the last step above. So,
\begin{align*}
\E_{\CI\sim\DCI^\tough} [\cost(\CI)] &\geq 0.2t-\frac{t+\log \frac{\delta}{0.98-\delta}}{\log c} = \left(0.2-\frac{1}{\log c}\right)t-\frac{\log \frac{\delta}{0.98-\delta}}{\log c} \geq 0.11t,
\end{align*}
for sufficiently large $c$ (and $m$) when choosing $\delta = 0.02$.

We now return to the expected $\cost$ over the entire distribution $\CI$. For simplicity, define $\cost(\CI)=0$ for any non-tough $\CI$. This yields $\E_{\CI\sim\DCI} [\cost(\CI)] \geq (0.98-\delta)\E_{\CI\sim\DCI^\tough} [\cost(\CI)] \geq (0.98-\delta)\cdot0.11t \geq 0.1t$, establishing the expected $\cost$ of any deterministic $\alg$ with probability of error at most $0.02$ over $\DCI$.

\mypar{Establishing the lower bound for randomized algorithms.}
Lastly, we apply Yao's principle\footnote{Here we use the Monte Carlo version where the algorithm may err, and use $\cost$ instead of the time complexity as our measure of performance. See, e.g., Proposition 2.6 in \cite{MR} and the description therein.} to obtain that, for any randomized algorithm with error probability $\delta/2 = 0.01$, its expected $\cost$ under the worst input is at least $\frac{1}{2}\cdot 0.1t = 0.05t$. Recall now that our $\cost$ here lower-bounds the sub-optimality of the computed set cover (that is, the algorithm uses at least $\cost$ more sets to cover the elements than the optimal solution does). Since our input instances have optimal solution $k\in[2t,3t]$ and the randomized algorithm returns a solution with $\cost$ at least $0.05t$ in expectation, it achieves an approximation factor of no better than $\alpha=\frac{k+0.05t}{k}>1.01$ with $o(\frac{mnt}{\log m})$ queries. Theorem~\ref{thm:mn-k} then follows, noting the substitution of our problem size: $\frac{mnt}{\log m} = \frac{(m'/t)(n'/t)t}{\log (m'/t)} = \Theta( \frac{m'n'}{k'\log m'})$.

\subsubsection{Proof of Lemma~\ref{lem:undetected}}\label{sec:bad-events}

First, we recall the following result from Lemma~\ref{lem:elemnt-query} for distinguishing between $\binst$ and a random $\minst\sim\minstD(\binst)$.

\begin{corollary}\label{lem:prob-found-opt}
Let $q$ be the number of queries made by $\alg$ on $\inst_i\sim\minstD(\binst)$ over $n$ elements and $m$ sets, where $\binst$ is a median instance. Then the probability that $\alg$ detects a difference between $\inst_i$ and $\binst$ in one of its queries is at most ${4800 q \log m \over mn}$. 
\end{corollary}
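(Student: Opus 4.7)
The plan is to extract this corollary essentially from the argument already present in the proof of Lemma~\ref{lem:elemnt-query}, but to reinterpret it as a per-query detection bound rather than an output-disagreement bound. The main tool I need is Lemma~\ref{lem:p_cell_uniform}, which gives the uniform ceiling $\frac{4800\log m}{mn}$ on $P_{\eltq\text{-}\setq}(e,S)$ for every fixed $(e,S)$ pair.

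First, I would exploit the fact that $\alg$ is deterministic and adaptive: its query sequence on $\inst_i\sim\minstD(\binst)$ coincides with the one it would issue on $\binst$ up to (and including) the first query whose oracle answers on the two instances differ. Let $q_1,\ldots,q_q$ denote the hypothetical ``$\binst$-run'' sequence of $\alg$. Then the event that $\alg$ detects a difference on $\inst_i$ within its first $q$ queries is contained in $\bigcup_{t=1}^{q}\{\ans_{\binst}(q_t)\neq\ans_{\inst_i}(q_t)\}$, because any such detection must be witnessed at the earliest differing index, at which point the prefixes still agree and the indexing by the $\binst$-run is legitimate.

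Next, I would bound each individual term $\Prob{\ans_{\binst}(q_t)\neq\ans_{\inst_i}(q_t)}$. Each query $q_t$ is either of the form $\EltOf(S,j)$ or $\SetOf(e,j)$, and in either case it probes a single $(e,S)$ entry of the incidence representation; the oracle's answer can change only if the $\swap$ calls performed by $\genModifiedInst$ toggled that specific $(e,S)$ cell, which by Lemma~\ref{lem:p_cell_uniform} happens with probability at most $\tfrac{4800\log m}{mn}$ over $\inst_i\sim\minstD(\binst)$. A union bound over the $q$ queries then delivers
\[
\Prob{\alg\text{ detects a difference}} \;\leq\; \sum_{t=1}^{q}\Prob{\ans_{\binst}(q_t)\neq\ans_{\inst_i}(q_t)} \;\leq\; q\cdot \frac{4800\log m}{mn},
\]
which is the claimed bound.

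There is no real obstacle here, since all the heavy lifting was done in Lemma~\ref{lem:p_cell_uniform}; the only point requiring care is the justification for indexing queries along the ``virtual'' $\binst$-run rather than the actual run on $\inst_i$, and this is precisely what the common-prefix observation in the first step provides.
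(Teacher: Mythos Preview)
Your proposal is correct and follows essentially the same route as the paper: the paper presents this corollary as a direct restatement of (the intermediate inequality in the proof of) Lemma~\ref{lem:elemnt-query}, where the union bound over the per-query events $\{\ans_{\binst}(q_t)\neq\ans_{\inst_i}(q_t)\}$ combined with Lemma~\ref{lem:p_cell_uniform} already yields the $\tfrac{4800\,q\log m}{mn}$ bound. Your added care in justifying why the adaptive query sequence can be indexed along the $\binst$-run (via the common-prefix observation) is exactly the ``deterministic $\alg$ makes the same next query if all previous answers agree'' step used in the proof of Lemma~\ref{lem:elemnt-query}.
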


\mypar{Marbles and urns.} Fix a compound $\CI(\inst_1, \ldots, \inst_t)$. Let $s \triangleq {mn\over 4800 \log m}$, and then consider the following, entirely different, scenario. Suppose that we have $t$ urns, where each urn contains $s$ marbles. In the $i^\textrm{th}$ urn, in case $\inst_i$ is a modified instance, we put in this urn one $\red$ marble and $s-1$ $\white$ marbles; otherwise if $\inst_i = \binst$, we put in $s$ white marbles. Observe that the probability of obtaining a $\red$ marble by drawing $q$ marbles from a single urn \emph{without replacement} is exactly $q/s$ (for $q \leq s$). Now, we will relate the probability of drawing $\red$ marbles to the probability of successfully distinguishing instances. We emphasize that we are only comparing the probabilities of events for the sake of analysis, and we do not imply or suggest any direct analogy between the events themselves.

Corollary~\ref{lem:prob-found-opt} above bounds the probability that the algorithm successfully distinguishes a modified instance $\inst_i$ from $\binst$ with ${4800 q \log m \over mn} = q/s$. Then, the probability of distinguishing between $\inst_i$ and $\binst$ using $q$ queries, is bounded from above by the probability of obtaining a $\red$ marble after drawing $q$ marbles from an urn. Consequently, the probability that the algorithm distinguishes $3t/4$ 
instances is bounded from above by the probability of drawing the $\red$ marbles from at least $3t/4$ urns. Hence, to prove that the event of Lemma~\ref{lem:undetected} occurs with probability at most $0.01$, it is sufficient to upper-bound the probability that an algorithm obtains $3t/4$ $\red$ marbles by $0.01$.

Consider an instance of $t$ urns; for each urn $i \in [t]$ corresponding to a modified instance $\inst_i$, exactly one of its $s$ marbles is $\red$.
An algorithm may draw marbles from each urn, one by one without replacement, for potentially up to $s$ times. 
By the principle of deferred decisions, the $\red$ marble is equally likely to appear in any of these $s$ draws, independent of the events for other urns. 
Thus, we can create a tuple of $t$ random variables $\sT = (T_1,\ldots,T_t)$ such that for each $i \in [t]$, $T_i$ is chosen uniformly at random from $\{1, \ldots, s\}$. The variable $T_i$ represents the number of draws required to obtain the $\red$ marble in the $i^\textrm{th}$ urn; that is, only the $T_i^{\mathrm{th}}$ draw from the $i^\textrm{th}$ urn finds the $\red$ marble from that urn. In case $\inst_i$ is a median instance, we simply set $T_i = s+1$ indicating that the algorithm never detects any difference as $\inst_i$ and $\binst$ are the same instance.

We now show the following two lemmas in order to bound the number of $\red$ marbles the algorithm may encounter throughout its execution.
\begin{lemma}\label{lem:high-threshold-slabs}
Let $b > 3$ be a fixed constant and define $\sT_{\mathrm{high}} = \set{i ~|~ T_i \geq {s \over b}}$. If $t\geq 14b$, then $\card{\sT_{\mathrm{high}}} \geq (1-{2\over b})t$ with probability at least $0.99$.
\end{lemma}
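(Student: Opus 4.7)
\textbf{Proof plan for Lemma~\ref{lem:high-threshold-slabs}.}
My plan is to reduce the statement to a concentration bound on a sum of independent Bernoulli indicators, one per urn. Define for each $i \in [t]$ the indicator $X_i = \mathbb{1}[T_i < s/b]$, so that $|\sT_{\mathrm{high}}^c| = \sum_{i=1}^{t} X_i$ and our goal becomes showing $\sum_i X_i \leq 2t/b$ with probability at least $0.99$.

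First I would bound the individual success probabilities. If $\inst_i$ is a median instance, then by definition $T_i = s+1 \geq s/b$, so $X_i = 0$ deterministically. If $\inst_i$ is a modified instance, then $T_i$ is uniform on $\{1,\dots,s\}$, so
\[
\Pr[X_i = 1] \;=\; \Pr[T_i < s/b] \;=\; \frac{\lceil s/b \rceil - 1}{s} \;\leq\; \frac{1}{b}.
\]
Crucially, the $T_i$'s are independent across $i$ by construction (the $\red$ marble's position is drawn independently in each urn), so the $X_i$'s are independent Bernoulli variables with $\Pr[X_i=1] \leq 1/b$.

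Next I would apply a standard multiplicative Chernoff bound to $X = \sum_i X_i$. Since $\mu := \E[X] \leq t/b$ and we want $\Pr[X \geq 2t/b]$, we use $\Pr[X \geq (1+\delta)\mu'] \leq \exp(-\delta^2 \mu'/3)$ with $\mu' = t/b$ and $\delta = 1$, giving
\[
\Pr\!\left[\,|\sT_{\mathrm{high}}^c| \;\geq\; \tfrac{2t}{b}\,\right] \;\leq\; \exp\!\left(-\tfrac{t}{3b}\right).
\]
Under the hypothesis $t \geq 14b$, the right-hand side is at most $\exp(-14/3) < 0.01$, which yields $|\sT_{\mathrm{high}}| = t - |\sT_{\mathrm{high}}^c| \geq (1 - 2/b)t$ with probability at least $0.99$, as desired.

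No step here looks like a genuine obstacle: the main things to be careful about are (i) correctly treating the median-instance coordinates (which contribute nothing to the sum and only make the bound easier), and (ii) using the upper bound $t/b$ rather than the exact mean when invoking Chernoff, which is valid because Chernoff's tail bound is monotone in the mean when the deviation threshold is fixed. The constant $14$ in the assumption is exactly what makes $e^{-t/(3b)} \leq 0.01$.
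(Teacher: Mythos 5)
Your proof is correct and follows essentially the same route as the paper's: both bound the per-urn probability that $T_i < s/b$ by $1/b$, observe independence (the paper phrases this as stochastic domination by $\Binomial(t, 1/b)$), and apply a multiplicative Chernoff bound with $\delta=1$ to get $e^{-t/(3b)} < 0.01$ for $t \geq 14b$. Your added care about the median-instance coordinates and the monotonicity of the Chernoff tail in the mean is sound but does not change the argument.
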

\begin{proof}
Let $\sT_{\mathrm{low}}=\{1,\ldots,t\}\setminus\sT_{\mathrm{high}}$. Notice that for the $i^{\mathrm{th}}$ urn, $\Prob{i \in \sT_{\mathrm{low}}} < {1 \over b}$ independently of other urns, and thus $\card{\sT_{\mathrm{low}}}$ is stochastically dominated by \Binomial$(t,{1 \over b})$, the binomial distribution with $t$ trials and success probability ${1 \over b}$. Applying Chernoff bound, we obtain 
\begin{align*}
\Prob{\card{\sT_{\mathrm{low}}} \geq {2t\over b}} \leq e^{-{t \over 3b}} < 0.01.
\end{align*}
Hence, $\card{\sT_{\mathrm{high}}} \geq t- {2t \over b} = (1-{2\over b})t$ with probability at least $0.99$, as desired.
\end{proof}

\begin{lemma}\label{lem:sim-output}
If the total number of draws made by the algorithm is less than $(1-{3\over b}){st \over b}$, then with probability at least $0.99$, the algorithm will not obtain $\red$ marbles from at least ${t\over b}$ urns. 
\end{lemma}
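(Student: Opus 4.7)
My plan is to combine Lemma~\ref{lem:high-threshold-slabs} with an elementary counting argument on the draw budget. The intuition is that once most urns are ``high-threshold''—requiring at least $s/b$ draws before their red marble can appear—obtaining red marbles from too many urns simply costs too much in total draws. No additional randomness beyond that of Lemma~\ref{lem:high-threshold-slabs} is needed; all of the stated failure probability will come from conditioning on that earlier event.

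First, I would invoke Lemma~\ref{lem:high-threshold-slabs} to conclude that with probability at least $0.99$, $|\sT_{\mathrm{high}}|\geq (1-2/b)t$, so that $|\sT_{\mathrm{low}}|\leq 2t/b$. I would condition on this event and argue by contradiction: suppose the algorithm manages to obtain red marbles from strictly more than $(1-1/b)t$ urns. Since at most $2t/b$ of those urns can lie in $\sT_{\mathrm{low}}$, at least
\[
(1-1/b)t - 2t/b \;=\; (1-3/b)t
\]
of the urns that yielded red marbles must belong to $\sT_{\mathrm{high}}$. By definition of $\sT_{\mathrm{high}}$, the algorithm needs at least $s/b$ draws on each such urn before seeing its red marble, so the total number of draws is at least $(1-3/b)t \cdot s/b = (1-3/b)st/b$, contradicting the draw budget.

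Therefore, conditioned on the event from Lemma~\ref{lem:high-threshold-slabs}, the algorithm obtains red marbles from at most $(1-1/b)t$ urns, leaving at least $t/b$ urns yielding no red marble; since this conditioning succeeds with probability at least $0.99$, the lemma follows. There is essentially no hard step here—the only subtlety is keeping the three slack parameters aligned: the $2/b$ fraction lost to $\sT_{\mathrm{low}}$ in Lemma~\ref{lem:high-threshold-slabs}, the target $1/b$ fraction of uncovered urns, and the per-urn cost $s/b$ combine exactly to produce the stated threshold $(1-3/b)st/b$.
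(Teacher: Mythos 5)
Your proof is correct and uses essentially the same approach as the paper: both condition on the event of Lemma~\ref{lem:high-threshold-slabs} and then combine the draw budget with the per-urn threshold via a deterministic counting argument. The paper counts directly (number of urns with few draws $\geq 3t/b$, intersected with $\sT_{\mathrm{high}}$ gives $\geq t/b$ undetected urns), while you argue by contradiction from the number of red-yielding urns; these are arithmetically equivalent.
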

\begin{proof} 
If the total number of such draws is less than $(1-{3\over b}){st \over b}$, then the number of draws from at least ${3t\over b}$ urns is less than ${s\over b}$ each. Assume the condition of Lemma~\ref{lem:high-threshold-slabs}: for at least $ (1-{2\over b})t$ urns, $T_i \geq {s\over b}$. That is, the algorithm will not encounter a $\red$ marble if it makes less than ${s\over b}$ draws from such an urn. Then, there are at least $t\over b$ urns with $T_i \geq {s\over b}$ from which the algorithm makes less than ${s\over b}$ draws, and thus does not obtain a $\red$ marble. Overall this event holds with probability at least $0.99$ due to Lemma~\ref{lem:high-threshold-slabs}.
\end{proof}

We substitute $b = 4$ and assume sufficiently large $t$. Suppose that the deterministic algorithm makes less than $(1-{3\over 4}){st \over 4} = \frac{st}{16}$ queries, then for a fraction of $0.99$ of all possible tuples $\sT$, there are $t/4$ instances $\inst_i$ that the algorithm fails to detect their differences from $\binst$: the probability of this event is lower-bounded by that of the event where the $\red$ marbles from those corresponding urns $i$ are not drawn. Therefore, the probability that the algorithm makes queries that detect differences between $\binst$ and more than $3t/4$ instances $\inst_i$'s is bounded by $0.01$, concluding our proof of Lemma~\ref{lem:undetected}.

\section{Sub-Linear Algorithms for the Set Cover Problem} \label{sec:upperbound}
In this paper, we present two different approximation algorithms for \probSetCover with sub-linear query in the oracle model: \algSubSC and \algRandSC.
Both of our algorithms rely on the techniques from the recent developments on \probSetCover in the streaming model. However, adopting those  techniques in the oracle model requires novel insights and technical development.

Throughout the description of our algorithms, we assume that we have access to a black box subroutine 
that given the full \probSetCover instance (where all members of all sets are revealed), returns a $\rho$-approximate 
solution\footnote{The approximation factor $\rho$ may take on any value between $1$ and $\Theta(\log n)$ 
depending on the computational model one assumes.}.

The first algorithm (\algSubSC) returns a $(\apxF \rho + \eps)$ approximate solution of the \prob{Set Cover} instance using $\tldO(\frac{1}{\eps} (m (\frac{n}{k})^{1\over \alpha-1} + nk))$ queries, while the second algorithm (\algRandSC) achieves an approximation factor of $(\rho+\eps)$ using $\tldO(\frac{mn}{k\eps^2})$ queries, where $k$ is the size of the minimum set cover. These algorithms can be combined so that the number of queries of the algorithm becomes asymptotically the minimum of the two:

\begin{theorem}
There exists a randomized algorithm for \probSetCover in the oracle model that \whp\footnote{An algorithm succeeds \emph{with high probability} ($\whp$) if its failure probability can be decreased to $n^{-c}$ for any constant $c > 0$ without affecting its asymptotic performance, where $n$ denotes the input size.}~computes an $O(\rho\log n)$-approximate solution and uses $\tldO( \min\{m\left({n\over k}\right)^{1/ \log n}+ nk \text{ , }\frac{mn}{k}\}) = \tldO(m + n \sqrt m)$ number of queries. 
\end{theorem}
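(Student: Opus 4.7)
The plan is to combine the two algorithms \algSubSC and \algRandSC into a single routine whose query complexity matches the minimum of their individual bounds. First, I would instantiate \algSubSC with $\apxF = \log n$ and $\eps$ a fixed constant: since $(n/k)^{1/(\log n - 1)} = O(1)$, this yields an $O(\rho \log n)$-approximation with $\tldO(m + nk)$ queries. Second, I would instantiate \algRandSC with a fixed constant $\eps$, producing an $O(\rho)$-approximation with $\tldO(mn/k)$ queries. Both subroutines thus deliver an $O(\rho \log n)$-approximation, so the only remaining task is to control the query complexity.

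Because the value of $k$ is unknown to the algorithm, I would use a standard geometric search over guesses $\tilde{k} = 1, 2, 4, \ldots, n$, which contributes only an $O(\log n)$ factor absorbed into $\tldO$. For each guess $\tilde{k}$, I would pick the subroutine whose predicted query bound is smaller at $\tilde{k}$: \algSubSC when $\tilde{k} \leq \sqrt{m}$ and \algRandSC when $\tilde{k} > \sqrt{m}$. Each invocation is run with an explicit query budget matching its predicted bound, and its returned collection is verified to cover all elements before termination. The first valid cover encountered is returned.

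The query-complexity bound then follows from the elementary identity
\[
\min\!\left(m + n\tilde{k},\ \frac{mn}{\tilde{k}}\right) \;=\; O\!\left(m + n\sqrt{m}\right) \quad \text{for every } \tilde{k} \geq 1,
\]
since $\tilde{k} \leq \sqrt{m}$ forces the first term to be at most $m + n\sqrt{m}$, while $\tilde{k} \geq \sqrt{m}$ forces the second to be at most $n\sqrt{m}$. Because the geometric search visits at most $O(\log n)$ guesses and halts at some $\tilde{k} = \Theta(k)$, and because geometric sums are dominated by their last term, the overall query complexity is $\tldO(m + n\sqrt{m})$.

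The main obstacle I expect is ensuring that each subroutine behaves correctly when invoked with a guess $\tilde{k}$ that is capped by an explicit query budget: one must verify that (i) for $\tilde{k} \geq k$ both subroutines still produce an $O(\rho \log n)$-approximate cover within their predicted budgets, which should follow from the monotonicity of the stated query bounds in $k$ together with the fact that the subroutines are analyzed for arbitrary $k$, and (ii) for $\tilde{k} < k$ the coverage-verification step safely rejects the output so that the search proceeds to the next doubling. Once these two conditions are established, the combination achieves the claimed $\tldO(m + n\sqrt{m})$ bound with an $O(\rho \log n)$-approximation guarantee.
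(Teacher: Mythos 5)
Your decomposition of the bound and the choice of parameters ($\apxF=\log n$ for \algSubSC, constant $\eps$ for \algRandSC) are correct, and the elementary observation that $\min(m+nk,\ mn/k)=O(m+n\sqrt{m})$ for every $k$ is exactly the right computation. However, the combination scheme has a genuine gap: the coverage-verification step you insert after each guess cannot be afforded. Verifying that a collection of $\Theta(\tilde{k})$ sets covers $\sE$ in the $\EltOf/\SetOf$ oracle model requires $\Omega(n\tilde{k})$ queries --- indeed, the paper's own \probCoverVerification lower bound (Theorem~\ref{thm:check-cover-lb}) shows $\Omega(nk)$ is unavoidable. For the correct guess $\tilde{k}=\Theta(k)$ with $k>\sqrt{m}$, this verification cost is $\Omega(nk) > n\sqrt{m} > mn/k$, which strictly exceeds the $\tldO(mn/k)$ budget that makes \algRandSC useful in that regime. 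In particular, the output of \algRandSC includes $\Theta(\eps k)$ randomly sampled sets that are never read during the algorithm's execution (Set Sampling guarantees they cover the dense elements only w.h.p.), so verifying their coverage forces you to enumerate $\Omega(nk)$ set-element pairs. Your scheme would thus never terminate within budget when $k$ is large, exactly the regime the second algorithm is supposed to handle.

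The way out, and what the paper implicitly intends, is to drop verification entirely and rely on the w.h.p.\ correctness guarantees that both subroutines already carry. Both \algSubSC and \algRandSC already contain an internal geometric search over guesses of $k$ and are proven to output a valid $O(\rho\log n)$-approximate cover w.h.p.\ within their stated query bounds, without knowing $k$ in advance. So one simply interleaves the two: alternate one query of \algSubSC with one query of \algRandSC and stop when either subroutine terminates, returning its output. The total query count is at most $2\cdot\min$ of the two, which is $\tldO(\min\{m(n/k)^{1/\log n}+nk,\ mn/k\})=\tldO(m+n\sqrt{m})$, and the returned cover is $O(\rho\log n)$-approximate w.h.p.\ by whichever subroutine produced it. Your outer geometric search over $\tilde{k}$ is then redundant (the subroutines already search internally) and the verification step should be removed.
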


\subsection{Preliminaries.}\label{sec:alg-prelim}
Our algorithms use the following two sampling techniques developed for \probSetCover in the streaming model~\cite{dimv-sccsc-14}: \prob{Element Sampling} and \prob{Set Sampling}. The first technique, \prob{Element Sampling}, states that in order to find a $(1-\delta)$-cover of $\sE$ \whp, it suffices to solve \probSetCover on a subset of elements of size $\tldO({\rho k \log m \over \delta})$ picked uniformly at random. It shows that we may restrict our attention to a subproblem with a much smaller number of elements, and our solution to the reduced instance will still cover a good fraction of the elements in the original instance. The next technique, \prob{Set Sampling}, shows that if we pick $\ell$ sets uniformly at random from $\sS$ in the solution, then each element that is not covered by any of picked sets $\whp$ only occurs in $\tldO({m\over \ell})$ sets in $\sS$; that is, we are left with a much sparser subproblem to solve. The formal statements of these sampling techniques are as follows. See~\cite{dimv-sccsc-14} for the proofs. 

\begin{lemma}[Element Sampling] \label{lem:element-sampling}
Consider an instance of \prob{Set Cover} on ($\sE$, $\sS$) whose optimal cover has size at most $k$.
Let $\sE_{\smp}$ be a subset of $\sE$ of size $\Theta\left({\rho k \log m \over \delta}\right)$ chosen uniformly at random, and let $\sC_{\smp} \subseteq \sS$ be a $\rho$-approximate cover for $\sE_{\smp}$. 
Then, \whp~$\sC_{\smp}$ covers at least $(1-\delta)\card{\sE}$ elements. 
\end{lemma}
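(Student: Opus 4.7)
\medskip
\noindent
\textbf{Proof proposal.} The plan is to run a union bound over all ``bad'' candidate covers, where a bad cover is a subcollection of $\sS$ that has the allowed size but nonetheless leaves more than a $\delta$-fraction of $\sE$ uncovered. First I would observe that any optimal set cover of $\sE$ is also a cover of every possible sample $\sE_{\smp}$, so the optimal cover size for the subinstance restricted to $\sE_{\smp}$ is at most $k$. Consequently the $\rho$-approximate cover $\sC_{\smp}$ returned by the black-box subroutine uses at most $\rho k$ sets, and we may restrict attention to subcollections $\Cover \subseteq \sS$ with $|\Cover| \le \rho k$.

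Next, I would call $\Cover$ \emph{bad} if it covers fewer than $(1-\delta)|\sE|$ elements of $\sE$, and count: the number of subcollections of $\sS$ of size at most $\rho k$ is at most $\binom{m}{\le \rho k} \le m^{\rho k}$. For a fixed bad $\Cover$, let $U(\Cover) \subseteq \sE$ be the elements it fails to cover; by assumption $|U(\Cover)| > \delta|\sE|$. Each element of $\sE_{\smp}$ is a uniformly random element of $\sE$, so the probability that a single sample lies outside $U(\Cover)$ is at most $1-\delta$, and the probability that the whole sample $\sE_{\smp}$ avoids $U(\Cover)$ (i.e., that $\Cover$ still covers every element of $\sE_{\smp}$) is at most $(1-\delta)^{|\sE_{\smp}|} \le e^{-\delta |\sE_{\smp}|}$.

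Now a union bound gives that the probability some bad $\Cover$ with $|\Cover| \le \rho k$ covers the entire sample is at most $m^{\rho k}\cdot e^{-\delta |\sE_{\smp}|}$; choosing $|\sE_{\smp}| = \Theta(\rho k \log m / \delta)$ with a sufficiently large constant drives this failure probability below $m^{-c}$ for any desired constant $c$. Conditioning on this good event, the black-box output $\sC_{\smp}$, which must cover $\sE_{\smp}$, cannot be bad, and therefore covers at least $(1-\delta)|\sE|$ elements of $\sE$. The main technical nuisance (rather than obstacle) is handling the with-vs-without-replacement distinction for the sample and the fact that $\rho$ may itself depend on the subinstance; both are absorbed into the constant hidden in the $\Theta(\cdot)$ and by using $(1-\delta)^{|\sE_{\smp}|}$ as an upper bound that is valid under either sampling convention.
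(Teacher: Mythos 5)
Your proposal is correct and follows the standard element-sampling argument used in the cited reference \cite{dimv-sccsc-14}, to which the paper defers this proof: reduce to subcollections of size at most $\rho k$ (since the optimal cover of $\sE$ also covers $\sE_{\smp}$), union-bound over the at most $m^{O(\rho k)}$ such subcollections, and observe that a bad one survives the sample with probability at most $(1-\delta)^{|\sE_{\smp}|}$. Your remark on with/without replacement is also sound, since sampling without replacement only decreases the probability of missing a fixed set of more than $\delta|\sE|$ elements, so $(1-\delta)^{|\sE_{\smp}|}$ remains a valid upper bound.
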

\begin{lemma}[Set Sampling] \label{lem:set-sampling}
Consider an instance $(\sE,\sS)$ of \probSetCover. Let $\sS_{\rnd}$ be a collection of $\ell$ sets picked uniformly at random. Then, \whp~$\sS_{\rnd}$ covers all elements that appear in $\Omega({m \log n \over \ell})$ sets of $\sS$.
\end{lemma}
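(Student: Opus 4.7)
The plan is a standard degree-based sampling argument: for each sufficiently high-degree element, bound the probability that none of the $\ell$ sampled sets contains it, and then apply a union bound over such elements. Fix an element $e$ that appears in at least $d \geq c\, m\ln n / \ell$ sets, where $c$ is a sufficiently large constant hidden inside the $\Omega(\cdot)$. Treating the sampling as being done with replacement for clarity (the without-replacement case only strengthens the bound, see below), each draw independently hits $e$ with probability at least $d/m \geq c\ln n / \ell$, so by independence
\[
\Pr[\text{no sampled set covers } e] \;\leq\; \pth{1 - \frac{c\ln n}{\ell}}^{\ell} \;\leq\; e^{-c\ln n} \;=\; n^{-c}.
\]

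Next I would take a union bound over the (at most $n$) elements of degree at least $d$, yielding a total failure probability of at most $n\cdot n^{-c} = n^{1-c}$. Since the hidden constant $c$ in the $\Omega(\cdot)$ can be tuned to any desired value, this matches the definition of \whp~given in the paper (failure probability $n^{-c'}$ for any constant $c' > 0$). To justify using with-replacement sampling in place of without-replacement, I would either note negative association of the indicators ``the $i$-th sampled set contains $e$'' across $i \in [\ell]$, or compute the probability directly as $\binom{m-d}{\ell}/\binom{m}{\ell}$ and verify by a routine comparison of factor ratios that it is dominated by $(1-d/m)^\ell$.

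There is no substantial obstacle: the only care required is choosing the hidden constant in $\Omega(\cdot)$ large enough to simultaneously absorb the exponent needed for the single-element tail bound and the logarithmic slack needed for the union bound over up to $n$ high-degree elements. A small but worthwhile subtlety is that the union bound need only range over elements of degree at least $d$ (low-degree elements are outside the scope of the conclusion), so the bound really is $n \cdot n^{-c}$ rather than something worse; nothing else in the argument is more than a line of elementary probability.
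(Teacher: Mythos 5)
Your proof is correct and is the standard union-bound argument for this lemma; the paper itself does not include a proof but defers to~\cite{dimv-sccsc-14}, where the same degree-based tail bound plus union bound over elements is used. Your attention to the with-replacement vs.\ without-replacement distinction (via the ratio $\binom{m-d}{\ell}/\binom{m}{\ell} \le (1-d/m)^\ell$) and to letting the hidden constant absorb the union-bound slack covers the only two places where care is needed.
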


\subsection{First Algorithm: small values of $k$} \label{sec:upperbound-small}
The algorithm of this section is a modified variant of the streaming algorithm of \prob{Set Cover} in \cite{imv-ttbsscp-15} that works in the sub-linear query model. 
Similarly to the algorithm of~\cite{imv-ttbsscp-15}, our algorithm \algSubSC considers different guesses of the value of an optimal solution ($\eps^{-1}\log n$ guesses) and performs the core {\em iterative} algorithm \algIterSC for all of them in parallel. For each guess $\ell$ of the size of an optimal solution, the \algIterSC goes through $1/\alpha$ iterations and by applying \prob{Element Sampling}, guarantees that $\whp$ at the end of each iteration, the number of uncovered elements reduces by a factor of $n^{-1/\alpha}$. Hence, after $1/\alpha$ iterations all elements will be covered. Furthermore, since the number of sets picked in each iteration is at most $\ell$, the final solution has at most $\rho\ell$ sets where $\rho$ is the performance of the {\em offline} block \offlineSC{} that \algIterSC uses to solve the reduced instances constructed by \prob{Element Sampling}.

Although our general approach in \algIterSC is similar to the iterative core of the streaming algorithm of \probSetCover, there are challenges that we need to overcome so that it works {\em efficiently} in the query model.
Firstly, the approach of~\cite{imv-ttbsscp-15} relies on the ability to test membership for a set-element pair when executing its \emph{set filtering} subroutine: given a subset $\GSample$, the algorithm of~\cite{imv-ttbsscp-15} requires to compute $\card{S\cap \GSample}$ which cannot be implemented efficiently in the query model (in the worst case, requires $m|\GSample|$ queries).
Instead, here we employ the \emph{set sampling} which $\whp$ guarantees that the number of sets that contain an (yet uncovered) element is small. 

Next challenge is achieving $m(n/k)^{1/(\alpha-1)} +nk$ query bound for computing an $\alpha$-approximate solution. As mentioned earlier, both our approach and the algorithm of~\cite{imv-ttbsscp-15} need to run the algorithm in parallel for different guesses $\ell$ of the size of an optimal solution. 
However, since \algIterSC performs $m(n/\ell)^{1/(\alpha-1)} + n\ell$ queries, if \algSubSC invokes \algIterSC with guesses in an increasing order then the query complexity becomes $mn^{1/(\alpha-1)}+ nk$; on the other hand, if it invokes \algIterSC with guesses in a decreasing order then the query complexity becomes $m(n/k)^{1/(\alpha-1)}+mn$.  
To solve  this issue, \algSubSC performs in two stages: in the first stage, it finds a $(\log n)$-estimate of $k$ by invoking \algIterSC using $m+nk$ queries (assuming guesses are evaluated in an increasing order) and then in the second rounds it only invokes $\algIterSC$ with approximation factor $\alpha$ in the smaller $O(\log n)$-approximate region around the $(\log n)$-estimate of $k$ computed in the first stage. Thus, in our implementation, besides the desired approximation factor, $\algIterSC$ receives an upper bound and a lower bound on the size of an optimal solution.

Now, we provide a detailed description of $\algIterSC$.  It receives $\apxF, \epsilon, \low $ and $\up$ as its arguments, and it is guaranteed that 
the size of an optimal cover of the input instance, $k$, is in $[\low,\up]$.  
Note that the algorithm does not know the value of $k$ and the sampling techniques described in Section~\ref{sec:alg-prelim} rely on $k$. Therefore, the algorithm needs to find a $(1+\eps)$ estimate\footnote{The exact estimate that the algorithm works with is a $(1+ {\eps \over 2\rho\apxF})$ estimate.} of $k$ denoted as $\ell$. This can be done by trying all powers of $(1+\eps)$ in $[\low, \up]$. The parameter $\apxF$ denotes the trade-off between the query complexity and the approximation guarantee that the algorithm achieves. Moreover, we assume that the algorithm has access to a $\rho$-approximate black box solver of \probSetCover.  

\algIterSC first performs \prob{Set Sampling} to cover all elements that occur in $\tldOmega(m/\ell)$ sets.
Then it goes through $\apxF-2$ iterations and in each iteration, it performs \prob{Element Sampling} with parameter $\delta=\tldO(({\ell/n})^{1/(\apxF-1)})$. 
By Lemma~\ref{lem:element-sampling}, after $(\apxF-2)$ iterations, $\whp$ only $\ell\left({n\over \ell}\right)^{1/(\apxF-1)}$ elements remain uncovered, for which the algorithm finds a cover by invoking the {\em offline} set cover solver. The parameters are set so that all $(\apxF -1)$ instances that are required to be solved by the offline set cover solver (the $(\apxF-2)$ instances constructed by \prob{Element Sampling} and the final instance) are of size $\tldO(m\left({n\over \ell}\right)^{1/(\apxF-1)})$.

In the rest of this section, we show that \algSubSC $\whp$ returns an almost $(\rho\apxF)$-approximate solution of \prob{Set Cover($\sE, \sS$)} with query complexity $\tldO(m\left({n\over k}\right)^{1 \over \apxF-1}+nk)$ where $k$ is the size of a minimum set cover. 

\begin{theorem}\label{thm:sub-set-cover}
\emph{\algSubSC} outputs a $(\apxF\rho+\eps)$-approximate solution of $\probSetCover(\sE, \sS)$ using $\tldO(\frac{1}{\eps}(m(n/k)^{1\over \apxF-1}+ nk))$ number of queries \whp, where $k$ is the size of an optimal solution of $(\sE, \sS)$.
\end{theorem}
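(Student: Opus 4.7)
The plan is to analyze a single invocation of \algIterSC when given a good guess $\ell \in [k, (1+\eps')k]$ with $\eps' = \eps/(2\rho\apxF)$, and then sum the costs across the guesses swept by the two stages of \algSubSC. For one call the approximation bound follows from adding up the sets selected at each step, while the per-guess query complexity decomposes into the size of the offline-solver inputs, $\tldO(m(n/\ell)^{1/(\apxF-1)})$, and the overhead of maintaining the currently uncovered elements, $\tldO(n\ell)$.

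For the correctness of \algIterSC$(\apxF,\eps,\low,\up)$ with $\ell\in[k,(1+\eps')k]$: Lemma~\ref{lem:set-sampling} applied to the $\tldO(\ell)$ sampled sets guarantees that whp every uncovered element thereafter lies in only $\tldO(m/\ell)$ sets. In each of the subsequent $\apxF-2$ element-sampling rounds, setting $\delta = (\ell/n)^{1/(\apxF-1)}$ in Lemma~\ref{lem:element-sampling} and observing that the currently uncovered elements are still covered by the original optimum of size $\leq k \leq \ell$, an invocation of the offline $\rho$-approximate solver on a random sample of $\tldO(\ell/\delta)$ elements returns at most $\rho\ell$ sets that whp cover a $(1-\delta)$-fraction of the remaining elements. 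By induction, after $\apxF-2$ rounds only $\tldO(\ell(n/\ell)^{1/(\apxF-1)})$ elements remain, and one final call to the offline solver finishes the cover using another $\rho\ell$ sets. Summing, the output has at most $\apxF\rho\ell + \tldO(\ell) \leq (\apxF\rho+\eps)k$ sets. The query cost per guess is dominated by constructing each offline-solver input, obtainable in $\tldO(m(n/\ell)^{1/(\apxF-1)})$ queries using the low set-degree of uncovered elements together with $\SetOf$, and by enumerating the members of the $\tldO(\apxF\rho\ell)$ selected sets through $\EltOf$ to refresh the uncovered-element list after each round, which contributes $\tldO(n\ell)$.

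The two-stage analysis of \algSubSC then completes the proof. Stage 1 runs \algIterSC with $\apxF=\log n$ over guesses $\ell=(1+\eps)^i$ in increasing order until a valid cover is produced; by the correctness argument above this happens at some $\ell = O(\log n)\cdot k$, and since each guess costs $\tldO(m+n\ell)$ when $\apxF=\log n$ (the exponent $1/(\log n-1)$ reduces $(n/\ell)^{1/(\apxF-1)}$ to $O(1)$), a geometric sum yields total cost $\tldO(\tfrac{1}{\eps}(m+nk))$. Stage 2 then runs \algIterSC with the desired $\apxF$ restricted to guesses in $[\hat{k}/O(\log n),\hat{k}]$, which contains only $O(\tfrac{\log\log n}{\eps})$ values; each guess costs $\tldO(m(n/k)^{1/(\apxF-1)}+nk)$, summing to the claimed $\tldO(\tfrac{1}{\eps}(m(n/k)^{1/(\apxF-1)}+nk))$. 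The main technical obstacle is to prevent the additive $n\ell$ term from blowing up as Stage 1 sweeps through guesses $\ell \ll k$, and to be sure that the $m(n/\ell)^{1/(\apxF-1)}$ term does not blow up as Stage 2 sweeps through guesses slightly below $k$; both are circumvented by the geometric growth of $\ell$, together with the fact that Stage 1 terminates as soon as a valid cover is produced and Stage 2 only operates inside the $O(\log n)$-factor window identified by Stage 1, so the dominant terms are incurred only for guesses within a constant factor of $k$.
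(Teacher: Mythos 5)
Your proposal is correct and follows essentially the same route as the paper: analyze a single run of \algIterSC for a guess $\ell$ near $k$ (set sampling to bound degrees, $\apxF-2$ rounds of element sampling with $\delta = (\ell/n)^{1/(\apxF-1)}$, a final offline call), bound the per-guess cost by $\tldO(m(n/\ell)^{1/(\apxF-1)}+n\ell)$, and then use the two-stage sweep with geometric spacing to convert the dependence on $\low$ to one on $k$. A few quantities are stated slightly loosely (the termination value of $\ell$ in Stage~1 is $\Theta(k)$ whereas the estimate $\hat k$ is $O(\rho\log n)\cdot k$; the element-sample size carries an extra $\rho$ factor; the number of Stage~2 guesses is $O(\apxF\rho\log\log n/\eps)$, not $O(\log\log n/\eps)$), but all of these are absorbed by $\tldO(\cdot)$ and do not affect the final bound.
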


To analyze the performance of \algSubSC, first we need to analyze the procedures invoked by \algSubSC: \algIterSC and \offlineSC. The $\offlineSC$ procedure receives as an input a subset of elements $\GSample$ and an estimate on the size of an optimal cover of $\GSample$ using sets in $\sS$. The \offlineSC algorithm first determines all occurrences of $\GSample$ in $\sS$. Then it invokes a black box subroutine that returns a cover of size at most $\rho\ell$ (if there exists a cover of size $\ell$ for $\GSample$) for the reduced \probSetCover instance over $\GSample$. 

Moreover, we assume that all subroutines have access to the \EltOf{} and \SetOf{} oracles, $\card{\sE}$ and $\card{\sS}$. 

\begin{figure}[h]
    \begin{center}
        \begin{algorithmEnv}
            \underline{{\algIterSC}%
               $\pth{\apxF, \eps, \low, \up}$}: $\Bigl.$ \qquad
            \+\\ %
            \LineComment{Try all $(1+{\eps \over 2\apxF\rho})$-approximate
               guesses of $k$}\\
            \For $\ell \in \set{(1+{\eps \over 2\apxF\rho})^{i}\sep \log_{1+{\eps \over 2\apxF\rho}} \low \leq i \leq \log_{1+{\eps \over 2\apxF\rho}} \up}$\\ 
			\>\Do \textbf{in order}: %
            \\
            \> $\sol_\ell \leftarrow$ collection of $\ell$ sets picked uniformly at random\CodeComment{{Set Sampling}}\\%
            \>$\sE_{\rem} \leftarrow \sE \setminus \bigcup_{\range \in \sol_\ell} \range$ \CodeComment{$n\ell$ $\EltOf$}\\%
            \> \Repeat ($\alpha-2$) times \\%
            \>\> $\GSample \leftarrow$ sample of $\sE_{\rem}$ of size
            $\tldO(\rho \ell \left({n \over \ell}\right)^{1\over \apxF-1})$ %
            \\%
            \>\>
            $\CoverA \leftarrow \offlineSC{}(\GSample,\ell)$\\%
            \>\>\If $\CoverA = \Null$ \Then \\%
            \>\>\>\Break \CodeComment{Try the next value of $\ell$}\\%
            \>\> $\sol_\ell \leftarrow \sol_\ell \bigcup\CoverA$\\ 
            \>\>$\sE_{\rem} \leftarrow \sE_{\rem} \setminus \bigcup_{\range \in
               \CoverA}\range$
               \CodeComment{$\rho n\ell$ $\EltOf$}\\%
    		\>\If $\card{\sE_{\rem}} \leq \ell\left({n\over \ell}\right)^{1/(\alpha-1)}$ \CodeComment{Feasibility Test}\\%
            \>\>
            $\CoverA \leftarrow \offlineSC{}(\sE_{\rem},\ell)$\\%
            \>\>\If $\CoverA \neq \Null$ \Then \\%
            \>\>\>$\sol_\ell \leftarrow \sol_\ell \bigcup\CoverA$\\
            \>\>\> \Return $\sol_\ell$
        \end{algorithmEnv}%
    \end{center}
    \vspace{-0.15in}
    \caption{\algIterSC is the main procedure of the \algSubSC algorithm for the \probSetCover problem.}
    \label{fig:iter-sc-alg}
\end{figure}

\begin{figure}[h]
	\begin{center}
    	\begin{algorithmEnv}
    	        \underline{{\offlineSC}%
               $\pth{\GSample, \ell}$}: $\Bigl.$ \qquad
           		 \+\\ %
           	   $\sS_\GSample \leftarrow \emptyset$ \quad \\%
            	\For each element $e \in \GSample$ \Do \quad \\%
            	\> $\sS_{e}\leftarrow$ the collection of sets containing $e$\\%
            	
            	\> $\sS_\GSample \leftarrow \sS_\GSample \cup \sS_{e}$ \\%
            	
            	$\CoverA \leftarrow$ solution of size at most $\rho\ell$ for \prob{Set Cover} on ($\GSample, \sS_{\GSample}$) constructed by the black box solver \\%
    			\LineComment{If there exists no such cover, then $\CoverA = \Null$}\\%
    			\Return $\CoverA$
    	\end{algorithmEnv}
    \end{center}
    \vspace{-0.15in}
    \caption{$\offlineSC{(\GSample, \ell)}$ invokes a black box that returns a cover of size at most 
$\rho\ell$ (if there exists a cover of size $\ell$ for $\GSample$) for the \probSetCover instance that is the projection of $\sS$ over $\GSample$.}
    \label{fig:offline-sc-alg}
\end{figure}

\begin{lemma}\label{lem:offline-set-cover}
Suppose that each $e\in \GSample$ appears in $\tldO({m\over\ell})$ sets of $\sS$ and lets assume that there exists a set of $\ell$ sets in $\sS$ that covers $\GSample$. Then \emph{$\offlineSC(\GSample, \ell)$} returns a cover of size at most $\rho\ell$ of $\GSample$ using $\tldO({m\card{\GSample} \over \ell})$ queries. 
\end{lemma}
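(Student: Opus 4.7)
The plan is to treat the two claims of the lemma---the query complexity bound and the approximation guarantee---separately, both of which follow directly from the structure of \offlineSC together with the stated hypotheses.

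For the query complexity, my argument would be that the only oracle calls \offlineSC issues are in building $\sS_e$ for each $e \in \GSample$; the subsequent black-box solver runs on the already-materialized instance $(\GSample, \sS_\GSample)$ and performs no further queries. To construct $\sS_e$, the algorithm invokes $\SetOf(e, j)$ for $j = 1, 2, \ldots$ until the oracle returns $\bot$, which uses exactly $|\sS_e| + 1$ queries. By the hypothesis that each element of $\GSample$ appears in $\tldO(m/\ell)$ sets, this is $\tldO(m/\ell)$ queries per element, and summing over the $|\GSample|$ elements yields the promised $\tldO(m|\GSample|/\ell)$ bound.

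For correctness, observe that $\sS_\GSample = \bigcup_{e \in \GSample} \sS_e$ consists precisely of those sets in $\sS$ that intersect $\GSample$. Any cover of $\GSample$ using sets from $\sS$ may be assumed without loss of generality to use only such sets, since any set disjoint from $\GSample$ covers nothing and can be discarded. Hence the minimum cover sizes of $(\GSample, \sS_\GSample)$ and of $(\GSample, \sS)$ coincide, and in particular the former is at most $\ell$ by hypothesis. Feeding $(\GSample, \sS_\GSample)$ to the $\rho$-approximate black-box therefore returns a cover of size at most $\rho\ell$, as required.

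There is no substantive obstacle here: the proof is essentially bookkeeping. The one point worth flagging is that the two preconditions of the lemma line up exactly with what each half of the argument needs---the degree bound of $\tldO(m/\ell)$ is precisely what limits the per-element enumeration cost, and the existence of an $\ell$-cover is precisely what certifies that the black-box applied to the reduced family $\sS_\GSample$ succeeds.
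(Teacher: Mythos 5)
Your proof is correct and follows the same approach as the paper's: the query bound comes from enumerating the $\tldO(m/\ell)$ sets via $\SetOf$ for each of the $|\GSample|$ elements. The paper's proof only addresses the query complexity and leaves the approximation guarantee implicit; your additional observation that $\sS_\GSample$ contains every set intersecting $\GSample$ (so restricting to it loses nothing) is a correct and natural filling-in of that gap.
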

\begin{proof}
Since each element of $\GSample$ is contained by $\tldO({m\over \ell})$ sets in $\sS$, the information required to solve the reduced instance on $\GSample$ can be obtained by $\tldO({m\card{\GSample} \over \ell})$ queries (i.e. $\tldO({m\over \ell})$ \SetOf{} query per element in $\GSample$). 
\end{proof}

\begin{lemma}\label{lem:feasible-sol}
The cover constructed by the outer loop of \emph{$\algIterSC(\apxF,\eps, \low, \up)$} with the parameter $\ell>k$, $\sol_{\ell}$, $\whp$ covers $\sE$.
\end{lemma}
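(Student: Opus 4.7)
The plan is to argue, via a high-probability chain of inductive applications of the two sampling lemmas, that at every stage of the inner execution the remaining uncovered set $\sE_{\rem}$ shrinks by the right factor, so that the final feasibility test is triggered and succeeds. Crucially, because $\ell > k$, the input instance admits a cover of size at most $\ell$ on \emph{every} subset of $\sE$; consequently \offlineSC{} never returns $\Null$ in any of the internal calls, so the algorithm does not \texttt{break} out and always reaches the feasibility test stage.

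First I would handle the initial Set Sampling step: by Lemma~\ref{lem:set-sampling} (applied to the $\ell$ randomly sampled sets), with high probability every element remaining in $\sE_{\rem}$ after the initial round occurs in at most $\tldO(m/\ell)$ sets of $\sS$. This justifies later invocations of \offlineSC{} having the query cost claimed in Lemma~\ref{lem:offline-set-cover}, and will be reused when bounding failure probabilities via a union bound.

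Next I would analyze the $(\apxF - 2)$ inner iterations. In each iteration the sampled set $\GSample$ has size $\tldO(\rho \ell (n/\ell)^{1/(\apxF-1)})$; setting this equal to the bound from Lemma~\ref{lem:element-sampling} yields a ``miss rate'' of $\delta = \tldO((\ell/n)^{1/(\apxF-1)})$. Since $\ell > k$ an optimal cover of $\GSample$ has size at most $k \le \ell$, so the $\rho$-approximate \offlineSC{} produces a cover of size $\rho\ell$ that, by Lemma~\ref{lem:element-sampling}, fails to cover only a $\delta$-fraction of the current $\sE_{\rem}$ with high probability. Iterating this bound (and taking a union bound over the $\apxF-2$ iterations, which is affordable since each step already holds whp), the remaining uncovered set satisfies
\[
|\sE_{\rem}| \;\le\; n \cdot \delta^{\apxF-2} \;=\; n \cdot \big(\ell/n\big)^{(\apxF-2)/(\apxF-1)} \;=\; \ell \cdot (n/\ell)^{1/(\apxF-1)},
\]
matching exactly the threshold in the feasibility check.

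Finally, since the feasibility condition $|\sE_{\rem}| \le \ell(n/\ell)^{1/(\apxF-1)}$ is met and $\ell \ge k$ guarantees the existence of a cover of $\sE_{\rem}$ of size at most $\ell$, the concluding call \offlineSC{}$(\sE_{\rem},\ell)$ returns a cover $\CoverA\neq\Null$ of size at most $\rho\ell$; adding it to $\sol_\ell$ completes coverage of $\sE$. The main technical obstacle is to set the hidden polylogarithmic factors in $\delta$ carefully so that (i) the sample size still fits in the advertised $\tldO(\cdot)$ bound, and (ii) each application of the element-sampling lemma succeeds with probability at least $1-n^{-c}$ for large enough $c$, allowing the union bound over the $(\apxF-2)$ iterations to preserve the whp guarantee; once these constants are chosen consistently the argument above goes through.
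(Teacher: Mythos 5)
Your proof is correct and follows essentially the same route as the paper's: apply Set Sampling once, then iterate Element Sampling with $\delta = (\ell/n)^{1/(\alpha-1)}$ for $\alpha-2$ rounds to shrink $|\sE_{\rem}|$ to at most $\ell(n/\ell)^{1/(\alpha-1)}$, and finish with the final $\offlineSC$ call. You are somewhat more explicit than the paper about why the $\ell > k$ hypothesis keeps $\offlineSC$ from returning $\Null$ (so the \texttt{break} never fires) and about carrying the union bound through the iterations, but these are implicit in the paper's argument rather than a genuinely different approach.
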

\begin{proof}
After picking $\ell$ sets uniformly at random, by \prob{Set Sampling} (Lemma~\ref{lem:set-sampling}), \whp~each element that is not covered by the sampled sets appears in $\tldO({m\over \ell})$ sets of $\sS$. 
Next, by \prob{Element Sampling} (Lemma~\ref{lem:element-sampling} with $\delta = \left({\ell\over n}\right)^{1/(\apxF-1)}$), at the end of each {\em inner} iteration, \whp~the number of uncovered elements decreases by a factor of $\left({\ell \over n}\right)^{1/(\apxF-1)}$. Thus after at most $(\apxF-2)$ iterations, $\whp$ less than $\ell\left({n\over \ell}\right)^{1/(\apxF-1)}$ elements remain uncovered. Finally, $\offlineSC$ is invoked on the remaining elements; hence, $\sol_{\ell}$ \whp~covers $\sE$. 
\end{proof}
\noindent

Next we analyze the query complexity and the approximation guarantee of \algIterSC. As we only apply \prob{Element Sampling} and \prob{Set Sampling} polynomially many times, all invocations of the corresponding lemmas during an execution of the algorithm must succeed \whp, so we assume their {\em high probability} guarantees for the proofs in rest of this section. 
\begin{lemma}\label{lem:iter-set-cover}
Given that $\low\leq k \leq {\up \over 1+\eps / (2\apxF\rho)}$, \whp~\emph{$\algIterSC(\apxF,\eps,\low,\up)$} finds a $(\rho\apxF+\eps)$-approximate solution of the input instance using $\tldO\left(\frac{1}{\eps}(m({n\over\low})^{1/(\apxF-1)} + nk)\right)$ queries.
\end{lemma}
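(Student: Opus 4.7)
The plan is to decompose the proof into two parts: (i) correctness and approximation ratio, and (ii) query complexity, and for each part to trace through the behaviour of \algIterSC for each guess $\ell$ that it tries.

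For correctness, I would first observe that the guesses $\ell$ form a geometric sequence with ratio $(1+\eps/(2\apxF\rho))$ sweeping from $\low$ up to $\up$. Since $k \leq \up/(1+\eps/(2\apxF\rho))$, there exists a guess $\ell^\star$ in this sequence satisfying $k \leq \ell^\star \leq (1+\eps/(2\apxF\rho))\, k$. For this $\ell^\star$, the input instance admits a set cover of size $\ell^\star$, so the same property holds for every subproblem passed to \offlineSC\ (restriction of a cover to a subset of elements is still a cover). Thus Lemma~\ref{lemma:feasible-sol} applies, and w.h.p.\ \algIterSC\ reaches the feasibility test with $|\sE_{\rem}| \leq \ell^\star (n/\ell^\star)^{1/(\apxF-1)}$ and returns a solution at this guess (or an even earlier one). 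Because the guesses are tried in \emph{increasing} order, the value $\ell$ at which \algIterSC\ terminates satisfies $\ell \leq \ell^\star$. The returned cover is the union of $\ell$ sampled sets plus at most $\rho\ell$ sets from each of the $\apxF-2$ inner \offlineSC\ calls plus at most $\rho\ell$ sets from the final \offlineSC\ call, so its size is bounded by $\ell + (\apxF-1)\rho\ell \leq \apxF\rho\ell$ using $\rho \geq 1$. Combining with $\ell \leq (1+\eps/(2\apxF\rho))k$ yields $|\sol_\ell| \leq \apxF\rho k + \eps k / 2 \leq (\apxF\rho + \eps) k$.

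For query complexity, I would charge the queries at each guess $\ell$ as follows. Set sampling and element sampling are free. Constructing and maintaining $\sE_{\rem}$ costs $n\ell$ \EltOf\ queries after set sampling and $\tldO(\rho n\ell)$ \EltOf\ queries across all $\apxF-2$ inner iterations. Each call to \offlineSC\ in an inner iteration operates on a sample of size $\tldO(\rho\ell(n/\ell)^{1/(\apxF-1)})$; by Lemma~\ref{lemma:offline-set-cover} together with the set-sampling guarantee (which ensures that every uncovered element appears in $\tldO(m/\ell)$ sets), each such call costs $\tldO(m(n/\ell)^{1/(\apxF-1)})$ queries, and so do the $\apxF-2$ inner calls combined. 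The final feasibility test is a call to \offlineSC\ with $|\sE_{\rem}| \leq \ell(n/\ell)^{1/(\apxF-1)}$, costing another $\tldO(m(n/\ell)^{1/(\apxF-1)})$ queries. Hence the per-guess query cost is $\tldO(n\rho\ell + m(n/\ell)^{1/(\apxF-1)})$.

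Summing over the guesses actually tried (those with $\low \leq \ell \leq \ell^\star$), the number of guesses is $O(\log(\ell^\star/\low)/\log(1+\eps/(2\apxF\rho))) = \tldO(1/\eps)$. The $n\rho\ell$ term forms a geometric series in $\ell$ dominated by its largest term $n\rho\ell^\star = \tldO(nk)$; multiplied by the polylogarithmic overhead of trying the smaller guesses, this contributes $\tldO(nk)$ overall. The $m(n/\ell)^{1/(\apxF-1)}$ term is largest at $\ell = \low$; multiplying by the $\tldO(1/\eps)$ number of guesses contributes $\tldO(m(n/\low)^{1/(\apxF-1)}/\eps)$. Adding the two yields the claimed $\tldO(\frac{1}{\eps}(m(n/\low)^{1/(\apxF-1)} + nk))$ bound.

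The main obstacle I anticipate is book-keeping: verifying that the w.h.p.\ guarantees of Lemmas~\ref{lemma:element-sampling} and~\ref{lemma:set-sampling} can indeed be union-bounded over all polynomially many invocations (across all guesses and inner iterations), and that the geometric-sum argument cleanly separates the two terms $nk$ and $m(n/\low)^{1/(\apxF-1)}$ without picking up stray $1/\eps$ factors on the wrong term. Once these are checked, the combination of the feasibility guarantee at $\ell^\star$, the increasing-order traversal that upper-bounds $\ell$ by $\ell^\star$, and the two geometric sums delivers both the approximation and the query bound.
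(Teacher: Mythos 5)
Your proposal is correct and takes essentially the same approach as the paper: identify the geometric guess $\ell^\star$ bracketing $k$, use the feasibility lemma together with the increasing-order traversal to bound the terminating $\ell$ by $\ell^\star$, count per-guess queries as $\tldO(n\ell + m(n/\ell)^{1/(\apxF-1)})$, and sum over guesses via two series. The only tiny imprecision is attributing the $1/\eps$ factor on the $nk$ term to ``polylogarithmic overhead'' rather than to the denominator of the geometric sum, but this does not affect the final bound you state.
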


\begin{proof}
Let $\ell_k = (1+{\eps \over 2\apxF\rho})^{\ceil{\log_{1+{\eps \over 2\apxF\rho}} k}}$ be the smallest power of $1+{\eps \over 2\apxF\rho}$ greater than or equal to $k$. Note that it is guaranteed that $\ell_k \in [\low, \up]$. By Lemma~\ref{lem:feasible-sol}, \algIterSC terminates with a guess value $\ell \leq \ell_k$. In the following we compute the query complexity of the run of \algIterSC with a parameter $\ell\leq \ell_k$.

\prob{Set Sampling} component picks $\ell$ sets and then update the set of elements that are not covered by those sets, $\sE_{\rem}$, using $O(n\ell)$ $\EltOf$ queries.
Next, in each iteration of the inner loop, the algorithm samples a subset $\GSample$ of size $\tldO\left({\ell ({n/\ell})^{1/(\apxF-1)}}\right)$ from $\sE_{\rem}$. Recall that, by \prob{Set Sampling} (Lemma~\ref{lem:set-sampling}), each $e\in \GSample \subset \sE_{\rem}$ appears in at most $\tldO({m/ \ell})$ sets. Since each element in $\sE_{\rem}$ appears in $\tldO(m/\ell)$, \offlineSC returns a cover $\CoverA$ of size at most $\rho\ell$ using $\tldO\left(m\left({n/\ell}\right)^{1/(\apxF-1)}\right)$ $\SetOf$ queries (Lemma~\ref{lem:offline-set-cover}). 
By the guarantee of \prob{Element Sampling} (Lemma~\ref{lem:element-sampling}), the number of elements in $\sE_{\rem}$ that are not covered by $\CoverA$ is at most $({\ell/n})^{1/(\apxF-1)}\card{\sE_{\rem}}$. Finally, at the end of each inner loop, the algorithm updates the set of uncovered elements $\sE_{\rem}$  by using $\tldO(n\ell )$ $\EltOf$ queries. The Feasibility Test which is passed $\whp$ for $\ell\leq \ell_k$ ensures that the final run of \offlineSC performs $\tldO(m(n/\ell)^{1/(\alpha-1)})$ $\SetOf$ queries. Hence, the total number of queries performed in each iteration of the outer loop of \algIterSC with parameter $\ell\leq \ell_k$ is $\tldO\left(m\left({n/ \ell}\right)^{1/(\apxF-1)} + n\ell\right)$.

By Lemma~\ref{lem:feasible-sol}, if $\ell_{k}\leq \up$, then the outer loop of \algIterSC is executed for $l \leq \ell \leq \ell_k$ before it terminates. Thus, the total number of queries made by \algIterSC is:
\begin{align*}
\sum_{i = \ceil{\log_{1+{\eps \over 2\apxF\rho}}\low}}^{\log_{1+{\eps \over 2\apxF\rho}}\ell_k} \tldO\left(m\left({n\over (1+{\eps \over 2\apxF\rho})^i}\right)^{1\over \apxF-1} + n(1+{\eps \over 2\apxF\rho})^i\right) 
&= \tldO\left(m\left({n\over \low}\right)^{1\over \apxF-1}\left(\log_{1+{\eps\over 2\apxF\rho}}{\ell_k \over \low}\right) + {n\ell_k \over \eps/(\rho\apxF)} \right)\\
&=\tldO\left({1\over \eps}\left(m\left({n\over \low}\right)^{1/(\apxF-1)} + nk\right)\right).
\end{align*}

Now, we show that the number of sets returned by \algIterSC is not more than $(\alpha\rho+\eps)\ell_k$. \prob{Set Sampling} picks $\ell$ sets and each run of $\offlineSC{}$ returns at most $\rho\ell$ sets. 
Thus the size of the solution returned by \algIterSC is at most $(1+(\alpha-1)\rho)\ell_k < (\alpha\rho+\eps)k$.
\end{proof}

Next, we prove the main theorem of the section.

\begin{figure}[h]
    \begin{center}
        \begin{algorithmEnv}
            \underline{{\algSubSC}%
               $\pth{\apxF, \eps}$}: $\Bigl.$ \qquad
            \+\\ %
            
            $\sol \leftarrow \algIterSC(\log n, 1, 1,n)$ \\
            $k' \leftarrow \card{\sol}$ \CodeComment{Find a $\rho \log n$ estimate of $k$.} \\
            \Return $\algIterSC(\apxF, \epsilon, \floor{k' \over \rho\log n},\ceil{k'(1+{\eps \over 2\apxF\rho})})$%
        \end{algorithmEnv}%
    \end{center}
    \vspace{-0.15in}
    \caption{The description of the \algSubSC algorithm.}
    \label{fig:sub-sc-alg}
\end{figure}

\noindent
{\em Proof of Theorem~\ref{thm:sub-set-cover}.}
The algorithm $\algSubSC$ first finds a $(\rho\log n)$-approximate solution of $\prob{Set Cover}(\sE,\sS)$, $\sol$, with $\tldO(m+nk)$ queries by calling $\algIterSC(\log n, 1, 1, n)$. Having that $k\leq k' = \card{\sol}\leq (\rho\log n) k$, the algorithm calls $\algIterSC$ with $\apxF$ as the approximation factor and $[\floor{k'/(\rho\log n)}, \ceil{k'(1+{\eps \over 2\apxF\rho})}]$ as the range containing $k$. By Lemma~\ref{lem:iter-set-cover}, the second call to \algIterSC in \algSubSC returns a $(\apxF\rho+\eps)$-approximate solution of $\prob{Set Cover}(\sE, \sS)$ using the following number of queries:
\begin{align*}
\tldO\left({ 1\over \eps}\left(m\left({n\over {k/(\rho\log n)}}\right)^{1\over\apxF-1} + nk\right)\right) = \tldO\left({1\over \eps}\left(m\left({n\over k}\right)^{1\over \apxF-1} + nk\right)\right).
\end{align*}


\subsection{Second Algorithm: large values of $k$} \label{sec:upperbound-large}
The second algorithm, \algRandSC, works strictly better than \algSubSC for large values of $k$ ($k\geq \sqrt{m}$).  
The advantage of \algRandSC is that it does not need to update the set of uncovered elements at any point and simply avoids the additive $nk$ term in the query complexity bound; the result of Section~\ref{sec:verf} suggests that the $nk$ term may be unavoidable if one wishes to maintain the uncovered elements.
Note that the guarantees of \algRandSC is that at the end of the algorithm, $\whp$ the ground set $\sE$ is covered.
  
The algorithm \algRandSC, given in Figure~\ref{fig:unweighted-alg}, first randomly picks $\eps\ell/3$ sets. By \prob{Set Sampling} (Lemma~\ref{lem:set-sampling}), \whp{} every element that occurs in $\tldOmega({m/(\eps\ell)})$ sets of $\sS$ will be covered by the picked sets. It then solves the \prob{Set Cover} instance over the elements that occur in $\tldO({m/(\eps\ell)})$ sets of $\sS$ by an offline solver of \probSetCover using $\tldO({m/(\eps\ell)})$ queries; note that this set of elements may include some already covered elements. In order to get the promised query complexity, \algRandSC enumerates the guesses $\ell$ of the size of an optimal set cover in the decreasing order. The algorithm returns feasible solutions for $\ell \geq k$ and once it cannot find a feasible solution for $\ell$, it returns the solution constructed for the previous guess of $k$, i.e., $\ell (1+{\eps/(3\rho)})$.
   
Since \algRandSC performs \prob{Set Sampling} for $\tldO(\eps^{-1})$ iterations, $\whp$~the total query complexity of \algRandSC is $\tldO({mn/(k\eps^2)})$. 
 
Note that testing whether the number of occurrences of an element is ${\tldO(m/(\eps\ell))}$ only requires a single query, namely $\SetOf(e,{cm\log n \over \eps\ell})$. 
\begin{figure}[h]
    \begin{center}
        \begin{algorithmEnv}
            \underline{\algRandSC%
               $\pth{\eps}$}: $\Bigl.$ \qquad
            \+\\
            \LineComment{Try all $(1+{\eps \over 3\rho})$-approximate guesses of $k$}\\
            \For $\ell \in \set{(1+{\eps \over 3\rho})^{i}\sep 0\leq i \leq \log_{1+{\eps \over 3\rho}} n}$\\ 
			\> \Do \textbf{in the decreasing order}:
            \\
            \> $\rnd_\ell \leftarrow$ collection of ${\eps\ell\over 3}$ sets picked uniformly at random\CodeComment{Set Sampling}\\%
            \> $\sS_\rare \leftarrow \emptyset$\CodeComment{intersection with {\em rare} elements}\\
            \> \For $e \in \sE$ \Do \quad \\%
            \>\> \If $e$ appears in $< cm\log n\over \eps\ell$ sets
            \Then \CodeComment{Size Test: $\SetOf(e,{cm\log n\over \eps\ell})$} \\%

			\>\>\> $\sS_{e}\leftarrow$ collection of sets containing $e$  \CodeComment{$\tldO({m \over \eps\ell})$ $\SetOf$ queries}\\%
            \>\>\> $\sS_\rare \leftarrow \sS_\rare \cup \sS_{e}$, \quad $\GSample \leftarrow \GSample \bigcup \set{e}$\\
            \> $\CoverA \leftarrow$ solution of \prob{Set Cover($\GSample, \sS_{\rare}$)} returned by a $\rho$-approximate black box algorithm  \\%
                        \>\If $\card{\CoverA} \leq \rho\ell$ \Then \\
                        \>\> $\sol \leftarrow \rnd_{\ell} \cup \CoverA$\\
            \>\Else \Return $\sol$ \CodeComment{solution for the previous value of $\ell$}
        \end{algorithmEnv}%
    \end{center}
    \vspace{-0.15in}
    \caption{A $(\rho+\eps)$-approximation algorithm for the \probSetCover problem. We assume that the algorithm has access to  \EltOf{} and \SetOf{} oracles for $\probSetCover(\sE, \sS)$, as well as $\card{\sE}$ and $\card{\sS}$.}
    \label{fig:unweighted-alg}
\end{figure}

We now prove the desired performance of \algRandSC.

\begin{lemma}\label{lem:approx-rand}
\emph{$\algRandSC$} returns a $(\rho+\eps)$-approximate solution of $\probSetCover(\sE, \sS)$ \whp
\end{lemma}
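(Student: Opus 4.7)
The plan is to show both (i) the returned $\sol$ is a valid cover of $\sE$ with high probability, and (ii) $|\sol| \leq (\rho + \eps) k$, where $k$ is the size of a minimum cover. Let $\ell_k$ denote the smallest power of $(1 + \eps/(3\rho))$ that is at least $k$, so $k \leq \ell_k \leq (1 + \eps/(3\rho)) k$. Since $\algRandSC$ enumerates $\ell$ in \emph{decreasing} order, the key is to understand what happens as $\ell$ crosses the threshold $\ell_k$, and to track which $\ell$ the finally-returned $\sol$ corresponds to.

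For validity, I would fix any iteration with guess value $\ell$ in which the check passes (so that $\sol$ is updated) and argue that $\rnd_\ell \cup \CoverA$ covers $\sE$ with high probability. This rests on two ingredients: (a) \prob{Set Sampling} (Lemma~\ref{lem:set-sampling}) applied to the $\eps\ell/3$ random sets in $\rnd_\ell$ covers every element appearing in $\Omega(m \log n / (\eps\ell))$ sets; (b) for an appropriate constant $c$, the algorithm explicitly enumerates via $\SetOf$ every set containing any ``rare'' element (one appearing in fewer than $cm\log n/(\eps\ell)$ sets), so the reduced instance $(\GSample, \sS_\rare)$ passed to the black box contains every set that could possibly cover a rare element. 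Hence $\CoverA$ covers every rare element, and together with $\rnd_\ell$ covers all of $\sE$. A union bound over the $O(\eps^{-1}\log n)$ iterations preserves the high-probability guarantee.

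For the size bound, the crux is to prove that the smallest $\ell$ at which the check passed --- call it $\ell^*$, so that the returned $\sol$ equals $\rnd_{\ell^*} \cup \CoverA_{\ell^*}$ --- satisfies $\ell^* \leq \ell_k$. I would observe that for every $\ell \geq k$, an optimal cover of $\sE$ of size $k$ also covers $\GSample$, and each of its sets that meets a rare element lies in $\sS_\rare$ by construction; so the sub-instance given to the black box admits a feasible cover of size at most $k$, and the $\rho$-approximate black box returns $|\CoverA| \leq \rho k \leq \rho \ell$, i.e., the check passes. Consequently, any check failure must occur at some $\ell_{\mathrm{fail}} < \ell_k$, hence $\ell_{\mathrm{fail}} \leq \ell_k/(1+\eps/(3\rho))$ (the next power down), so the previous iteration at $\ell^* = \ell_{\mathrm{fail}} (1 + \eps/(3\rho)) \leq \ell_k$ yields the returned solution (and if no check ever fails, then $\ell^* \leq \ell_k$ trivially). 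Combining $|\sol| \leq \eps\ell^*/3 + \rho \ell^* = (\rho + \eps/3)\ell^*$ with $\ell^* \leq \ell_k \leq (1 + \eps/(3\rho)) k$ gives
\[
|\sol| \;\leq\; \bigl(\rho + \tfrac{\eps}{3}\bigr)\bigl(1 + \tfrac{\eps}{3\rho}\bigr) k \;=\; \Bigl(\rho + \tfrac{2\eps}{3} + \tfrac{\eps^2}{9\rho}\Bigr) k \;\leq\; (\rho + \eps) k,
\]
where the last inequality uses $\eps \leq 3\rho$ (satisfied since $\rho \geq 1$).

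The main obstacle I anticipate is the discrete-geometric bookkeeping in the size argument: one must argue that the check \emph{must} pass at $\ell = \ell_k$ even though the algorithm has no a priori knowledge of $k$, and then carefully relate $\ell_{\mathrm{fail}}$ to $\ell^*$ through the next step of the geometric sequence. The step factor $(1+\eps/(3\rho))$ and the sample size $\eps\ell/3$ in the algorithm are chosen precisely so that $(\rho + \eps/3)(1 + \eps/(3\rho)) \leq \rho + \eps$ closes out cleanly; looser choices would force a $(1+O(\eps))$-type slack and break the stated approximation factor.
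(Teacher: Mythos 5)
Your proof is correct and takes essentially the same approach as the paper's: both argue that the black-box check cannot fail while $\ell \geq k$ (because the $k$-set optimal cover, restricted to sets meeting rare elements, is a feasible cover of size $\leq k \leq \ell$ inside the reduced instance $(\GSample,\sS_\rare)$), hence the returned $\sol$ corresponds to some $\ell^* \leq \ell_k \leq (1+\eps/(3\rho))k$, and both close with the computation $(\rho+\eps/3)(1+\eps/(3\rho)) \leq \rho+\eps$. You are a bit more explicit about the bookkeeping (naming $\ell^*$ and $\ell_{\mathrm{fail}}$, and handling the case where the check never fails) and you fold the coverage argument --- \prob{Set Sampling} for frequent elements, exhaustive enumeration of $\sS_\rare$ for rare ones --- into the proof, whereas the paper places that argument in the prose preceding the lemma and devotes the lemma's proof solely to the approximation-ratio bound.
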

\begin{proof}
The algorithm $\algRandSC$ tries to construct set covers of decreasing sizes until it fails. Clearly, if $k\leq \ell$ then the black box algorithm finds a cover of size at most $\rho\ell$ for any subset of $\sE$, because $k$ sets are sufficient to cover $\sE$. In other words, the algorithm does not terminate with $\ell\geq k$. Moreover, since the algorithm terminates when $\ell$ is smaller than $k$, the size of the set cover found by \algRandSC is at most $({\eps\over 3}+\rho)(1+{\eps\over 3\rho})\ell < ({\eps\over 3}+\rho)(1+{\eps\over 3\rho})k < (\rho+\eps)k$.
\end{proof}

\begin{lemma}\label{lem:complexity-rand}
The number of queries made by \emph{$\algRandSC$} is $\tldO({mn\over k\eps^2})$.
\end{lemma}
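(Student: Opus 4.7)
The plan is to decompose the query complexity into (i) the number of iterations of the outer loop and (ii) the query complexity of a single iteration, and then sum using a geometric-series argument. The non-obvious ingredient, which I will need to establish in passing, is that the algorithm terminates while $\ell = \tldOmega(k)$, so that the smallest $\ell$ appearing in the sum is $\tldOmega(k)$; otherwise the geometric series would only give $\tldO(mn/\eps^2)$.

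First I would bound the per-iteration query cost for a given guess $\ell$. The set-sampling step only picks $\eps\ell/3$ set indices uniformly at random and issues no queries. The main loop over $e \in \sE$ issues one $\SetOf(e, cm\log n/(\eps\ell))$ query per element for the size test, and only for the elements declared rare (those occurring in fewer than $cm\log n/(\eps\ell)$ sets) does it enumerate all sets containing $e$ via at most $cm\log n/(\eps\ell) = \tldO(m/(\eps\ell))$ further $\SetOf$ queries. Pessimistically assuming every element is rare, the per-iteration cost is $n + n\cdot \tldO(m/(\eps\ell)) = \tldO(nm/(\eps\ell))$. The black-box invocation on $(\GSample,\sS_\rare)$ operates on the already-collected information and issues no additional oracle queries.

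Next I would bound the range of $\ell$ considered before termination. By \prob{Set Sampling} (Lemma~\ref{lemma:set-sampling}), w.h.p.\ every common element (appearing in $\Omega(m\log n/(\eps\ell))$ sets) is covered by $\rnd_\ell$, while every rare element is included in $\GSample$; thus $\rnd_\ell \cup \CoverA$ covers $\sE$ w.h.p., so $|\CoverA| \geq k - |\rnd_\ell| = k - \eps\ell/3$. The algorithm continues only when $|\CoverA| \leq \rho\ell$, which forces $\ell \geq k/(\rho+\eps/3) = \tldOmega(k)$. Conversely, for $\ell \geq k$ the black box always returns $|\CoverA| \leq \rho k \leq \rho \ell$ so the algorithm does not terminate, as already noted in the proof of Lemma~\ref{lemma:approx-rand}. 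Hence the iterations traversed have $\ell$ ranging geometrically between $\tldOmega(k)$ and $n$, and the number of outer iterations is $O(\log_{1+\eps/(3\rho)}(n/k)) = \tldO(\rho/\eps) = \tldO(1/\eps)$.

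Finally I would sum the per-iteration bound over the sequence $\ell_j = (1+\eps/(3\rho))^j$, obtaining
\[
\sum_j \tldO\!\left(\frac{nm}{\eps\, \ell_j}\right) = \tldO\!\left(\frac{nm}{\eps}\right) \sum_j \frac{1}{\ell_j}.
\]
Since $\{1/\ell_j\}$ is a geometric sequence with common ratio $1+\eps/(3\rho)$ and largest term $1/\ell_{\min} = \tldO(1/k)$, the sum telescopes to $O(\rho/\eps)\cdot \tldO(1/k) = \tldO(1/(\eps k))$, giving a total of $\tldO(mn\rho/(k\eps^2)) = \tldO(mn/(k\eps^2))$ queries as required. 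The main obstacle is precisely the termination bound $\ell_{\min} = \tldOmega(k)$, without which the geometric sum would be dominated by small $\ell$ and the $1/k$ factor would be lost; handling this carefully (covering all of $\sE$ via the union argument) is the only nontrivial step.
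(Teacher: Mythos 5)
Your proof is correct and follows essentially the same approach as the paper's: bound the per-iteration cost by $\tldO(nm/(\eps\ell))$ using the post-set-sampling sparsity guarantee, show via the feasibility of $\rnd_\ell \cup \CoverA$ that the algorithm cannot continue past $\ell = \tldOmega(k)$, and sum over the $\tldO(1/\eps)$ geometrically decaying guesses. The paper states the same termination bound (that successful iterations have $\ell > k/(\rho+\eps)$) and simply bounds each term in the sum by the maximum rather than explicitly telescoping, but the two arguments are equivalent.
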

\begin{proof}
The value of $\ell$ in any {\em successful} iteration of the algorithm is greater than ${k/(\rho+\eps)}$; otherwise, the size of the solution constructed by the algorithm is at most $(\rho+\eps)\ell < k$ which is a contradiction.

\prob{Set Sampling} guarantees that $\whp$ each uncovered element appears in $\tldTheta({m/\eps\ell})$ sets and thus the algorithm needs to perform $\tldO({mn\over \eps\ell})$ \SetOf{} queries to construct $\sS_{\rare}$.
Moreover, the number of required queries in the \emph{size test} step is $O(n)$ because we only need one \SetOf{} query per each element in $\sE$.
Thus, the query complexity of $\algRandSC(\eps)$ is bounded by
\begin{align*}
&\sum_{i = \log_{1+{\eps \over 3\rho}} {k\over \rho+\eps}}^{ \log_{1+{\eps \over 3\rho}}n} \tldO\left(n + {m n\over \eps(1+{\eps \over 3\rho})^i}\right)
= \tldO\left((n + {mn \over \eps k}) \log_{1+{\eps\over 3\rho}}{n \over k}\right) = \tldO\left({mn \over k\eps^2}\right).
\end{align*} 
\end{proof}
\section{Lower Bound for the Cover Verification Problem} \label{sec:verf}

In this section, we give a tight lower bound on a feasibility variant of the \probSetCover problem which we refer to as $\probCoverVerification$. In $\probCoverVerification(\sE,\sS,\sS_k)$, besides a collection of $m$ sets $\sS$ and $n$ elements $\sE$, we are given indices of $k$ sets $\sS_k \subseteq \sS$, and the goal is to determine whether they are covering the whole universe $\sE$ or not. We note that, throughout this section, the parameter $k$ is a candidate for, but not necessarily the value of, the size of the minimum set cover.

A naive approach for this decision problem is to query all elements in the given $k$ sets and then check whether they cover $\sE$ or not; this approach requires $O(nk)$ queries. However, in what follows we show that this approach is tight and no \emph{randomized} protocol can decide whether the given $k$ sets cover the whole universe with probability of success at least $0.9$ using $o(nk)$ queries.    
\begin{theorem}\label{thm:check-cover-lb}
Any (randomized) algorithm for deciding whether a given $k = \Omega(\log n)$ sets covers all elements with probability of success at least $0.9$, requires $\Omega(nk)$ queries.
\end{theorem}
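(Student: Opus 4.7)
My plan is to apply Yao's principle. I will explicitly construct two distributions over \probCoverVerification{} instances: \Yes{}, where the given $k$ sets cover $\sE$, and \No{}, where exactly one element is uncovered. These distributions will differ by a single hidden $\swap$ operation whose footprint is smeared uniformly over the oracle tables, so no individual query can reliably detect it. I then argue that any deterministic algorithm needs $\Omega(nk)$ queries to correctly decide with probability at least $0.9$ on the uniform mixture.

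Construction: Fix $\sE=\{e_1,\ldots,e_n\}$, distinguish the subset $U=\{e_1,\ldots,e_k\}$, and set $\sS=\sS_k\cup\sS_{-k}$ with candidate sets $\sS_k=\{S_1,\ldots,S_k\}$ where each $S_i=\sE\setminus U$ in the base instance; thus $\sS_k$ initially covers exactly $\sE\setminus U$. Build the non-candidate collection $\sS_{-k}$ explicitly via a regular bipartite incidence design so that (a) each $e_i\in U$ lies in $\Theta(m)$ non-candidate sets and (b) every non-candidate set shares $\Omega(n)$ elements with every $S_i$. For each $i\in[k]$, the randomized operation $\sigma_i$ samples $T_i$ uniformly among non-candidate sets containing $e_i$, samples $e'_i$ uniformly from $T_i\cap S_i$, and performs $\swap(e_i,e'_i)$ between $S_i$ and $T_i$, inserting $e_i$ into $S_i$. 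The \Yes{} distribution applies $\sigma_1,\ldots,\sigma_k$ with independent randomness; the \No{} distribution draws $i^*\in[k]$ uniformly and applies $\sigma_i$ for every $i\neq i^*$, so $e_{i^*}$ remains uncovered.

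Detection analysis: Each $\sigma_i$ alters exactly four oracle cells---one each in $\EltOf(S_i,\cdot)$, $\EltOf(T_i,\cdot)$, $\SetOf(e_i,\cdot)$, and $\SetOf(e'_i,\cdot)$---whose coordinates are randomized by the uniform choices of $T_i$ and $e'_i$. Imitating the per-cell accounting of \lemref{p_cell_uniform}, I will show that for any single query $q_t$, $\sum_{i\in[k]}\Pr[q_t\text{ hits a modified cell of }\sigma_i]=O(1/n)$: a query $\EltOf(S_i,j)$ hits with probability $O(1/n)$ from the single index $i$; a query $\SetOf(e_i,\ell)$ with $e_i\in U$ contributes $O(1/m)$; queries to non-candidate sets contribute $O(k/(mn))$ in aggregate because $T_i$ is hidden among $\Theta(m)$ symmetric alternatives; queries to $\SetOf(e,\cdot)$ with $e\notin U$ contribute $O(k/(mn))$ because $e'_i$ is hidden among $\Omega(n)$ symmetric alternatives. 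Union-bounding over $q$ queries yields $\sum_{i\in[k]}\Pr[\alg\text{ detects }\sigma_i\text{ on a \Yes{} instance}]=O(q/n)$, and therefore $\E_{i^*\sim[k]}\Pr[\alg\text{ detects }\sigma_{i^*}]=O(q/(nk))$.

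Conclusion and obstacle: By the standard indistinguishability argument (as in the proof of \thmref{lowerbound-2}), if $\alg$ never probes a cell modified by $\sigma_{i^*}$ in a \Yes{} instance, then its entire query-answer history, and hence its output, is identical on the corresponding $\No(i^*)$ instance. Thus $\Pr_{\No}[\alg=\No]\le\Pr_{\Yes}[\alg=\No]+\E_{i^*}\Pr[\alg\text{ detects }\sigma_{i^*}]$, and requiring success $\ge 0.9$ on both sides forces $O(q/(nk))\ge 0.8$, i.e., $q=\Omega(nk)$. The main obstacle is constructing $\sS_{-k}$ explicitly with enough symmetry that the per-query detection probability is truly $O(1/n)$ uniformly over all query types: a balanced-incomplete-block-like design on the incidence matrix, paired with a careful indexing convention for $\EltOf$ and $\SetOf$ that places the four cells modified by $\sigma_i$ at uniformly-random coordinates within their respective tables, should suffice, with the hypothesis $k=\Omega(\log n)$ providing enough room for the construction to be realizable without index collisions between different $\sigma_i$.
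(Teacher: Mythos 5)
Your high-level plan coincides with the paper's: pair \Yes{} and \No{} instances differing by a single hidden $\swap$ whose oracle cells sit at near-uniformly-random coordinates, bound the per-query detection probability by $O(1/n)$, and conclude via Yao's principle. But the proposal has two real gaps. The first you acknowledge yourself: the explicit incidence design on $\sS_{-k}$ that would make the per-query bound true is never given, only asserted to ``should suffice.'' That is exactly the content the paper supplies with the slab construction --- set $m=n$, partition $\sE$ into $k$ disjoint slabs of $n/k$ elements each, make each slab's query block miss exactly one element, and populate $n/k-1$ swapper blocks so that each slab admits $s=n-k$ possible swaps, exactly one of which is performed. Slab-disjointness yields for free the two symmetry properties your per-query bound secretly requires: distinct swaps touch disjoint oracle cells, and observations about one slab reveal nothing about the hidden swap location in another. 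Your choice $S_i=\sE\setminus U$ for all $i$ together with shared pools for the $e'_i$'s and $T_i$'s gives neither automatically, and in fact you also need to rule out coverage-breaking collisions (e.g., all $\sigma_i$ selecting the same $e'$, leaving that element uncovered in a would-be \Yes{} instance). Second, the swap itself is ill-formed as written: to apply \defref{swap-operation} to $\swap(e_i,e'_i)$ between $S_i$ and $T_i$ you must sample $e'_i\in S_i\setminus T_i$, not $e'_i\in T_i\cap S_i$.

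Beyond the construction, your finishing step is a bare union bound giving $\E_{i^*}\Pr[\alg\text{ detects }\sigma_{i^*}]=O(q/(nk))$, whereas the paper recycles the marbles-and-urns lemmas (Lemmas~\ref{lem:high-threshold-slabs} and~\ref{lem:sim-output}) to show that with fewer than $(1-3/b)sk/b$ queries at least $k/b$ slabs go undetected with probability $0.99$, and the slab whose swap is undone sits among them a $1/b$ fraction of the time. Both routes give $\Omega(nk)$; but for the union bound to be valid under adaptive querying you must argue that the per-query bound holds conditionally on the query-answer history, which is precisely the deferred-decisions step the paper performs and your proposal leaves implicit.
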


While this lower bound does not directly lead to a lower bound on \probSetCover, it suggests that verifying the feasibility of a solution may even be more costly than finding the approximate solution itself; any algorithm bypassing this $\Omega(nk)$ lower bound may not solve \probCoverVerification as a subroutine. 

We prove our lower bound by designing the \Yes and \No instances that are hard to distinguish, such that for a \Yes instance, the union of the given $k$ sets is $\sE$, while for a \No instance, their union only covers $n-1$ elements. Each \Yes instance is indistinguishable from a good fraction of \No instances. Thus any algorithm must unavoidably answer incorrectly on half of these fractions, and fail to reach the desired probability of success. 

\subsection{Underlying Set Structure.}
Our instance contains $n$ sets and $n$ elements (so $m = n$), where the first $k$ sets forms $\sS_k$, the candidate for the set cover we wish to verify. 
We first consider the incidence matrix representation, such that the rows represent the sets and the columns represent the elements. 
We focus on the first $n/k$ elements, and consider a \emph{slab}, composing of $n/k$ columns of the incidence matrix. 
We define a \emph{basic slab} as the structure illustrated in Figure~\ref{fig:basic_slab_swap} (for $n=12$ and $k = 3$), where the cell $(i, j)$ is {\em white} if $e_j \in S_i$, and is {\em gray} otherwise. 
The rows are divided into blocks of size $k$, where first block, the \emph{query block}, contains the rows whose sets we wish to check for coverage; notice that only the last element is not covered. 
More specifically, in a basic slab, the query block contains sets $S_1, \ldots, S_{n/k}$, each of which is equal to $\{e_1, \ldots, e_{n/k-1}\}$. The subsequent rows form the \emph{swapper blocks} each consisting of $n/k$ sets. The $r^{\textrm{th}}$ swapper block consists of sets $S_{(r+1)n/k + 1}, \ldots, S_{(r+2)n/k}$, each of which is equal to $\{e_1, \ldots, e_{n/k}\} \setminus \{e_{r}\}$.
\afterpage{%
\begin{figure}[!h]
    \centering
    \begin{subfigure}[b]{0.55\textwidth}
        \centering
        \begin{tikzpicture}[scale = 0.65]
\pgfmathsetmacro{\r}{3}  
\pgfmathsetmacro{\c}{4}  

\pgfmathsetmacro{\rr}{\r - 1}
\pgfmathsetmacro{\cc}{\c - 1}
\pgfmathsetmacro{\t}{\r * \c}
\foreach \x in {0}
	\fill[lightgray] (\x * \c + \cc, \r * \cc) rectangle (\x * \c + \cc + 1, \r * \cc + \r);
\foreach \x in {0}
	\foreach \y in {1, ..., \cc}
		\fill[lightgray] (\x * \c + \c - \y - 1, \r * \y - \r) rectangle (\x * \c + \c - \y, \r * \y);
\draw[step=1cm, thin] (0,0) grid (\c, \t);
\draw[line width = 0.8mm] (0,0) rectangle (\c, \t);
\foreach \x in {1}
	\draw[line width = 0.6mm] (\x * \c - \c, 0) rectangle (\x * \c, \t);

\foreach \y in {0, ..., \cc}
	\draw[line width = 0.6mm] (0, \y * \r) rectangle (\c, \y * \r + \r);

\draw[decoration={brace, mirror}, decorate] (-1.5,\t) -- node[left=2.5mm] {query block} (-1.5,\t - \r);
\foreach \y in {1, ..., \cc}{
	\draw[decoration={brace, mirror}, decorate] (-1.5,\t - \y * \r - 0.05) -- node[left=2.5mm] {swapper block \y} (-1.5,\t - \y * \r - \r + 0.05);
}
\foreach \x in {1, ..., \c}
	\node at (\x-0.5, 12.5) {$e_{\x}$};
\foreach \y in {1, ..., \t}
	\node at (-0.7, \t - \y + 0.5) {$S_{\y}$};
        \end{tikzpicture}
        \caption{a basic slab}
        \label{fig:slab_basic}
    \end{subfigure}
    \begin{subfigure}[b]{0.4\textwidth}
        \centering
        \begin{tikzpicture}[scale = 0.65]
\pgfmathsetmacro{\r}{3}  
\pgfmathsetmacro{\c}{4}  

\pgfmathsetmacro{\rr}{\r - 1}
\pgfmathsetmacro{\cc}{\c - 1}
\pgfmathsetmacro{\t}{\r * \c}
\foreach \x in {0}
	\fill[lightgray] (\x * \c + \cc, \r * \cc) rectangle (\x * \c + \cc + 1, \r * \cc + \r);
\foreach \x in {0}
	\foreach \y in {1, ..., \cc}
		\fill[lightgray] (\x * \c + \c - \y - 1, \r * \y - \r) rectangle (\x * \c + \c - \y, \r * \y);
\draw[step=1cm, thin] (0,0) grid (\c, \t);
\draw[line width = 0.8mm] (0,0) rectangle (\c, \t);
\foreach \x in {1}
	\draw[line width = 0.6mm] (\x * \c - \c, 0) rectangle (\x * \c, \t);
\foreach \y in {0, ..., \cc}
	\draw[line width = 0.6mm] (0, \y * \r) rectangle (\c, \y * \r + \r);
\foreach \x in {1, ..., \c}
	\node at (\x-0.5, 12.5) {$e_{\x}$};
\foreach \y in {1, ..., \t}
	\node at (-0.8, \t - \y + 0.5) {$S_{\y}$};

\foreach \p/\x/\y in {1/3/2}{
	\pgfmathsetmacro{\pp}{\p - 1}
	\pgfmathsetmacro{\xx}{\x - 1}
	\pgfmathsetmacro{\yy}{\y - 1}
	\draw[line width = 1mm, red, fill = lightgray, rounded corners = 0.5mm] (\pp * \c + \yy, \t - \xx) rectangle (\pp * \c + \yy + 1, \t - \xx - 1);
	\draw[line width = 1mm, red, fill = white, rounded corners = 0.5mm] (\p * \c - 1, \t - \xx) rectangle (\p * \c, \t - \xx - 1);
	\draw[line width = 1mm, red, fill = white, rounded corners = 0.5mm] (\pp * \c + \yy, \t - \y * \r - \xx) rectangle (\pp * \c + \yy + 1, \t - \y * \r - \xx - 1);
	\draw[line width = 1mm, red, fill = lightgray, rounded corners = 0.5mm] (\p * \c - 1, \t - \y * \r - \xx) rectangle (\p * \c, \t - \y * \r - \xx - 1);
}

\pgfmathsetmacro{\w}{0.15}
\draw[blue, dashed, very thick, rounded corners=\w*1cm] ( - \w, \t + \w) rectangle (\c - 1 + \w, \t - \r - \w);
        \end{tikzpicture}
        \caption{the slab after performing a $(3, 2)$-swap}
        \label{fig:slab_swapped}
    \end{subfigure}
	\caption{A basic slab and an example of a swapping operation.}
	\label{fig:basic_slab_swap}
\end{figure}
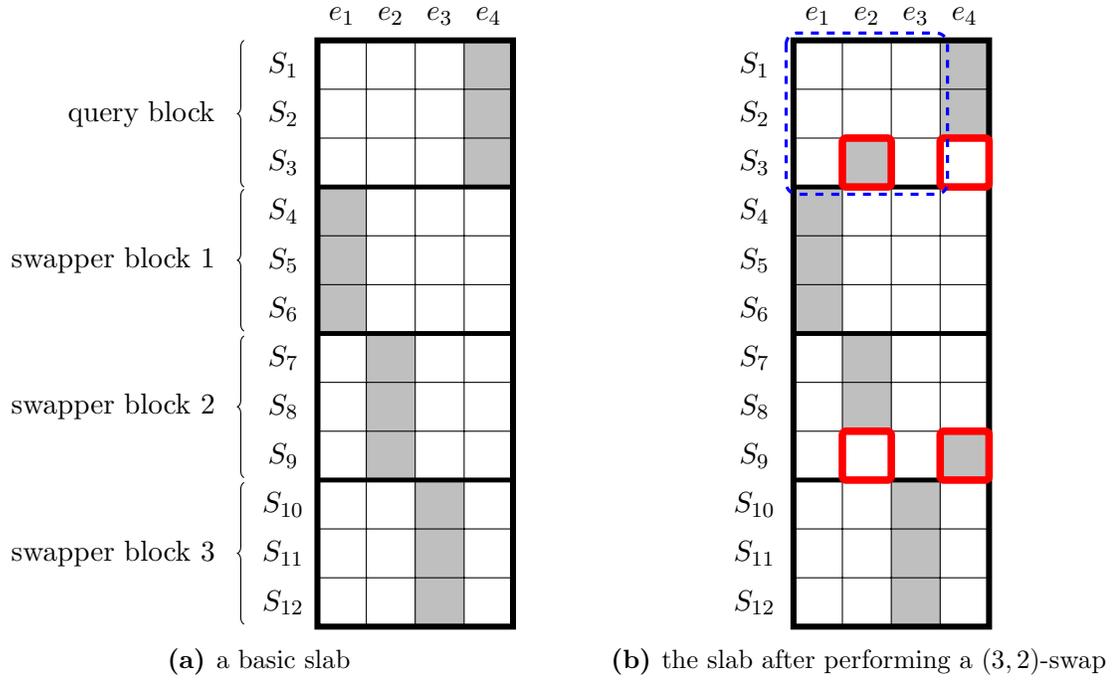

\vspace{10mm}

\begin{figure}[!h]
    \centering
    \begin{tikzpicture}[scale = 0.65]
\pgfmathsetmacro{\r}{3}  
\pgfmathsetmacro{\c}{4}  

\pgfmathsetmacro{\rr}{\r - 1}
\pgfmathsetmacro{\cc}{\c - 1}
\pgfmathsetmacro{\t}{\r * \c}
\foreach \x in {0, ..., \rr}
	\fill[lightgray] (\x * \c + \cc, \r * \cc) rectangle (\x * \c + \cc + 1, \r * \cc + \r);
\foreach \x in {0, ..., \rr}
	\foreach \y in {1, ..., \cc}
		\fill[lightgray] (\x * \c + \c - \y - 1, \r * \y - \r) rectangle (\x * \c + \c - \y, \r * \y);
\draw[step=1cm, thin] (0,0) grid (\t, \t);
\draw[line width = 0.8mm] (0,0) rectangle (\t, \t);
\foreach \x in {1, ..., \r}{
	\draw[line width = 0.6mm] (\x * \c - \c, 0) rectangle (\x * \c, \t);
	\draw[decoration={brace},decorate] (\x * \c - \c + 0.05,\t+1) -- node[above=2.5mm] {slab \x} (\x * \c - 0.05,\t+1);
}

\foreach \y in {0, ..., \cc}
	\draw[line width = 0.6mm] (0, \y * \r) rectangle (\t, \y * \r + \r);
\foreach \x in {1, ..., \t}
	\node at (\x-0.5, 12.5) {$e_{\x}$};
\foreach \y in {1, ..., \t}
	\node at (-0.8, \t - \y + 0.5) {$S_{\y}$};

\foreach \p/\x/\y in {1/3/2, 2/1/2, 3/1/3}{
	\pgfmathsetmacro{\pp}{\p - 1}
	\pgfmathsetmacro{\xx}{\x - 1}
	\pgfmathsetmacro{\yy}{\y - 1}
	\draw[line width = 1mm, red, fill = lightgray, rounded corners = 0.5mm] (\pp * \c + \yy, \t - \xx) rectangle (\pp * \c + \yy + 1, \t - \xx - 1);
	\draw[line width = 1mm, red, fill = white, rounded corners = 0.5mm] (\p * \c - 1, \t - \xx) rectangle (\p * \c, \t - \xx - 1);
	\draw[line width = 1mm, red, fill = white, rounded corners = 0.5mm] (\pp * \c + \yy, \t - \y * \r - \xx) rectangle (\pp * \c + \yy + 1, \t - \y * \r - \xx - 1);
	\draw[line width = 1mm, red, fill = lightgray, rounded corners = 0.5mm] (\p * \c - 1, \t - \y * \r - \xx) rectangle (\p * \c, \t - \y * \r - \xx - 1);
}
    \end{tikzpicture}
    \caption{A example structure of a \Yes instance; all elements are covered by the first $3$ sets.}
    \label{fig:slab_construction}
\end{figure}
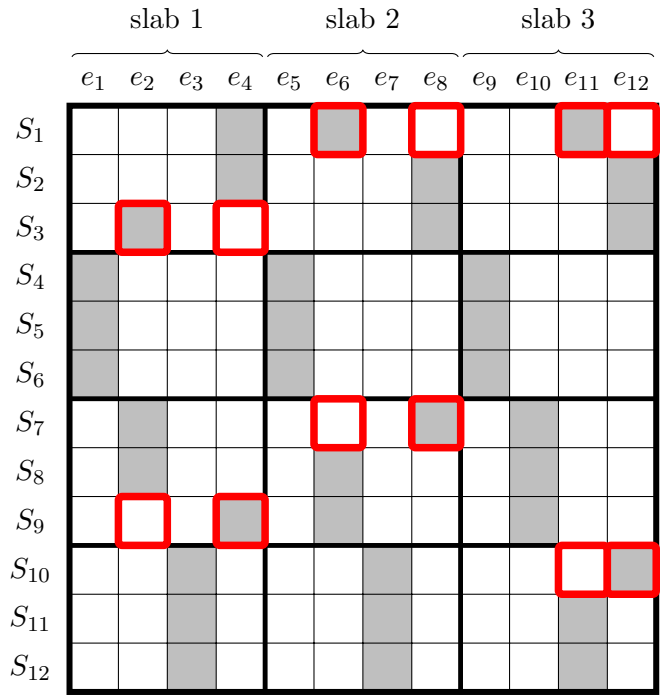

\clearpage
}
We perform one swap in this slab. Consider a parameter $(x, y)$ representing the index of a white cell within the query block. We exchange the color of this white cell with the gray cell on the same row, and similarly exchange the same pair of cells on swapper block $y$. An example is given in Figure~\ref{fig:basic_slab_swap}; the dashed blue rectangle corresponds to the indices parameterizing possible swaps, and the red squares mark the modified cells. This modification corresponds to a single $\swap$ operation; in this example, choosing the index $(3, 2)$ swaps $(e_2, e_4)$ between $S_3$ and $S_9$. Observe that there are $k \times (n/k  - 1) = n-k$ possible swaps on a single slab, and any single swap allows the query sets to cover all $n/k$ elements.

Lastly, we may create the full instance by placing all $k$ slabs together, as shown in Figure~\ref{fig:slab_construction}, shifting the elements' indices as necessary. The structure of our sets may be specified solely by the swaps made on these slabs. We define the structure of our instances as follows.
\begin{compactitem}
\item For a \Yes instance, we make one random swap on each slab. This allows the first $k$ sets to cover all elements.
\item For a \No instance, we make one random swap on each slab except for exactly one of them. In that slab, the last element is not covered by any of the first $k$ sets.
\end{compactitem}


Now, to properly define an instance, we must describe our structure via $\EltOf$ and $\SetOf$. We first create a temporary instance consisting of $k$ basic slabs, where none of the cells are swapped. Create $\EltOf$ and $\SetOf$ lists by sorting each list in an increasing order of indices. Each instance from the above construction can then be obtained by applying up to $k$ swaps on this temporary instance.
Figure~\ref{fig:slab_sample} provides a sample realization of a basic slab with $\EltOf$ and $\SetOf$, as well as a sample result of applying a swap on this basic slab; these correspond to the incidence matrices in Figure~\ref{fig:slab_basic} and Figure~\ref{fig:slab_swapped}, respectively. Such a construction can be extended to include all $k$ slabs. Observe here that no two distinct swaps modify the same entry; that is, the swaps do not interfere with one another on these two functions. We also note that many entries do not participate in any swap.

\afterpage{%
\renewcommand{\arraystretch}{1.5}
\begin{figure}[!h]
\centering
\begin{tabular}{p{1cm} c|c c c p{1cm} p{1cm} c|c c c p{1cm}}
  \multicolumn{6}{c}{\textbf{Before:} $\EltOf$ table for a basic slab} &\multicolumn{6}{c}{\textbf{After:} $\EltOf$ table after applying a swap}\\
  & $\EltOf$ & $1$ & $2$ & $3$ & & & $\EltOf$ & $1$ & $2$ & $3$ & \\ \hhline{~----~~----~}
  & $S_1$ & $e_1$ & $e_2$ & $e_3$ & & & $S_1$ & $e_1$ & $e_2$ & $e_3$ & \\ 
  & $S_2$ & $e_1$ & $e_2$ & $e_3$ & & & $S_2$ & $e_1$ & $e_2$ & $e_3$ & \\ 
  & $S_3$ & $e_1$ & $e_2$ & $e_3$ & & & $S_3$ & $e_1$ & \cellcolor{red}$e_4$ & $e_3$ & \\ 
  & $S_4$ & $e_2$ & $e_3$ & $e_4$ & & & $S_4$ & $e_2$ & $e_3$ & $e_4$ \\ 
  & $S_5$ & $e_2$ & $e_3$ & $e_4$ & & & $S_5$ & $e_2$ & $e_3$ & $e_4$ \\ 
  & $S_6$ & $e_2$ & $e_3$ & $e_4$ & & & $S_6$ & $e_2$ & $e_3$ & $e_4$ \\ 
  & $S_7$ & $e_1$ & $e_3$ & $e_4$ & & & $S_7$ & $e_1$ & $e_3$ & $e_4$ \\ 
  & $S_8$ & $e_1$ & $e_3$ & $e_4$ & & & $S_8$ & $e_1$ & $e_3$ & $e_4$ \\ 
  & $S_9$ & $e_1$ & $e_3$ & $e_4$ & & & $S_9$ & $e_1$ & $e_3$ & \cellcolor{red}$e_2$ \\ 
  & $S_{10}$ & $e_1$ & $e_2$ & $e_4$ & & & $S_{10}$ & $e_1$ & $e_2$ & $e_4$ \\ 
  & $S_{11}$ & $e_1$ & $e_2$ & $e_4$ & & & $S_{11}$ & $e_1$ & $e_2$ & $e_4$ \\ 
  & $S_{12}$ & $e_1$ & $e_2$ & $e_4$ & & & $S_{12}$ & $e_1$ & $e_2$ & $e_4$ \\ \\
\end{tabular}
\begin{tabular}{c|c c c c c c c c c}
 \multicolumn{10}{c}{\textbf{Before:} $\SetOf$ table for a basic slab} \\
 $\SetOf$ & $1$ & $2$ & $3$ & $4$ & $5$ & $6$ & $7$ & $8$ & $9$ \\ \hhline{----------}
 $e_1$ & $S_1$ & $S_2$ & $S_3$ & $S_7$ & $S_8$ & $S_9$ & $S_{10}$ & $S_{11}$ & $S_{12}$ \\
 $e_2$ & $S_1$ & $S_2$ & $S_3$ & $S_4$ & $S_5$ & $S_6$ & $S_{10}$ & $S_{11}$ & $S_{12}$ \\
 $e_3$ & $S_1$ & $S_2$ & $S_3$ & $S_4$ & $S_5$ & $S_6$ & $S_7$ & $S_8$ & $S_9$ \\
 $e_4$ & $S_4$ & $S_5$ & $S_6$ & $S_7$ & $S_8$ & $S_9$ & $S_{10}$ & $S_{11}$ & $S_{12}$ \\ \multicolumn{10}{c}{} \\
\end{tabular}
\begin{tabular}{c|c c c c c c c c c}
 \multicolumn{10}{c}{\textbf{After:} $\SetOf$ table after applying a swap} \\
 $\SetOf$ & $1$ & $2$ & $3$ & $4$ & $5$ & $6$ & $7$ & $8$ & $9$ \\ \hhline{----------}
 $e_1$ & $S_1$ & $S_2$ & $S_3$ & $S_7$ & $S_8$ & $S_9$ & $S_{10}$ & $S_{11}$ & $S_{12}$ \\
 $e_2$ & $S_1$ & $S_2$ & \cellcolor{red}$S_9$ & $S_4$ & $S_5$ & $S_6$ & $S_{10}$ & $S_{11}$ & $S_{12}$ \\
 $e_3$ & $S_1$ & $S_2$ & $S_3$ & $S_4$ & $S_5$ & $S_6$ & $S_7$ & $S_8$ & $S_9$ \\
 $e_4$ & $S_4$ & $S_5$ & $S_6$ & $S_7$ & $S_8$ & \cellcolor{red}$S_3$ & $S_{10}$ & $S_{11}$ & $S_{12}$ \\
\end{tabular}
\vspace{5mm}
\caption{Tables illustrating the representation of a slab under $\EltOf$ and $\SetOf$ before and after a swap; cells modified by $\swap(e_2, e_4) $ between $S_3$ and $S_9$ are highlighted in red.}
\label{fig:slab_sample}
\end{figure}
\clearpage
}

\subsection{Proof of Theorem \ref{thm:check-cover-lb}.}

Observe that according to our instance construction, the algorithm may verify, with a single query, whether a certain swap occurs in a certain slab. Namely, it is sufficient to query an entry of $\EltOf$ or $\SetOf$ that would have been modified by that swap, and check whether it is actually modified or not. For simplicity, we assume that the algorithm has the knowledge of our construction. Further, without loss of generality, the algorithm does not make multiple queries about the same swap, or make a query that is not corresponding to any swap. 

We employ Yao's principle as follows: to prove a lower bound for randomized algorithms, we show a lower bound for any deterministic algorithm on a fixed distribution of input instances. Let $s = n-k$ be the number of possible swaps in each slab; assume $s = \Theta(n)$. We define our distribution of instances as follows: each of the $s^k$ possible \Yes instances occurs with probability $1/(2s^k)$, and each of the $ks^{k-1}$ possible \No instances occurs with probability $1/(2ks^{k-1})$. Equivalently speaking, we create a random \Yes instance by making one swap on each basic slab. Then we make a coin flip: with probability $1/2$ we pick a random slab and undo the swap on that slab to obtain a \No instance; otherwise we leave it as a \Yes instance. To prove by contradiction, assume there exists a deterministic algorithm that solves the \probCoverVerification problem over this distribution of instances with $r = o(sk)$ queries.

Consider the \Yes instances portion of the distribution, and observe that we may alternatively interpret the random process generating them as as follows. For each slab, one of its $s$ possible swaps is chosen uniformly at random. This condition again follows the scenario considered in Section~\ref{sec:ind-reduction}: we are given $k$ urns (slabs) of each consisting of $s$ marbles (possible swap locations), and aim to draw the $\red$ marble (swapped entry) from a large fraction of these urns. Following the proof of Lemmas~\ref{lem:high-threshold-slabs}-\ref{lem:sim-output}, we obtain that if the total number of queries made by the algorithm is less than $(1-{3\over b}){sk \over b}$, then with probability at least $0.99$, the algorithm will not see any swaps from at least ${k\over b}$ slabs.

Then, consider the corresponding \No instances obtained by undoing the swap in one of the slabs of the \Yes instance. Suppose that the deterministic algorithm makes less than $(1-{3\over b}){sk \over b}$ queries, then for a fraction of $0.99$ of all possible tuples $\sT$, the output of the \Yes instance is the same as the output of ${1\over b}$ fraction of \No instances, namely when the slab containing no swap is one of the ${k \over b}$ slabs that the algorithm has not detected a swap in the corresponding \Yes instance; the algorithm must answer incorrectly on half of the corresponding weight in our distribution of input instances. Thus the probability of success for any algorithm with less than $(1-{3\over b}){sk \over b}$ queries is at most
\begin{align*}
1-\Prob{|\sT_{\mathrm{high}}| \geq (1-{2\over b})k} ({1 \over b}) ({1 \over 2}) \leq 1 - {0.495\over b} < 0.9,
\end{align*} 
for a sufficiently small constant $b > 3$ (e.g. $b=4$). As $s = \Theta(n)$ and by Yao's principle, this implies the lower bound of $\Omega(nk)$ for the \probCoverVerification problem.

\section*{Acknowledgment}
We would like to thank Jonathan Ullman for many helpful discussions.

\appendix
\section{Omitted Proofs from Section~\ref{sec:simple-lb}}\label{sec:missing-proofs}

\begin{lemma}\label{lem:solution-size}
With probability at least $1-m^{-1}$ over $\sS\sim \binstF(\sE,p_0)$, the size of the minimum set cover of the instance $(\sS, \sE)$ is greater than $2$.
\end{lemma}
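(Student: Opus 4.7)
The plan is to show that, with high probability over $\sS \sim \binstF(\sE, p_0)$, no collection of at most two sets covers $\sE$, which immediately implies that the minimum cover size is at least $3$.

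First I would fix an arbitrary pair of sets $S_i, S_j \in \sS$. For any element $e \in \sE$, the events $e \notin S_i$ and $e \notin S_j$ are independent, each of probability $p_0$, so $\Prob{e \notin S_i \cup S_j} = p_0^2 = 9\log m / n$. Since the elements are also mutually independent, the probability that $S_i \cup S_j = \sE$ is
\begin{align*}
(1 - p_0^2)^n \leq e^{-p_0^2 n} = e^{-9 \log m} = m^{-9}.
\end{align*}
A single set covers $\sE$ with even smaller probability $(1 - p_0)^n$, so the binding case is pairs.

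Next I would apply a union bound over all $\binom{m}{2} < m^2$ pairs (and the $m$ singletons, which contribute negligibly). The probability that some collection of at most two sets covers $\sE$ is at most $m^2 \cdot m^{-9} + m \cdot (1-p_0)^n = O(m^{-7}) \leq m^{-1}$ for all sufficiently large $m$. Equivalently, with probability at least $1 - m^{-1}$, every pair $S_i, S_j$ leaves at least one element uncovered, so the minimum set cover has size strictly greater than $2$.

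There is no real obstacle here — the argument is a direct Chernoff-free computation followed by a union bound, relying only on the independence of the entries in the incidence matrix. The only thing to double-check is the assumption $p_0 \leq 1/2$, which is guaranteed since $\log m \leq n/c$ for sufficiently large $c$ as stated right after the definition of $\binstF(\sE, p_0)$.
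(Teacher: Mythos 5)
Your proof is correct and follows essentially the same approach as the paper: compute the probability $p_0^2 = 9\log m/n$ that a fixed element is missed by a fixed pair of sets, conclude that a fixed pair covers $\sE$ with probability at most $m^{-9}$, and union-bound over all pairs. The only difference is that you spell out the union bound (over $\binom{m}{2}$ pairs plus singletons) more explicitly than the paper does, which is a welcome clarification but not a different argument.
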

\begin{proof}
The probability that an element $e\in\sE$ is covered by two sets selected from $\sS$ is at most:
\begin{align*}
\pr[e  \in S_1\cup S_2 ] = 1 - p_{0}^2 = 1-{9\log m \over n}.
\end{align*} 
Thus, the probability that $S_1 \cup S_2$ covers all elements in $\sE$ is at most $(1-{9\log m \over n})^n < m^{-9}$.
Applying the union bound, with probability at least $1-m^{-1}$ the size of optimal set cover is greater than $2$. 
\end{proof}

\begin{lemma}\label{lem:missing_elem_size}
Let $S_1$ and $S_2$ be two sets in $\sS$ where $\sS\sim \binstF(\sE,p_0)$. Then with probability at least $1-m^{-1}$, $\card{\sE\setminus(S_1 \cup S_2)} \leq 18\log m$.
\end{lemma}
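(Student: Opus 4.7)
\medskip
\noindent\textbf{Proof plan.} Fix two sets $S_1,S_2\in\sS$ drawn according to $\binstF(\sE,p_0)$. For each element $e\in\sE$, the event $e\notin S_1\cup S_2$ happens if and only if $e$ is absent from both $S_1$ and $S_2$, which occurs independently with probability $p_0^2=9\log m/n$. Consequently, the random variable $X\triangleq\card{\sE\setminus(S_1\cup S_2)}$ is a sum of $n$ independent Bernoulli indicators with parameter $p_0^2$, and its expectation is exactly $\mu=np_0^2=9\log m$.

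\medskip
\noindent The plan is to conclude by a standard multiplicative Chernoff bound applied to $X$ with deviation factor $2$, namely
\begin{align*}
\Prob{X\ge 2\mu}\le \exp\!\pth{-\tfrac{\mu}{3}}=\exp\!\pth{-3\log m}=m^{-3}\le m^{-1},
\end{align*}
which gives $X\le 18\log m$ with probability at least $1-m^{-1}$, as required. The only care needed is to make sure the Bernoulli indicators used in the Chernoff bound are genuinely independent: this follows because, in the construction of $\binstF(\sE,p_0)$, each set--element pair is decided independently, so the indicators $\mathbf{1}[e\notin S_1]\cdot\mathbf{1}[e\notin S_2]$ for distinct elements $e$ are independent.

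\medskip
\noindent There is no real obstacle here; the only minor subtlety is choosing a Chernoff variant that yields a bound comfortably smaller than $m^{-1}$ (any constant factor between the expectation $9\log m$ and the target $18\log m$ works, and I chose the factor $2$ precisely so that the resulting exponent $-\mu/3=-3\log m$ gives the desired tail). Note that this lemma speaks about a \emph{fixed} pair $S_1,S_2$; the analogous statement taken over \emph{all} pairs (which is what is needed for property~(\ref{item:mi-uncovered-elem-size}) of \defref{median-instance}) will follow by a union bound over the $\binom{m}{2}<m^2$ pairs in a subsequent step, since the bound $m^{-3}$ derived above absorbs this extra factor and still leaves failure probability $O(m^{-1})$.
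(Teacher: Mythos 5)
Your proof is correct and takes essentially the same approach as the paper: compute $\Prob{e\notin S_1\cup S_2}=p_0^2=9\log m/n$ per element, note the expectation is $9\log m$, and apply a multiplicative Chernoff bound with deviation factor $2$ to get $m^{-3}$. The one minor presentational difference is that the paper folds the union bound over all $\binom{m}{2}<m^2$ pairs directly into its concluding sentence (which is why the lemma states $1-m^{-1}$ rather than $1-m^{-3}$), whereas you defer it as a subsequent step — both are fine since $m^{-3}\le m^{-1}$ makes the fixed-pair reading trivially true as well.
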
      
\begin{proof}
For an element $e$, $\pr[ e\notin S_1 \cup S_2] = p_0^2 = {9\log m \over n}$. So, $\mathbb{E}[\card{\sE\setminus(S_1 \cup S_2)}] = 9\log m$. By Chernoff bound, $\pr[ \card{\sE\setminus(S_1 \cup S_2)}\geq 18 \log m]$ is at most $e^{-9\log m/3}\leq m^{-3}$. Thus with probability at least $1-m^{-1}$, for any pair of sets in $\sS$, the number of element not covered by their union is at most $18\log m$. 
\end{proof}

\begin{lemma}\label{lem:intersection_size}
Let $S_1$ and $S_2$ be two sets in $\sS$ where $\sS \sim \binstF(\sE, p_0)$. Then $\card{S_1 \cap S_2} \geq n/8$ with probability at least $1-m^{-1}$.
\end{lemma}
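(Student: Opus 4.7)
The plan is to mirror the structure of the proofs of Lemmas~\ref{lem:solution-size} and \ref{lem:missing_elem_size}: derive a per-element inclusion probability, compute the expected intersection size, apply a Chernoff bound to a single pair, and then union bound over all pairs of sets to upgrade to the ``for any two sets'' statement (which is really what property~(\ref{item:mi-intersection-size}) of median instances requires).

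First I would observe that for any fixed element $e \in \sE$ and any two fixed sets $S_1, S_2 \in \sS$, independence of the entries of the incidence matrix under $\binstF(\sE, p_0)$ gives
\[
\Prob{e \in S_1 \cap S_2} = (1 - p_0)^2 \geq 1/4,
\]
using the standing assumption $p_0 \leq 1/2$. Summing over the $n$ elements, $\E[\card{S_1 \cap S_2}] \geq n/4$, and moreover $\card{S_1 \cap S_2}$ is a sum of $n$ independent Bernoulli random variables.

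Next I would apply a multiplicative Chernoff bound with deviation parameter $1/2$: writing $\mu = \E[\card{S_1 \cap S_2}] \geq n/4$,
\[
\Prob{\card{S_1 \cap S_2} \leq n/8} \leq \Prob{\card{S_1 \cap S_2} \leq \mu/2} \leq e^{-\mu/8} \leq e^{-n/32}.
\]
Using the standing assumption that $\log m \leq n/c$ for a sufficiently large constant $c$ (in particular $n \geq 96 \ln m$ suffices), this is at most $m^{-3}$.

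Finally, I would take a union bound over the $\binom{m}{2} \leq m^2$ pairs of sets in $\sS$ to conclude that with probability at least $1 - m^{-1}$, every pair $S_1, S_2 \in \sS$ satisfies $\card{S_1 \cap S_2} \geq n/8$. No single step is a real obstacle here; the only mild care needed is to check that our choice of $p_0 = \sqrt{9 \log m / n}$, under the regime $\log m \leq n/c$, makes $(1-p_0)^2$ safely bounded away from $0$ and makes $e^{-n/32}$ polynomially small in $m$, so that the union bound absorbs the quadratic factor from enumerating pairs of sets.
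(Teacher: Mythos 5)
Your proof is correct and matches the paper's approach: both compute $\Prob{e \in S_1 \cap S_2} \geq 1/4$ from $p_0 \leq 1/2$, take expectation over the $n$ elements, apply a multiplicative Chernoff bound to a fixed pair, and union bound over pairs of sets. You are a bit more explicit than the paper in two places (writing $(1-p_0)^2 \geq 1/4$ directly rather than the paper's ``greater than other cases'' argument, and spelling out the $e^{-n/32} \leq m^{-3}$ step), and you avoid a minor typo in the paper's statement of the Chernoff step, but the argument is essentially identical.
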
      
\begin{proof}
For each element $e$, it is either covered by both $S_1,S_2$, one of $S_1, S_2$ or none of them. Since $p_0\leq 1/2$, the probability that an element is covered by both sets is greater than other cases, i.e., $\pr\left[ e \in S_1 \cap S_2 \right] > 1/4$. Thus, $\mathbb{E}[\card{\sE\setminus(S_1 \cap S_2)}] > n/4$. By Chernoff bound, $\pr[ \card{\sE\setminus(S_1 \cap S_2)} \leq n/8]$ is exponentially small. Thus with probability at least $1-m^{-1}$, the intersection of any pairs of sets in $\sS$ is greater than $n/8$.
\end{proof}

\begin{lemma}\label{lem:candidate-swapping}
Suppose that $\sS \sim \binstF(\sE, p_0)$ and let $e, e'$ be two elements in $\sE$. With probability at least $1-m^{-1}$, the number of  sets $S\in \sS$ such that $e\in S$ but $e'\notin S$ is at least ${m\sqrt{9\log m} \over 4\sqrt{n}}$. 
\end{lemma}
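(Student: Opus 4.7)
\textbf{Proof proposal for Lemma~\ref{lem:candidate-swapping}.}

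The plan is to apply a standard Chernoff concentration argument to the binomial random variable counting the qualifying sets. Fix the two distinct elements $e,e' \in \sE$. For each set $S \in \sS$, the events $e \in S$ and $e' \in S$ are independent by the construction of $\binstF(\sE,p_0)$, so
\[
\Pr\!\bigl[e \in S \text{ and } e' \notin S\bigr] = (1-p_0)\, p_0 \ \geq\ \tfrac{p_0}{2},
\]
using the assumption that $p_0 \leq 1/2$ noted just above Definition~\ref{def:median-instance}. Moreover, these events are mutually independent across the $m$ sets in $\sS$.

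Let $X$ be the number of sets $S \in \sS$ with $e \in S$ and $e' \notin S$; then $X$ is a sum of $m$ independent Bernoulli variables with success probability $(1-p_0)p_0$, and
\[
\mu \ :=\ \E[X] \ =\ m(1-p_0)p_0 \ \geq\ \frac{m\,p_0}{2} \ =\ \frac{m\sqrt{9\log m}}{2\sqrt{n}}.
\]
The first step is simply to substitute the value of $p_0 = \sqrt{9\log m/n}$ and peel off the factor $1-p_0 \geq 1/2$ to obtain this lower bound on the mean.

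Next, I would apply the multiplicative Chernoff bound in the form $\Pr[X \leq (1-1/2)\mu] \leq \exp(-\mu/8)$. To turn this into the claimed $1 - m^{-1}$ confidence, it suffices to verify $\mu/8 \geq \log m$, i.e.\ $\frac{m\sqrt{9\log m}}{16\sqrt{n}} \geq \log m$. This follows from the standing assumption that $m \geq n$ and $\log m \leq n/c$ for a sufficiently large constant $c$ (the same hypothesis used right before the construction of $\binst$), which makes $m/\sqrt{n} \geq \sqrt{n}$ dominate $\sqrt{\log m}$. Combining these,
\[
\Pr\!\left[X \leq \frac{m\sqrt{9\log m}}{4\sqrt{n}}\right] \ \leq\ \Pr\!\left[X \leq \frac{\mu}{2}\right] \ \leq\ \exp(-\mu/8) \ \leq\ m^{-1},
\]
which is exactly the conclusion of the lemma.

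No step should present a real obstacle; the only mildly delicate point is tracking the constants to ensure $\mu/8 \geq \log m$ under the paper's hypotheses on $m,n$, which I would do explicitly using $m \geq n$ and $n \geq c\log m$. The argument is completely analogous in spirit to Lemmas~\ref{lem:solution-size}--\ref{lem:intersection_size} preceding it and is one of the routine concentration steps used to establish Lemma~\ref{lem:median-existance} via the union bound over all pairs $(e,e')$.
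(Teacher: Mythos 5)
Your proof is correct and follows essentially the same route as the paper's: compute $\Pr[e\in S, e'\notin S]=(1-p_0)p_0\ge p_0/2$, lower-bound the mean by $\tfrac{m}{2}\sqrt{9\log m/n}$, and apply a multiplicative Chernoff bound with deviation factor $1/2$. The only difference is that you make the constant-tracking explicit (checking $\mu/8\ge\log m$ from $m\ge n$ and $\log m\le n/c$), whereas the paper summarily calls the tail ``exponentially small''; your version is, if anything, the more careful of the two.
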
            
\begin{proof}
For each set $S$, $\pr[e\in S \text{ and } e'\notin S] = (1-p_0)p_0 \geq p_0/2$. This implies that the expected number of $S$ satisfying the condition for $e$ and $e'$ is at least ${m\over 2}\cdot{\sqrt{9\log m \over n}}$ and by Chernoff bound, the probability that the number of sets containing $e$ but not $e'$ is less than ${m\sqrt{9\log m} \over 4\sqrt{n}}$ is exponentially small. Thus with probability at least $1-m^{-1}$  property (\ref{item:mi-candidate-size}) holds for any pair of elements in $\sE$. 
\end{proof}

\begin{lemma}\label{lem:triple-1}
Suppose that $\sS \sim \binstF(\sE, p_0)$ and let $S_1, S_2$ and $S$ be sets in $\sS$. With probability at least $1-n^{-1}$, $\card{(S_1 \cap S_2) \setminus S} \leq 6\sqrt{n\log m}$. 
\end{lemma}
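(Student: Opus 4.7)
\begin{proofof}{Lemma~\ref{lem:triple-1}}
The plan is to follow the same Chernoff-bound template used in Lemmas~\ref{lem:solution-size}--\ref{lem:candidate-swapping}: bound the expectation of $|(S_1 \cap S_2)\setminus S|$ in terms of $p_0$, and then apply a multiplicative concentration inequality justified by the independence of the entries in $\binstF(\sE,p_0)$.

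First, fix sets $S_1,S_2,S \in \sS$ and an element $e \in \sE$. By the construction of $\binstF(\sE,p_0)$, the three events $e \in S_1$, $e \in S_2$, $e \notin S$ are mutually independent, so
\[
\Prob{e \in (S_1\cap S_2)\setminus S} \;=\; (1-p_0)^2\, p_0 \;\leq\; p_0 \;=\; \sqrt{\tfrac{9\log m}{n}}.
\]
Next, since membership of different elements in the three sets is independent, $X \triangleq |(S_1\cap S_2)\setminus S|$ is a sum of $n$ independent $\{0,1\}$ random variables with mean
\[
\mu \;\triangleq\; \E[X] \;\leq\; n p_0 \;=\; 3\sqrt{n\log m}.
\]

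I would then invoke the multiplicative Chernoff bound with $\delta=1$, giving
\[
\Prob{X \geq 6\sqrt{n\log m}} \;\leq\; \Prob{X \geq 2\mu} \;\leq\; \exp(-\mu/3) \;\leq\; \exp\!\pth{-\sqrt{n\log m}}.
\]
Under the standing assumption $n \geq c\log m$ for a sufficiently large constant $c$ (used throughout the construction of $\binst$), this failure probability is at most $n^{-1}$, yielding the stated bound for the fixed triple $(S_1,S_2,S)$. I expect no real obstacle: the only subtlety is verifying that the three events used above are independent, which is immediate from the product structure of $\binstF(\sE,p_0)$, and that the resulting Chernoff exponent $\sqrt{n\log m}$ is large enough both to beat $n^{-1}$ here and to survive the union bound over all $O(m^3)$ triples when this lemma is combined with Lemmas~\ref{lem:solution-size}--\ref{lem:element-deg} to conclude Lemma~\ref{lem:median-existance}.
\end{proofof}
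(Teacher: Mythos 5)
Your proof takes the same route as the paper's: compute the per-element probability $(1-p_0)^2 p_0$, bound the expectation $\mu = \E\bigl[|(S_1\cap S_2)\setminus S|\bigr]$ from above by $np_0 = 3\sqrt{n\log m}$ using independence, and apply a multiplicative Chernoff bound. That is exactly what the paper does, and your remark at the end about union-bounding over all $O(m^3)$ triples when assembling Lemma~\ref{lem:median-existance} is the right reading of how the lemma gets used.

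There is one small but genuine slip in the middle of your Chernoff chain. You write
$\Prob{X \geq 6\sqrt{n\log m}} \leq \Prob{X \geq 2\mu} \leq \exp(-\mu/3) \leq \exp(-\sqrt{n\log m})$,
but the last inequality needs $\mu \geq 3\sqrt{n\log m}$, whereas you only established $\mu \leq 3\sqrt{n\log m}$ — the inequality is in the wrong direction, and in general $\exp(-\mu/3)$ is not small if $\mu$ happens to be small. The fix is easy and in the same spirit: either (i) also lower-bound $\mu$ using $p_0 \leq 1/2$, giving $\mu = (1-p_0)^2 p_0 n \geq \tfrac14 p_0 n = \tfrac34\sqrt{n\log m}$ and hence $\exp(-\mu/3) \leq \exp\bigl(-\tfrac14\sqrt{n\log m}\bigr)$, still comfortably below $n^{-1}$; or (ii) apply the standard variant of the Chernoff bound that allows replacing $\mu$ by any upper bound $U \geq \mu$ (via stochastic domination by Bernoullis with parameter $p_0$), which yields directly $\Prob{X \geq 2U} \leq \exp(-U/3) = \exp(-\sqrt{n\log m})$. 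Either repair makes your argument airtight and matches the paper's intent, which is equally terse at this step.
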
            
\begin{proof}
For each element $e$, $\pr[e\in (S_1 \cap S_2) \setminus S] = (1-p_0)^2p_0 \leq p_0$. This implies that the expected size of $(S_1\cap S_2)\setminus S$ is less than $\sqrt{9n\log m}$ and by Chernoff bound, the probability that $\card{(S_1\cap S_2)\setminus S} \geq 6\sqrt{n\log m}$ is exponentially small. Thus with probability at least $1-m^{-1}$ property $(\ref{item:mi-triple})$ holds for any sets $S_1, S_2$ and $S$ in $\sS$. 
\end{proof}

\begin{lemma}\label{lem:element-deg}
For each element, the number of sets that do not contain the element is at most $6 m\sqrt{\log m \over n}$.
\end{lemma}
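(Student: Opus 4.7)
The plan is to mirror the strategy used in the preceding lemmas (\ref{lem:missing_elem_size}--\ref{lem:triple-1}): compute the expected ``bad'' count for a single element, apply a Chernoff bound to show sharp concentration, and then union-bound over all $n$ elements.

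Fix an element $e \in \sE$. By construction of $\sS \sim \binstF(\sE, p_0)$, each of the $m$ sets independently fails to contain $e$ with probability $p_0 = \sqrt{9\log m / n}$. Let $X_e$ denote the number of sets in $\sS$ that do not contain $e$; then $X_e$ is a sum of $m$ independent Bernoulli$(p_0)$ variables, so
\begin{align*}
\mathbb{E}[X_e] = m p_0 = 3m\sqrt{\log m / n}.
\end{align*}
Thus the target bound $6m\sqrt{\log m/n}$ is exactly twice the mean, a regime where a standard multiplicative Chernoff bound gives a sharp tail estimate of the form $\Pr[X_e \ge 2\mathbb{E}[X_e]] \le \exp(-\mathbb{E}[X_e]/3) = \exp(-m\sqrt{\log m/n})$. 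Under the global assumption $\log m \le n/c$ (with $c$ a sufficiently large constant, and $m \ge n$), this is much smaller than $m^{-2}$.

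A union bound over the $n \le m$ elements then yields that, with probability at least $1 - m^{-1}$, property~\itemref{mi-degneg} of Definition~\ref{def:median-instance} holds simultaneously for every $e \in \sE$. I do not anticipate any real obstacle: this is the simplest of the six median-instance properties to verify, since the event ``$e \notin S$'' is a plain Bernoulli trial and no joint structure over pairs or triples of sets/elements needs to be accounted for. Together with Lemmas~\ref{lem:solution-size}--\ref{lem:triple-1}, one further union bound over the six properties completes the proof of the median-instance existence claim in Lemma~\ref{lem:median-existance}.
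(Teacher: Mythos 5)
Your proof is correct and follows essentially the same route as the paper's: compute the expectation $mp_0 = 3m\sqrt{\log m/n}$, apply a multiplicative Chernoff bound at twice the mean to get an exponentially small tail (indeed $\exp(-m\sqrt{\log m/n}) \le \exp(-\sqrt{m\log m}/3) \ll m^{-2}$ under $m\ge n$), and union-bound over the $n$ elements.
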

\begin{proof}
For each element $e$, $\pr_S[e\notin S] = p_0$. This implies that $\mathbb{E}_S(\card{\set{S \sep e\notin S}})$ is less than $m\sqrt{9\log m \over n}$ and by Chernoff bound, the probability that $\card{\set{S \sep e\notin S}} \geq 2m\sqrt{9\log m \over n}$ is exponentially small. Thus with probability at least $1-m^{-1}$  property (\ref{item:mi-degneg}) holds for any element $e\in \sE$. 
\end{proof}


\section{Generalized Lower Bounds for the Set Cover Problem}\label{sec:general-lb}
In this section we generalize the approach of Section~\ref{sec:simple-lb} and prove our main lower bound result (Theorem~\ref{thm:lowerbound-general}) for the number of queries required for approximating with factor $\alpha$ the size of an optimal solution to the \probSetCover problem, where the input instance contains $m$ sets, $n$ elements, and a minimum set cover of size $k$. The structure of our proof is largely the same as the simplified case, but the definitions and the details of our analysis will be more complicated. The size of the minimum set cover of the median instance will instead be at least $\alpha k + 1$, and $\genModifiedInst$ reduces this down to $k$. We now aim to prove the following statement which implies the lower bound in Theorem~\ref{thm:lowerbound-general}. 

\begin{theorem} \label{thm:general-lb-2}
Let $k$ be the size of an optimal solution of $\binst$ such that $1< \alpha \leq \log n$ and $2\leq k\leq \left(\frac{n}{16\apxF\log m}\right)^{1 \over 4\apxF+1}$. 
Any algorithm that distinguishes whether the input instance is $\binst$ or belongs to $\minstD(\binst)$ with probability of success at least $2/3$ requires $\tldOmega(m({n \over k})^{1/(2\apxF)})$ queries.
\end{theorem}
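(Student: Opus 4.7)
The plan is to follow the blueprint of Section~\ref{sec:lb-small}, replacing every pairwise-in-$\{S_1,S_2\}$ computation by a $k$-wise computation and recalibrating $p_0$ so that the per-cell modification probability becomes $\tldO\bigl((k/n)^{1/(2\apxF)}/m\bigr)$ instead of $\tldO(\log m/(mn))$. Throughout, I view $\binst$ as a median instance whose minimum cover will be at least $\apxF k$, and $\minstD(\binst)$ as the distribution of instances with minimum cover at most $k$ obtained from it via the generalized $\genModifiedInst$.

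First I would redraw $\binst$ from $\binstF(\sE,p_0)$ with a much smaller $p_0$, of the form $p_0 = \Theta\bigl((\apxF k \log m / n)^{\beta}\bigr)$ with $\beta \in (0,1)$ chosen so that $n p_0^{\apxF k - 1}$ is just large enough to survive a union bound over the $\binom{m}{\apxF k - 1}$ subcollections of size $\apxF k - 1$. This guarantees, with high probability, that no $\apxF k - 1$ sets of $\binst$ cover $\sE$, so $\binst$ has minimum set cover at least $\apxF k$. The remaining analogs of Definition~\ref{def:median-instance} are the natural $k$-wise lifts and are all obtained by Chernoff plus a union bound: for any $k$ sets $S_1,\dots,S_k$, $|\sE \setminus \bigcup_j S_j| = O(n p_0^k)$ and $|\bigcap_j S_j| = \Omega(n(1-p_0)^k)$; for any pair $e \neq e'$, $|\cand(e,e')| = \Omega(p_0 m)$; for any $(k+1)$-tuple of sets $S, S_1,\dots,S_k$, $|(\bigcap_j S_j) \setminus S| = O(n p_0 (1-p_0)^k)$; and every element is missing from at most $O(p_0 m)$ sets. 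The hypothesis $k \leq (n/(16\apxF \log m))^{1/(4\apxF+1)}$ is precisely what is needed for all of these concentration bounds to beat their respective union bounds simultaneously. Next I would generalize $\genModifiedInst$: pick $k$ sets $S_1,\dots,S_k \in \sS$ uniformly at random, build a uniformly random injective matching $\sM$ from $\sE \setminus \bigcup_j S_j$ into $\bigcap_j S_j$ (feasible by the $k$-wise properties), and for each matched pair $(e,e')$ pick a random candidate $S \in \cand(e,e')$ and apply $\swap(e,e')$ between $S$ and a receiver $S_{j^\star} \in \{S_1,\dots,S_k\}$ that already contains $e'$. The resulting $\minst$ is covered by $\{S_1,\dots,S_k\}$, while differing from $\binst$ in only $O(n p_0^k)$ oracle entries.

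The key quantitative step is the $k$-wise analog of Lemma~\ref{lem:p_cell_uniform}: for every $(e, S)$ with $e \in S$ in $\binst$,
\[
P_{\eltq-\setq}(e, S) \;\le\; O\bigl(p_0^k/m\bigr).
\]
The argument splits into two cases. In case (i), $S \in \{S_1,\dots,S_k\}$ and $e$ is swapped out of $S$; its probability is bounded by $(k/m) \cdot (1-p_0)^{k-1} \cdot \bigl(|\sE \setminus \bigcup_j S_j|/|\bigcap_j S_j|\bigr) \cdot (1/k) = O(p_0^k/m)$. In case (ii), $S \notin \{S_1,\dots,S_k\}$ and $S$ is chosen as the candidate for a swap involving $e$; summing over $e' \in (\bigcap_j S_j) \setminus S$ and using the analogs of properties $(c)$--$(e)$ yields $p_0^k \cdot \bigl(|(\bigcap_j S_j)\setminus S|/|\bigcap_j S_j|\bigr) \cdot 1/|\cand(e,e')| = O(p_0^k/m)$. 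The rest is now identical in structure to Lemma~\ref{lem:elemnt-query} and the proof of Theorem~\ref{thm:lowerbound-2}: any deterministic algorithm making $r$ queries confuses $\binst$ with $\minst \sim \minstD(\binst)$ on a $(1 - r \cdot O(p_0^k/m))$-fraction of $\minstD(\binst)$, so success probability at least $2/3$ under the $1/2$-$1/2$ mixture forces $r = \tldOmega(m/p_0^k)$, which becomes $\tldOmega\bigl(m(n/k)^{1/(2\apxF)}\bigr)$ after substituting the chosen $p_0$ and absorbing polylogarithmic factors.

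The hard part will be calibrating $p_0$ and verifying all the $k$-wise analogs of Definition~\ref{def:median-instance} simultaneously. Unlike the pairwise case, where the only nontrivial union bound is over $O(m^2)$ pairs, here one must union-bound over $\binom{m}{\apxF k - 1}$ subcollections for the cover-size property, over $k$-tuples for the intersection and uncovered-count concentration, and over $(k+1)$-tuples for the triple analog; the constraint $k^{4\apxF+1} \leq n/(16\apxF \log m)$ is exactly what makes every concentration tail strictly beat its corresponding union-bound factor. A secondary subtlety is case (ii) of the per-cell calculation, where the candidate $S$ is selected conditionally on the random matched pair $(e,e')$, so one must average carefully over $e'$ uniform from $(\bigcap_j S_j) \setminus S$ and leverage the $\Omega(p_0 m)$ lower bound on $|\cand(e,e')|$ to bound the probability of picking any fixed $S$.
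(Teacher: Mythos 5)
There is a genuine gap, and it concerns the pool of elements you match the uncovered elements into. You propose to take a uniformly random injective matching from $\sE \setminus \bigcup_{j} S_j$ into the full $k$-wise intersection $\bigcap_j S_j$, and correspondingly your median properties lower-bound $|\bigcap_j S_j|$ by $\Omega(n(1-p_0)^k)$. But in the parameter regime you are forced into, this intersection is too small. To make property~(a) (no $\apxF k - 1$ sets cover $\sE$) survive a union bound, you need $p_0 \gtrsim (\apxF k \log m / n)^{1/(\apxF k)}$, which is exactly the paper's choice of exponent. For $k$ beyond a small constant, this value of $p_0$ exceeds $1/2$; concretely, in the paper's Lemma~\ref{lem:gen-k-property}, one only has $p_0^{k/4} \le 1/2$, so $p_0 \le 2^{-4/k}$, which allows $p_0 \to 1$ as $k$ grows. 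Once $p_0 > 1/2$, we have $(1-p_0)^k < p_0^k$, so the expected size of your pool $\bigcap_j S_j$, namely $(1-p_0)^k n$, is \emph{strictly smaller} than the expected number of uncovered elements, $p_0^k n$, and the injective matching you rely on does not exist. The paper sidesteps this by matching into $S_k \cap (S_1 \cup \cdots \cup S_{k-1})$ rather than $\bigcap_j S_j$: that set has expected size $(1-p_0)(1-p_0^{k-1}) n \ge p_0^{k/2} n$ (by Lemma~\ref{lem:gen-k-property}(a)), which comfortably dominates $p_0^k n$ for all $k$ in the allowed range. This is not a cosmetic difference; it is what makes the construction well-defined for non-tiny $k$.

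A secondary consequence is that your per-cell bound in case~(i) is not quite right even on its own terms. Following your own line, the ratio $\bigl|\sE\setminus\bigcup_j S_j\bigr| / \bigl|\bigcap_j S_j\bigr|$ is $O\bigl((p_0/(1-p_0))^k\bigr)$, so the product $(k/m)\cdot(1-p_0)^{k-1}\cdot (p_0/(1-p_0))^k \cdot (1/k)$ gives $O\bigl(p_0^k/(m(1-p_0))\bigr)$, not $O(p_0^k/m)$ — and there is no reason $1/(1-p_0)$ is a constant in this regime (it can be as large as $\Theta(k/\log k)$). The paper's bound correctly carries a $(1-p_0)^{-2}$ factor and Lemma~\ref{lem:gen-k-property}(c) is used precisely to absorb it into the final $(n/k)^{1/(2\apxF)}$ expression. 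In short, the remedy is to replace $\bigcap_j S_j$ by $S_k \cap (S_1\cup\cdots\cup S_{k-1})$ throughout — in the median properties, in $\genModifiedInst$, and in both terms of the per-cell estimate — and to track the $(1-p_0)$-dependence explicitly rather than suppressing it.
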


\subsection{Construction of the Median Instance $\binst$.}
Let $\sS$ be a collection of $m$ sets such that independently for each set-element pair $(S, e)$, $S$ contains $e$ with probability $1-p_0$, where we modify the probability to $p_0 = \left({8(\apxF k+2) \log m \over n}\right)^{1/(\apxF k)}$. We start by proving some inequalities involving $p_0$ that will be useful later on, which hold for any $k$ in the assumed range.
\begin{lemma}\label{lem:gen-k-property}
For $2 \leq k\leq \left(\frac{n}{16\apxF\log m}\right)^{1 \over 4\apxF+1}$, we have that
\begin{compactenum}[(a)]
\item $1 - p_0 \geq p_0^{k/4}$,
\item $p_0^{k/4} \leq 1/2$,
\item $\frac{p_0^k}{(1-p_0)^2}\leq \left(\frac{8(\apxF k+2)\log m}{n}\right)^{1 \over 2\apxF}$.
\end{compactenum}
\end{lemma}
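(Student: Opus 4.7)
The three inequalities are purely analytic consequences of the definition of $p_0$ together with the range constraint on $k$. The unifying idea is to introduce the shorthand $q \triangleq 8(\apxF k+2)\log m/n$, so that $p_0 = q^{1/(\apxF k)}$, and first extract a single clean bound on $q$ from the range constraint. Since $k \geq 2$ and $\alpha \geq 1$, we have $\apxF k + 2 \le 2\apxF k$, so $q \le 16\apxF k\log m/n$; combined with the hypothesis $k^{4\apxF+1} \le n/(16\apxF\log m)$, this yields the key estimate
\[
q \;\le\; k^{-4\apxF}.
\]
All three parts will follow from this estimate by elementary manipulation.

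\textbf{Deriving (b).} Observing that $p_0^{k/4} = q^{1/(4\apxF)}$, the bound $q \le k^{-4\apxF}$ immediately gives $p_0^{k/4} \le k^{-1} \le 1/2$, since $k \ge 2$.

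\textbf{Deriving (a).} The argument I have in mind is to show the intermediate bound $p_0 \le 1 - 1/k$ by contradiction, so that $1 - p_0 \ge 1/k \ge p_0^{k/4}$ (the last step is from the proof of (b)). Suppose for contradiction that $p_0 > 1 - 1/k$. Then $q = p_0^{\apxF k} > (1-1/k)^{\apxF k} = \bigl((1-1/k)^k\bigr)^{\apxF}$. Since $(1-1/k)^k$ is nondecreasing in $k$ and equals $1/4$ at $k=2$, we have $(1-1/k)^{\apxF k} \ge 4^{-\apxF}$, giving $q > 4^{-\apxF}$. On the other hand our key estimate forces $q \le k^{-4\apxF} \le 16^{-\apxF}$ for $k \ge 2$, contradicting $q > 4^{-\apxF}$.

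\textbf{Deriving (c).} This will be a short consequence of (a): squaring $(1-p_0) \ge p_0^{k/4}$ gives $(1-p_0)^2 \ge p_0^{k/2} = q^{1/(2\apxF)}$, while $p_0^k = q^{1/\apxF}$. Dividing yields
\[
\frac{p_0^k}{(1-p_0)^2} \;\le\; \frac{q^{1/\apxF}}{q^{1/(2\apxF)}} \;=\; q^{1/(2\apxF)} \;=\; \left(\frac{8(\apxF k+2)\log m}{n}\right)^{1/(2\apxF)}.
\]

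\textbf{Expected obstacle.} The only mildly delicate step is (a); in particular, the case of large $k$ where $p_0$ is close to $1$ (so $1-p_0$ is small) must be handled carefully. The trick is to exploit the standard inequality $(1-1/k)^k \ge 1/4$ for $k \ge 2$ to convert a lower bound on $p_0$ into a lower bound on $q$, and then invoke the range constraint to force a contradiction. Once (a) is in hand, (c) is immediate, and (b) is essentially a restatement of the key estimate $q \le k^{-4\apxF}$.
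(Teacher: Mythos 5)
Your proof is correct, and it shares the same backbone as the paper's: both first reduce the range constraint on $k$ to the estimate $q = 8(\alpha k+2)\log m/n \leq k^{-4\alpha}$ (using $\alpha k \geq 2$ to absorb the $+2$), from which parts (b) and (c) follow immediately, with (a) as the only part requiring real work. Where you diverge is in the argument for (a). The paper takes a direct route: it writes $p_0 \leq k^{-4/k} = e^{-4\ln k/k}$, invokes the Taylor-type bound $e^{-x} \leq 1 - x/2$ valid for $x < 1.5$, obtains $p_0 \leq 1 - 2\ln k/k$, and then checks $p_0 + p_0^{k/4} \leq 1 - 2\ln k/k + 1/k \leq 1$. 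You instead argue by contradiction: if $p_0 > 1 - 1/k$, then $q = p_0^{\alpha k} > \bigl((1-1/k)^k\bigr)^\alpha \geq 4^{-\alpha}$ using the monotonicity of $(1-1/k)^k$, while the key estimate forces $q \leq k^{-4\alpha} \leq 16^{-\alpha} < 4^{-\alpha}$, a contradiction; then $1 - p_0 \geq 1/k \geq p_0^{k/4}$ by (b). Both arguments are elementary and of comparable length; yours avoids the explicit Taylor inequality for $e^{-x}$ in favor of the standard fact $(1-1/k)^k \geq 1/4$ for $k \geq 2$, and establishes only the weaker intermediate bound $p_0 \leq 1 - 1/k$ (versus the paper's $p_0 < 1 - 2\ln k/k$), which suffices here.
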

\begin{proof}
Recall as well that $\alpha > 1$. In the given range of $k$, we have $k^{4\apxF} \leq \frac{n}{16\apxF k \log m} \leq \frac{n}{8(\apxF k+2)\log m}$ because $k\apxF\geq 2$. Thus
\begin{align*}
p_0=\left(\frac{8(\apxF k+2)\log m}{n}\right)^{1 \over \apxF k} \leq \left(\frac{1}{k^{4\apxF}}\right)^{1 \over \apxF k} = k^{-4 /k}.
\end{align*}
Next, rewrite $k^{-4 /k} = e^{-{{4 \ln k} \over k}}$ and observe that ${{4 \ln k} \over k} \leq {4 \over e} < 1.5$. Since $e^{-x} \leq 1-{x \over 2}$ for any $x<1.5$, we have $p_0 \leq e^{-{{4 \ln k} \over k}} < 1-\frac{2\ln k}{k}$. Further, $p_0^{k/4} \leq e^{-\ln k} =1/k$. Hence $p_0+p_0^{k/4} \leq 1-\frac{2\ln k}{k} + \frac{1}{k} \leq 1$, implying the first statement.

The second statement easily follows as $p_0^{k/4} \leq 1/k \leq 1/2$ since $k \geq 2$. For the last statement, we make use of the first statement:
\begin{align*}
\frac{p_0^k}{(1-p_0)^2}\leq \frac{p_0^k}{(p_0^{k/4})^2} =p_0^{k/2} = \left(\frac{8(\apxF k +2)\log m}{n}\right)^{1 \over 2\apxF}
\end{align*}
which completes the proof of the lemma.
\end{proof}

Next, we give the new, generalized definition of median instances.

\begin{definition}[Median instance]\label{def:general-median-instance}
An instance of \prob{Set Cover}, $\inst = (\sE, \sS)$, is a \emph{median instance} if it satisfies all the following properties.
\begin{compactenum}[(a)]
\item No $\apxF k$ sets cover all the elements. (The size of its minimum set cover is greater than $\apxF k$.)\label{item:gen-sc-size}
\item The number of uncovered elements of the union of any $k$ sets is at most ${2np_0^k}$.\label{item:gen-uncovered-elem-size}
\item For any pair of elements $e, e'$, the number of sets $S\in \sS$ s.t. $e\in S$ but $e'\notin S$ is at least ${(1-p_0)p_0 m/2}$.\label{item:gen-candidate-size}
\item For any collection of $k$ sets $S_1, \cdots, S_k$, $\card{S_k \cap (S_1\cup \cdots \cup S_{k-1})} \geq (1-p_0)(1-p_0^{k-1})n/2$.\label{item:gen-s-candidate-element}
\item For any collection of $k+1$ sets $S,S_1,\cdots, S_k$, $\card{(S_k \cap (S_1\cup\cdots \cup S_{k-1})) \setminus S} \leq 2p_0(1-p_0)(1-p_0^{k-1}) n$.\label{item:gen-candidate-element}
\item For each element, the number of sets that do not contain the element is at most $(1+{1\over k})p_0m$.\label{item:gen-degneg}
\end{compactenum}
\end{definition}

\begin{lemma}
For $k\leq \min\{\sqrt{\frac{m}{27 \ln m}},(\frac{n}{16\apxF\log m})^{1 \over 4\apxF+1}\}$, there exists a median instance $\binst$ satisfying all the median properties from Definition~\ref{def:general-median-instance}. In fact, most of the instances constructed by the described randomized procedure satisfy the median properties.
\end{lemma}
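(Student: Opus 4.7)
I will use the probabilistic method and show that an instance $\sS$ drawn from $\binstF(\sE,p_0)$ satisfies each of the six median properties (\ref{item:gen-sc-size})--(\ref{item:gen-degneg}) with high probability; then a union bound over the six properties yields the lemma. For each property, the argument follows the same template as Lemmas~\ref{lem:solution-size}--\ref{lem:element-deg} in Appendix~\ref{sec:missing-proofs}: fix a particular collection of sets/elements, compute the expectation of the relevant count under the independent $\Prob{e\in S}=1-p_0$ distribution, apply a Chernoff bound, and then union bound over all choices of the fixed collection. The assumed upper bound on $k$ and the three inequalities from Lemma~\ref{lem:gen-k-property} are what make the union bound survive.

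\textbf{Key calculations for each property.} For property (\ref{item:gen-sc-size}), a fixed collection of $\apxF k$ sets covers all of $\sE$ with probability $(1-p_0^{\apxF k})^n \leq e^{-np_0^{\apxF k}} = e^{-8(\apxF k+2)\log m} = m^{-8(\apxF k+2)}$ since by construction $np_0^{\apxF k}=8(\apxF k+2)\log m$; a union bound over the $\binom{m}{\apxF k}\leq m^{\apxF k}$ choices of $\apxF k$ sets gives failure probability at most $m^{-7\apxF k-16}$. For property (\ref{item:gen-uncovered-elem-size}), the number of uncovered elements of a fixed union of $k$ sets is a sum of $n$ indicators with mean $np_0^k$, so Chernoff gives deviation probability $\exp(-np_0^k/3)$; since $np_0^k = n^{1-1/\apxF}(8(\apxF k+2)\log m)^{1/\apxF} \gg k\log m$ in the assumed range, the $m^k$ union bound succeeds. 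Property (\ref{item:gen-candidate-size}) is the direct generalization of Lemma~\ref{lem:candidate-swapping}: for each pair $(e,e')$ and each set $S$, $\Prob{e\in S, e'\notin S} = (1-p_0)p_0$, independent across $S$; Chernoff and a union bound over the $n^2$ pairs suffice. Property (\ref{item:gen-degneg}) is analogous to Lemma~\ref{lem:element-deg} with expectation $p_0 m$, using Chernoff and a union bound over $n$ elements; the $k\leq \sqrt{m/(27\ln m)}$ condition is what makes the $(1+1/k)$-deviation concentrate.

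\textbf{The correlated-intersection properties.} Properties (\ref{item:gen-s-candidate-element}) and (\ref{item:gen-candidate-element}) are the main obstacle because they count elements falling into an intersection of a set with a union of $k-1$ other sets. For a fixed tuple $S_1,\ldots,S_k$, an element $e$ lies in $S_k\cap(S_1\cup\cdots\cup S_{k-1})$ independently across $e$ with probability $(1-p_0)(1-p_0^{k-1})$; thus the count is a sum of $n$ independent indicators with mean $(1-p_0)(1-p_0^{k-1})n$, and Chernoff gives concentration at $1/2$ the mean (property (\ref{item:gen-s-candidate-element})). The union bound ranges over $\binom{m}{k}\leq m^k$ tuples, and we need the exponent $-(1-p_0)(1-p_0^{k-1})n/8$ to dominate $k\log m$; using Lemma~\ref{lem:gen-k-property}(a) we have $1-p_0 \geq p_0^{k/4}\geq p_0^k$, and $(1-p_0^{k-1})\geq 1/2$ in this range, so the bound holds comfortably. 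For (\ref{item:gen-candidate-element}), the count has mean $p_0(1-p_0)(1-p_0^{k-1})n$, and the upper-tail Chernoff at $2\times$ the mean combined with a $m^{k+1}$ union bound requires $p_0(1-p_0)(1-p_0^{k-1})n \gg k\log m$; this is again secured by Lemma~\ref{lem:gen-k-property} together with the inequality $p_0^k/(1-p_0)^2 \leq (8(\apxF k+2)\log m/n)^{1/(2\apxF)}$.

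\textbf{Conclusion.} Summing the six individual failure probabilities (all at most $m^{-c}$ for some $c\geq 1$ given the assumed range of $k$), the probability that the random instance fails to be a median instance is $o(1)$. Hence a median instance exists, and in fact the probabilistic construction produces one with high probability, establishing both assertions of the lemma. The only place where the stated hypothesis $k\leq \sqrt{m/(27\ln m)}$ (as opposed to the $(n/(16\apxF\log m))^{1/(4\apxF+1)}$ bound used for $p_0$-based estimates) is invoked is in the deviation bound of property (\ref{item:gen-degneg}), where a tight $(1+1/k)$ relative error is required.
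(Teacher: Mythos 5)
Your proposal is correct and follows essentially the same approach as the paper: the paper's own proof is a one-line citation to Lemmas~\ref{lem:gen-solution-size}--\ref{lem:gen-element-degneg}, each of which establishes one of the six properties of Definition~\ref{def:general-median-instance} with probability at least $1-m^{-2}$ via a Chernoff bound plus a union bound over the relevant tuples of sets or elements, and your writeup reproduces exactly those per-property estimates before taking the final union bound over the six properties. One small remark on your closing observation: the paper's Lemma~\ref{lem:gen-element-degneg} actually derives $k \leq \sqrt{m/(27\ln m)}$ from the other hypothesis $k \leq (n/(16\apxF\log m))^{1/(4\apxF+1)}$ together with $m \geq n$ and $\apxF \geq 1$, so in the paper the minimum in the statement is technically redundant; but you are right that this is the only place in the argument where that bound is actually needed.
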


\begin{proof}
The lemma follows from applying the union bound on the results of Lemmas~\ref{lem:gen-solution-size}--\ref{lem:gen-element-degneg}.
\end{proof}

The proofs of the Lemmas~\ref{lem:gen-solution-size}--\ref{lem:gen-element-degneg} follow from standard applications of concentration bounds.  We include them here for the sake of completeness.

\begin{lemma}\label{lem:gen-solution-size}
With probability at least $1-m^{-2}$ over $\sS\sim \binstF(\sE,p_0)$, the size of the minimum set cover of the instance $(\sS, \sE)$ is at least $\apxF k+1$.
\end{lemma}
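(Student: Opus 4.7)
The plan is a direct union bound over all possible choices of $\alpha k$ sets, exploiting the independence of the entries in the random construction and the tuning of $p_0$.

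Fix any collection $\mathcal{C}$ of $\alpha k$ sets from $\sS$. For any given element $e \in \sE$, the events $\{e \in S\}$ are mutually independent across $S$, each with probability $1-p_0$. Hence the probability that $e$ is not covered by $\mathcal{C}$ is exactly $p_0^{\alpha k}$, and by independence across elements, the probability that $\mathcal{C}$ covers all of $\sE$ is
\[
(1 - p_0^{\alpha k})^n \leq \exp\!\left(-n\,p_0^{\alpha k}\right).
\]
By the choice $p_0 = \left(\tfrac{8(\alpha k + 2)\log m}{n}\right)^{1/(\alpha k)}$, we have $n\,p_0^{\alpha k} = 8(\alpha k + 2)\ln m \cdot \log_e/\log$ (simply $= 8(\alpha k+2)\log m$ in the natural base if $\log=\ln$; otherwise absorb the constant into the $8$), so this probability is at most $m^{-8(\alpha k + 2)}$.

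Taking a union bound over all $\binom{m}{\alpha k} \leq m^{\alpha k}$ choices of $\mathcal{C}$, the probability that \emph{some} collection of $\alpha k$ sets covers $\sE$ is at most
\[
m^{\alpha k} \cdot m^{-8(\alpha k + 2)} = m^{-7\alpha k - 16} \leq m^{-2},
\]
since $\alpha k \geq 1$. Thus with probability at least $1 - m^{-2}$, no $\alpha k$ sets form a cover, i.e., the minimum set cover has size at least $\alpha k + 1$, as claimed.

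The argument is essentially routine; the only ``obstacle'' is simply verifying the bookkeeping that $p_0$ was defined in exactly the form that makes $n p_0^{\alpha k} = \Theta((\alpha k)\log m)$, which comfortably dominates $\log \binom{m}{\alpha k}$ after the union bound.
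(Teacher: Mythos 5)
Your proof is correct and follows essentially the same route as the paper: compute the probability that a fixed collection of $\alpha k$ sets covers $\sE$, use the tuning of $p_0$ to bound it by $m^{-8(\alpha k + 2)}$, and union bound over the $\binom{m}{\alpha k}$ choices. You are in fact slightly more explicit than the paper about the union bound arithmetic, but there is no substantive difference in the argument.
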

\begin{proof}
The probability that an element $e\in\sE$ is covered by a specific collection of $\apxF k$ sets in $\sS$ is at most
$1 - p_{0}^{\apxF k} = 1-{8(\apxF k + 2)\log m \over n}$.
Thus, the probability that the union of the $\apxF k $ sets covers all elements in $\sE$ is at most $(1-{8(\apxF k + 2)\log m \over n})^n < m^{-8(\alpha k + 2)}$.
Applying the union bound, with probability at least $1-m^{-2}$ the size of an optimal set cover is at least $\apxF k +1$. 
\end{proof}

\begin{lemma}\label{lem:gen-missing_elem_size}
With probability at least $1-m^{-2}$ over $\sS\sim \binstF(\sE,p_0)$, any collection of $k$ sets has at most $2np_0^{k}$ uncovered elements.
\end{lemma}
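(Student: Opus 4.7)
The plan is a straightforward two-step combination of a Chernoff bound (to handle a fixed collection of $k$ sets) with a union bound (to handle all $\binom{m}{k}$ choices of such collections). The main observation is that whether an element $e$ lies outside a fixed collection $S_1,\ldots,S_k$ depends on $k$ independent set-element indicators, each of which makes $e \notin S_i$ with probability exactly $p_0$. Moreover, because the membership decisions for different elements are drawn independently, the number of elements uncovered by a fixed collection is a sum of $n$ independent Bernoulli variables, each with success probability $p_0^k$.

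First, I would fix an arbitrary collection of $k$ sets $S_1,\ldots,S_k$ and let $X$ denote the number of elements not covered by their union. By the observation above, $\mathbb{E}[X] = n p_0^k$, and by our choice of $p_0$ we have $n p_0^k = 8(\apxF k+2)\log m$. A multiplicative Chernoff bound then yields
\begin{align*}
\Pr\!\left[X \geq 2 n p_0^k\right] \;\leq\; e^{-n p_0^k / 3} \;=\; m^{-8(\apxF k + 2)/3}.
\end{align*}

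Next, I would take a union bound over all $\binom{m}{k} \leq m^k$ collections of $k$ sets. The total failure probability is then at most
\begin{align*}
m^k \cdot m^{-8(\apxF k + 2)/3} \;=\; m^{\,k \,-\, 8(\apxF k+2)/3}.
\end{align*}
It remains to check that the exponent is at most $-2$, i.e. $(3 - 8\apxF)k \leq 10$. Since $\apxF > 1$, we have $3 - 8\apxF < -5$, so for any $k \geq 1$ the inequality is comfortably satisfied. Hence with probability at least $1 - m^{-2}$, every collection of $k$ sets leaves at most $2 n p_0^k$ elements uncovered, proving the lemma.

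There is no real obstacle beyond tracking constants: the only delicate point is ensuring that the choice of $p_0$ makes $n p_0^k$ large enough (namely $\Theta(\apxF k \log m)$) so that the Chernoff tail beats the $\binom{m}{k} \leq m^k$ union bound, which is exactly why $p_0$ was defined as $(8(\apxF k + 2)\log m / n)^{1/(\apxF k)}$ in the first place.
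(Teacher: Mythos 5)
Your proposal is correct and follows essentially the same Chernoff-plus-union-bound approach as the paper. One small slip: you write $n p_0^k = 8(\apxF k + 2)\log m$, but the definition of $p_0$ actually gives $n p_0^{\apxF k} = 8(\apxF k + 2)\log m$, so for $\apxF \geq 1$ the correct relation is $n p_0^k \geq 8(\apxF k + 2)\log m$; fortunately the inequality points in the direction you need (a larger mean only strengthens the Chernoff tail), so the argument and the final exponent check go through unchanged.
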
      
\begin{proof}
Let $S_1,\cdots,S_k$ be a collection of $k$ sets from $\sS$. For each element $e\in \sE$, the probability that $e$ is not covered by the union of the $k$ sets is $p_0^k$. Thus, 
\begin{align*}
\mathbb{E}[\card{\sE\setminus(S_1 \cup \cdots \cup S_k)}] = p_0^k n \geq p_0^{\apxF k} n = 8(\apxF k+2)\log m.
\end{align*} 
By Chernoff bound, 
\begin{align*}
\Prob{\card{\sE\setminus(S_1 \cup \cdots \cup S_k)}\geq 2p_0^k n} &\leq e^{-{p_0^kn \over 3}} \leq e^{-(\apxF k +2)\log m} \leq m^{-k-2}.
\end{align*} 
Thus with probability at least $1-m^{-2}$, for any collection of $k$ sets in $\sS$, the number of uncovered elements by the union of the sets is at most $2p_0^k n$. 
\end{proof}

\begin{lemma}\label{lem:gen-candidate-swapping}
Suppose that $\sS \sim \binstF(\sE, p_0)$ and let $e, e'$ be two elements in $\sE$. Given $k\leq \left(\frac{n}{16\apxF\log m}\right)^{1 \over 4\apxF+1}$, with probability at least $1-m^{-2}$, the number of  sets $S\in \sS$ such that $e\in S$ but $e'\notin S$ is at least ${mp_0(1-p_0)/2}$. 
\end{lemma}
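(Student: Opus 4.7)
The plan is to follow the same template as the analogous Lemma~\ref{lem:candidate-swapping} from the simplified setting, adapting it to the current value of $p_0 = (8(\alpha k + 2)\log m/n)^{1/(\alpha k)}$ and the associated constraints on $k$. The whole proof is a single application of linearity of expectation followed by a multiplicative Chernoff tail bound, with essentially all of the work going into verifying that the expected count is $\Omega(\log m)$ in the given parameter regime.

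First, I would observe that in the product distribution $\sS \sim \binstF(\sE, p_0)$, for each set $S$ the events $\{e \in S\}$ and $\{e' \notin S\}$ are independent, so $\Pr[e \in S \text{ and } e' \notin S] = (1-p_0)p_0$. Letting $X$ be the number of such sets among the $m$ sets in $\sS$, linearity gives $\E[X] = m p_0(1-p_0)$. A multiplicative Chernoff bound then yields $\Pr[X \leq \E[X]/2] \leq \exp(-\E[X]/8)$, reducing the lemma to showing $m p_0 (1-p_0) \geq 16 \ln m$, which produces the claimed failure probability of at most $m^{-2}$.

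The main technical obstacle is that $p_0$ can be either bounded away from $1$ or very close to $1$ depending on the magnitude of $k$, so a single lower bound on $p_0(1-p_0)$ is awkward. I would therefore split into two cases. If $p_0 \leq 1/2$, the elementary inequality $p_0 = q^{1/(\alpha k)} \geq q$ with $q = 8(\alpha k+2)\log m/n$ (valid since $q \leq 1$ and $1/(\alpha k) \leq 1$) gives $mp_0(1-p_0) \geq mq/2$, and this is $\Omega(\log m)$ using $m \geq n$ and $\alpha k \geq 2$. If instead $p_0 > 1/2$, then $p_0(1-p_0) \geq (1-p_0)/2$, and I would invoke the stronger estimate $1-p_0 \geq 2\log k/k$ that already appeared inside the proof of Lemma~\ref{lem:gen-k-property}(a) (via $e^{-x} \leq 1 - x/2$ for small $x$). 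Combining with the hypothesis $k \leq (n/(16\alpha\log m))^{1/(4\alpha+1)}$, which rearranges to $n \geq 16\alpha k^{4\alpha+1}\log m \geq 16 k \log m$, and with $m \geq n$ then yields $mp_0(1-p_0) \gtrsim m\log k/k \geq 16 \log m$, completing the verification. Feeding this into the Chernoff bound above gives the claim.
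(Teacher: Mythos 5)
Your overall template is correct and matches the paper's: both proofs compute $\E[X] = m p_0(1-p_0)$, apply a multiplicative Chernoff bound, and reduce to lower-bounding $\E[X]$. The route you take to bound $\E[X]$, however, diverges from the paper and introduces a gap. The paper's proof uses Lemma~\ref{lem:gen-k-property}(a), $1-p_0 \geq p_0^{k/4}$, to write $p_0(1-p_0) \geq p_0^{1+k/4} \geq p_0^{\apxF k} = 8(\apxF k + 2)\log m / n$ (using $1+k/4 \leq \apxF k$, which holds for $k \geq 2$, $\apxF > 1$), so $\E[X] \geq 8(\apxF k + 2)\log m$ and the per-pair Chernoff failure is at most $m^{-\apxF k - 2}$. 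Your case $p_0 \leq 1/2$ instead uses the cruder $1-p_0 \geq 1/2$ together with $p_0 \geq q$, yielding $\E[X] \geq mq/2 = 4(\apxF k+2)\log m$, i.e.\ a factor two weaker than what the prepared lemma gives in one step (and you further coarsen this to ``$\geq 16\ln m$'').

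The consequential gap is that your target, $\E[X] \geq 16\ln m$, yields a \emph{per-pair} failure probability of exactly $m^{-2}$, which is too weak for the way the lemma is actually used. The median property~(\ref{item:gen-candidate-size}) is a ``for all pairs $e,e'$'' statement, and the paper's proof of this lemma explicitly concludes with a union bound over all $\Theta(n^2) \leq m^2$ pairs (see the final sentence of the paper's proof). With the paper's per-pair bound $m^{-\apxF k - 2}$ the union bound gives $m^{-\apxF k} \leq m^{-2}$, as claimed. With your per-pair bound $m^{-2}$, the union bound gives $\approx 1$, which proves nothing. Even your sharper implicit bound $\E[X] \geq 4(\apxF k+2)\log m$ only yields per-pair failure $m^{-(\apxF k+2)/2}$; after the union bound this is $m^{1-\apxF k/2}$, which is not $\leq m^{-2}$ unless $\apxF k \geq 6$ --- and the regime $k=2$, $\apxF$ slightly above $1$ (so $\apxF k$ near $2$) lies squarely in your $p_0 \leq 1/2$ case. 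So you need both the factor-of-two tighter estimate (which the paper obtains directly from Lemma~\ref{lem:gen-k-property}(a) without a case split) and the explicit union bound over pairs at the end.
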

\begin{proof}
For each set $S$, $\Prob{e\in S \text{ and } e'\notin S} = (1-p_0)p_0$. This implies that the expected number of such sets $S$ satisfying the condition for $e$ and $e'$ is
\begin{align*}
{p_0(1-p_0)m} \geq p_0\cdot p_0^{k/4}\cdot m \geq p_0^{\apxF k} n = 8(\apxF k + 2) \log m
\end{align*} 
by Lemma~\ref{lem:gen-k-property} and $m\geq n$. By Chernoff bound, the probability that the number of sets containing $e$ but not $e'$ is less than ${mp_0(1-p_0)/2}$ is at most
\begin{align*}
e^{-{p_0(1-p_0)m \over 8}} \leq e^{-(\apxF k +2)\log m} \leq m^{-\apxF k-2}.
\end{align*} 
Thus with probability at least $1-m^{-2}$  property~(\ref{item:gen-candidate-size}) holds for any pair of elements in $\sE$. 
\end{proof}

\begin{lemma}\label{lem:gen-matching-element-size}
Suppose that $\sS \sim \binstF(\sE, p_0)$ and let $S_1, \cdots, S_k$ be $k$ different sets in $\sS$. Given $k\leq \left(\frac{n}{16\apxF\log m}\right)^{1 \over 4\apxF+1}$, with probability at least $1-m^{-2}$, $\card{S_k \cap (S_1\cup\cdots \cup S_{k-1})} \geq (1-p_0)(1-p_0^{k-1}) n/2$. 
\end{lemma}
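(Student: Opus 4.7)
The plan is to mirror the structure of the earlier median-property lemmas (in particular Lemmas~\ref{lem:gen-missing_elem_size} and~\ref{lem:gen-candidate-swapping}): fix a $k$-tuple of distinct sets, show strong concentration for the intersection size via a Chernoff bound, and then union bound over all $\binom{m}{k}\le m^k$ such tuples.

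First I would fix distinct $S_1,\dots,S_k\in\sS$ and write
\[
X\;=\;\card{S_k\cap (S_1\cup\cdots\cup S_{k-1})}\;=\;\sum_{e\in\sE}\mathbf{1}\bigl[e\in S_k\bigr]\cdot\mathbf{1}\bigl[e\in S_1\cup\cdots\cup S_{k-1}\bigr].
\]
Since under $\binstF(\sE,p_0)$ the events $\{e\in S_i\}$ are mutually independent across $i$, and since the two indicators above depend on disjoint sets of such events, each summand is a Bernoulli with success probability $(1-p_0)\bigl(1-p_0^{k-1}\bigr)$, and the summands are independent across $e$. Hence $X$ is a sum of $n$ independent Bernoullis with mean $\mu=n(1-p_0)(1-p_0^{k-1})$, and the target claim is exactly $X\ge\mu/2$.

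Next I would invoke Lemma~\ref{lem:gen-k-property} to get a convenient lower bound on $\mu$. Because $k\ge2$ we have $1-p_0^{k-1}\ge 1-p_0\ge p_0^{k/4}$, so $\mu\ge n\,p_0^{k/2}$. Since $\alpha>1/2$ and $p_0\le 1$, $p_0^{k/2}\ge p_0^{\alpha k}=8(\alpha k+2)\log m/n$, giving $\mu\ge 8(\alpha k+2)\log m$. Then the standard multiplicative Chernoff lower tail with $\delta=1/2$ yields
\[
\Prob{X\le\mu/2}\;\le\;e^{-\mu/8}\;\le\;e^{-(\alpha k+2)\log m}\;\le\;m^{-\alpha k-2}\;\le\;m^{-k-2}.
\]
Finally, a union bound over the at most $m^k$ ordered choices of $k$ distinct sets makes the overall failure probability at most $m^{-2}$, as required.

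The only part that takes any care at all is bounding $\mu$ from below, because the product $(1-p_0)(1-p_0^{k-1})$ looks awkward; the trick is to notice that two separate applications of $1-p_0\ge p_0^{k/4}$ (one directly and one via $1-p_0^{k-1}\ge 1-p_0$) absorb both factors into $p_0^{k/2}$, after which the regime condition on $k$ in the hypothesis translates, through the definition of $p_0$, into exactly the $8(\alpha k+2)\log m$ threshold that makes Chernoff plus the union bound close. Everything else is a routine repetition of the template already used in this appendix.
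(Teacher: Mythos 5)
Your proof is correct and follows essentially the same route as the paper's: compute $\mathbb{E}[X]=n(1-p_0)(1-p_0^{k-1})$, lower-bound it by $p_0^{k/2}n\ge p_0^{\alpha k}n=8(\alpha k+2)\log m$ via two applications of Lemma~\ref{lem:gen-k-property}(a), apply multiplicative Chernoff with $\delta=1/2$, and union bound over the $O(m^k)$ $k$-tuples. Your write-up is if anything a little more careful (explicitly justifying independence of the two indicators and the union bound), but the argument is the same.
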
            
\begin{proof}
For each element $e$, $\Prob{e\in S_k \cap (S_1\cup\cdots \cup S_{k-1})} = (1-p_0)(1-p_0^{k-1})$. This implies that the expected size of $S_k \cap (S_1\cup\cdots \cup S_{k-1})$ is 
\begin{align*}
(1-p_0)(1-p_0^{k-1}) n \geq p_0^{k/4}\cdot p_0^{k/4} \cdot n &\geq p_0^{\apxF k} n = 8(\apxF k +2)\log m. 
\end{align*} 
by Lemma~\ref{lem:gen-k-property}. By Chernoff bound, the probability that $\card{S_k \cap (S_1\cup\cdots \cup S_{k-1})} \leq (1-p_0)(1-p_0^{k-1}) n/2$ is at most
\begin{align*}
e^{-{(1-p_0)(1-p_0^{k-1}) n \over 8}} \leq e^{-(\apxF k +2)\log m} \leq m^{-\apxF k-2}.
\end{align*} 
Thus with probability at least $1-m^{-2}$  property~(\ref{item:gen-s-candidate-element}) holds for any sets $S_1, \cdots, S_k$ in $\sS$. 
\end{proof}

\begin{lemma}\label{lem:gen-matching-swap-size}
Suppose that $\sS \sim \binstF(\sE, p_0)$ and let $S_1, \cdots, S_k$ and $S$ be $k+1$ different sets in $\sS$. Given $k\leq \left(\frac{n}{16\apxF\log m}\right)^{1 \over 4\apxF+1}$, with probability at least $1-m^{-2}$, $\card{ (S_k \cap (S_1\cup\cdots \cup S_{k-1}))\setminus S} \leq 2p_0(1-p_0)(1-p_0^{k-1})n$. 
\end{lemma}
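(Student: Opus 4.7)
The proof will mirror Lemmas~\ref{lem:gen-candidate-swapping}--\ref{lem:gen-matching-element-size}: compute the per-element membership probability, pass to the expectation, apply a multiplicative Chernoff tail bound for a single $(k+1)$-tuple, and finish with a union bound over all $(k+1)$-tuples of distinct sets.

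First, I fix a collection $S_1,\ldots,S_k,S$ of distinct sets. Because each set-element pair is sampled independently in $\binstF(\sE,p_0)$, and the sets involved are distinct, the three events $\{e\in S_k\}$, $\{e\in S_1\cup\cdots\cup S_{k-1}\}$, and $\{e\notin S\}$ are mutually independent, so
\[
\Prob{e\in (S_k\cap(S_1\cup\cdots\cup S_{k-1}))\setminus S}=(1-p_0)(1-p_0^{k-1})\,p_0.
\]
Summing over the $n$ elements gives $\mu:=\E\bigl[\card{(S_k\cap(S_1\cup\cdots\cup S_{k-1}))\setminus S}\bigr]=p_0(1-p_0)(1-p_0^{k-1})\,n$.

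Next, I establish that $\mu$ is large enough for Chernoff to beat the union bound. By Lemma~\ref{lem:gen-k-property}(a), $1-p_0\geq p_0^{k/4}$, and since $k\geq 2$ we also have $1-p_0^{k-1}\geq 1-p_0\geq p_0^{k/4}$, so $\mu\geq p_0^{1+k/2}n$. The hypotheses $\alpha>1$ and $k\geq 2$ force $1+k/2\leq\alpha k$, hence $\mu\geq p_0^{\alpha k}n=8(\alpha k+2)\log m$. Applying the multiplicative upper-tail Chernoff bound $\Prob{X\geq 2\mu}\leq e^{-\mu/3}$, for a single $(k+1)$-tuple the failure probability is at most $m^{-8(\alpha k+2)/3}$. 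A union bound over the at most $m^{k+1}$ choices of $(k+1)$-tuples leaves a total failure probability of at most $m^{k+1-8(\alpha k+2)/3}$, which is $\leq m^{-2}$ (in fact much smaller) for all $\alpha\geq 1$ and $k\geq 2$.

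The only delicate point is guaranteeing that $\mu$ is at least logarithmic in $m$, since the factor $1-p_0$ can be tiny when $p_0$ is close to $1$. This is exactly what Lemma~\ref{lem:gen-k-property}(a) provides, and that lemma in turn is where the constraint $k\leq(n/(16\alpha\log m))^{1/(4\alpha+1)}$ is consumed. Everything else is a mechanical Chernoff-plus-union-bound computation of the same flavor as the preceding lemmas in this subsection, so no further ideas are required.
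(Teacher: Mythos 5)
Your proof is correct and takes essentially the same route as the paper: compute the per-element probability $(1-p_0)(1-p_0^{k-1})p_0$, lower-bound the expectation to $8(\alpha k+2)\log m$ via Lemma~\ref{lem:gen-k-property}(a) and the inequality $1+k/2\leq\alpha k$, apply the multiplicative Chernoff bound $\Prob{X\geq 2\mu}\leq e^{-\mu/3}$, and union-bound over the $(k+1)$-tuples of sets.
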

\begin{proof}
For each element $e$, $\Prob{e\in (S_k \cap (S_1\cup\cdots \cup S_{k-1}))\setminus S} = p_0(1-p_0)(1-p_0^{k-1})$. Then, 
\begin{align*}
\mathbb{E}(\card{(S_k \cap (S_1\cup\cdots \cup S_{k-1}))\setminus S}) = p_0(1-p_0)(1-p_0^{k-1}) n \geq p_0\cdot p_0^{k/4} \cdot p_0^{k/4} &\geq p_0^{\apxF k} n \\
&= 8(\apxF k+2)\log m 
\end{align*} 
by Lemma \ref{lem:gen-k-property}. By Chernoff bound, the probability that $\card{(S_k \cap (S_1\cup\cdots \cup S_{k-1}))\setminus S} \geq 2p_0(1-p_0)(1-p_0^{k-1}) n$ is 
\begin{align*}
e^{-{p_0(1-p_0)(1-p_0^{k-1}) n \over 3}} \leq e^{-2(\apxF k +2)\log m} \leq m^{-2\apxF k-4}.
\end{align*}
Thus with probability at least $1-m^{-2}$  property~(\ref{item:gen-candidate-element}) holds for any sets $S_1, \cdots, S_k$ and $S$ in $\sS$. 
\end{proof}

\begin{lemma}\label{lem:gen-element-degneg}
Given that $k\leq \left(\frac{n}{16 \apxF \log m}\right)^{\frac{1}{4\apxF+1}}$, for each element, the number of sets that do not contain the element is at most $(1+{1\over k})p_0 m$. 
\end{lemma}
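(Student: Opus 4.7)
Fix an element $e \in \sE$ and let $Y_e := \card{\set{S \in \sS : e \notin S}}$. By the definition of $\binstF(\sE,p_0)$, the indicators $\{[e \notin S]\}_{S \in \sS}$ are independent Bernoulli$(p_0)$ variables, so $Y_e \sim \mathrm{Binomial}(m, p_0)$ with mean $p_0 m$. I would apply the multiplicative Chernoff bound with the small deviation $\delta = 1/k$ dictated by the claimed tail $(1+1/k)p_0 m$, obtaining
\[
\Pr\!\left[Y_e \geq (1+1/k)\,p_0 m\right] \;\leq\; \exp\!\pth{-\tfrac{p_0 m}{3k^2}},
\]
and then take a union bound over all $n\leq m$ elements to obtain the uniform statement.

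For the union bound to deliver failure probability at most $m^{-2}$ (matching the other median-property lemmas), it suffices to verify the quantitative lower bound $p_0 m \geq 9 k^2 \log m$. This is the main obstacle, since the defining formula for $p_0$ involves an $(\apxF k)$-th root and both $p_0$ and the Chernoff deviation $1/k$ are individually small; naive bounds like $p_0 \geq p_0^{\apxF k} = 8(\apxF k + 2)\log m/n$ are not tight enough, and we have to exploit the full strength of the hypothesis on $k$.

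To establish $p_0 m \geq 9 k^2 \log m$, I would use $m \geq n$ to reduce to bounding $p_0 n$, then substitute $p_0 = (8(\apxF k + 2)\log m/n)^{1/(\apxF k)}$ to rewrite
\[
p_0 n \;=\; 8(\apxF k + 2)\log m \cdot \left(\tfrac{n}{8(\apxF k + 2)\log m}\right)^{1-1/(\apxF k)}.
\]
The hypothesis $k \leq (n/(16\apxF \log m))^{1/(4\apxF+1)}$ rearranges to $n \geq 16\apxF k^{4\apxF+1} \log m$, and combining with $\apxF k + 2 \leq 2\apxF k$ (valid since $\apxF > 1$ and $k \geq 2$, so $\apxF k \geq 2$) gives $n/(8(\apxF k + 2)\log m) \geq k^{4\apxF}$. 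Since the exponent satisfies $4\apxF\pth{1 - 1/(\apxF k)} = 4\apxF - 4/k \geq 2$ for $\apxF \geq 1$ and $k \geq 2$, the parenthesised factor is at least $k^2$. Plugging back yields $p_0 m \geq p_0 n \geq 8(\apxF k + 2) \log m \cdot k^2 \geq 32 k^2 \log m$, comfortably exceeding the threshold $9 k^2 \log m$; the rest is the routine Chernoff-plus-union-bound argument outlined in the first paragraph.
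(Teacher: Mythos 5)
Your proof is correct, and it takes a cleaner route than the paper's in one place. Both you and the paper apply the multiplicative Chernoff bound with deviation $\delta = 1/k$ to $Y_e \sim \mathrm{Binomial}(m,p_0)$, getting $\Pr[Y_e \geq (1+1/k)p_0 m] \leq e^{-p_0 m/(3k^2)}$, and both aim for a per-element failure probability of $m^{-3}$ so a union bound finishes the job. The difference is in how you verify that the exponent is at least $3\log m$. The paper splits into two cases: if $k \geq \log n$ it argues $p_0 \geq 1/e$ (so the exponent is $\Omega(m/k^2)$, controlled via the auxiliary bound $k \leq \sqrt{m/(27\ln m)}$), and if $k < \log n$ it bounds $p_0 \geq n^{-1/(\apxF k)}$ and $k^2 < \log^2 n$, invoking ``sufficiently large $n$'' at the end. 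You instead derive the clean, case-free estimate $p_0 n = 8(\apxF k+2)\log m \cdot \pth{\tfrac{n}{8(\apxF k + 2)\log m}}^{1-1/(\apxF k)} \geq 8(\apxF k + 2)k^2\log m \geq 32k^2\log m$ directly from the hypothesis $n \geq 16\apxF k^{4\apxF+1}\log m$ together with $\apxF k+2 \leq 2\apxF k$ and the exponent check $4\apxF - 4/k \geq 2$. This gives a concrete constant ($32 \geq 9$) rather than an asymptotic argument, and it avoids the paper's need to first show that the stated hypothesis on $k$ also implies $k \leq \sqrt{m/(27\ln m)}$. Both approaches are valid; yours is tidier.
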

\begin{proof}
First, note that $k\leq \left(\frac{n}{16 \apxF \log m}\right)^{\frac{1}{4\apxF+1}}\leq\sqrt{\frac{m}{27\ln m}}$ as $m\geq n$ and $\apxF\geq 1$.

Next, for each element $e$, $\pr_{S\sim \sS}[e\notin S] = p_0$. This implies that 
$\mathbb{E}_S(\card{\set{S \sep e\notin S}}) = p_0 m$.
By Chernoff bound, the probability that $\card{\set{S \sep e\notin S}} \geq (1+{1\over k})p_0 m$ is at most $e^{-mp_0\over 3k^2}$.
Now if $k\geq \log n$, then $p_0 \geq 1/e$ and thus this probability would be at most $\exp({-m\over 3ek^2}) \leq m^{-3}$ for any $k\leq \sqrt{\frac{m}{27\ln m}}$. Otherwise, we have that the above probability is at most $\exp({-m n^{-1/\apxF k} \over 3\log^2 n}) \leq \exp({-m^{1-1/\apxF k} \over 3\log^2 m})\leq m^{-3}$ given $m\geq n$ and sufficiently large $n$.
Thus with probability at least $1-m^{-2}$  property~(\ref{item:gen-degneg}) holds for any element $e\in \sE$. 
\end{proof}

\subsection{Distribution $\minstD(\binst)$ of the Modified Instances Derived from $\binst$.}

Fix a median instance $\binst$. We now show that we may perform $\tilde O(n^{1-1/\alpha}k^{1/\apxF})$ $\swap$ operations on $\binst$ so that the size of the minimum set cover in the modified instance becomes $k$. So, the number of queries to $\EltOf$ and $\SetOf$ that induce different answers from those of $\binst$ is at most $\tldO(n^{1-1/\apxF}k^{1/\apxF})$. 
We define $\minstD(\binst)$ as the distribution of instances $\minst$ that is generated from a median instance $\binst$ by $\genModifiedInst(\binst)$ given below in Figure~\ref{gen:modified-instance}. 
The main difference from the simplified version are that we now select $k$ different sets to turn them into a set cover, and the swaps may only occur between $S_k$ and the candidates.

\begin{figure}[h]
    \begin{center}
        \begin{algorithmEnv}
            \underline{\genModifiedInst%
               $\pth{ \binst=\pth{\sE, \sS}}$}: $\Bigl.$ \qquad
            \+\\ %

            $\sM \leftarrow \emptyset$ \\ 
            {\bf pick} $k$ \emph{different} sets $S_1, \cdots S_k$ from $\sS$ uniformly at random\\
            \Foreach $e\in \sE\setminus (S_1 \cup \cdots \cup S_k)$ \Do\\
            \> {\bf pick} $e'\in (S_k\cap (S_1\cup\cdots \cup S_{k-1}))\setminus \sM$ uniformly at random \\
            \> $\sM \leftarrow \sM \cup \set{ee'}$\\
            \> {\bf pick} a random set $S$ in $\cand(e,e')$\\
            \> $\swap(e,e')$ between $S,S_k$
        \end{algorithmEnv}%
    \end{center}
    \vspace{-0.15in}
    \caption{The procedure of constructing a modified instance of $\binst$.  }
    \label{gen:modified-instance}
\end{figure}
\begin{lemma}
The procedure \emph{$\genModifiedInst$} is well-defined under the precondition that the input instance $\binst$ is a median instance.
\end{lemma}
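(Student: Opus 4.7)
The goal is to verify that both random choices made in the inner loop of \genModifiedInst{} are always well defined: (i) for each uncovered element $e \in \sE \setminus (S_1 \cup \cdots \cup S_k)$, an unmatched element $e' \in (S_k \cap (S_1 \cup \cdots \cup S_{k-1})) \setminus \sM$ exists, and (ii) for each chosen pair $(e, e')$, the candidate family $\cand(e, e')$ is nonempty. The approach mirrors that of Lemma~\ref{lem:2/3-well-defined} in the simplified case, but uses the generalized median-instance properties (\ref{item:gen-uncovered-elem-size})--(\ref{item:gen-candidate-size}) from Definition~\ref{def:general-median-instance}, combined with the inequalities on $p_0$ established in Lemma~\ref{lem:gen-k-property}.

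First, I would establish (i) by comparing the two relevant bounds from the median-instance properties: by property (\ref{item:gen-uncovered-elem-size}) the set of uncovered elements has size at most $2n p_0^k$, while by property (\ref{item:gen-s-candidate-element}) the matching pool $S_k \cap (S_1 \cup \cdots \cup S_{k-1})$ has size at least $(1 - p_0)(1 - p_0^{k-1}) n / 2$. Applying Lemma~\ref{lem:gen-k-property}(a) gives $1 - p_0 \geq p_0^{k/4}$, and since $k \geq 2$ we have $p_0^{k-1} \leq p_0^{k/4} \leq 1/2$ by Lemma~\ref{lem:gen-k-property}(b), so $1 - p_0^{k-1} \geq 1/2$. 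Thus the matching pool has size at least $n p_0^{k/4} / 4$. The required inequality $2 n p_0^k \leq n p_0^{k/4} / 4$ reduces to $8 p_0^{3k/4} \leq 1$, which holds since $p_0^{k/4} \leq 1/2$ implies $p_0^{3k/4} \leq 1/8$. This guarantees that each uncovered element can be paired to a fresh element of $S_k \cap (S_1 \cup \cdots \cup S_{k-1}) \setminus \sM$, independently of the order in which the foreach loop processes them.

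For (ii), I would invoke property (\ref{item:gen-candidate-size}), which gives $|\cand(e, e')| \geq (1 - p_0) p_0 m / 2$. Using Lemma~\ref{lem:gen-k-property}(a) once more, this is at least $p_0^{k/4+1} m / 2$, which is positive (and indeed large) under the standing assumptions on $m$ and $k$. Hence a random set $S \in \cand(e, e')$ can always be chosen, and the swap of $e$ with $e'$ between $S$ and $S_k$ is well defined.

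The only mildly delicate point is verifying the pool-versus-uncovered comparison above, because the bound $8 p_0^{3k/4} \leq 1$ is tight up to constants; everything else is a routine substitution of the median-instance properties. Once both (i) and (ii) are verified for every iteration, the procedure terminates with a valid modified instance, completing the proof.
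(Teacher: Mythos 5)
Your proof is correct and takes essentially the same approach as the paper: both verify the matching pool is large enough by comparing the bounds from properties~(\ref{item:gen-uncovered-elem-size}) and~(\ref{item:gen-s-candidate-element}) via Lemma~\ref{lem:gen-k-property}, and both invoke property~(\ref{item:gen-candidate-size}) for nonemptiness of $\cand(e,e')$. Your intermediate chain (bounding $1-p_0^{k-1}\geq 1/2$ rather than $\geq p_0^{k/4}$, yielding the slightly sharper pool bound $np_0^{k/4}/4$) differs cosmetically from the paper's $p_0^{k/2}n/2$, but is a harmless variant, and your one-line description of the properties used should read through property~(\ref{item:gen-s-candidate-element}) rather than stopping at~(\ref{item:gen-candidate-size}).
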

\begin{proof}
To carry out the algorithm, we must ensure that the number of the initially uncovered elements is at most that of the elements covered by both $S_k$ and some other set from $S_1, \ldots, S_{k-1}$. Since $\binst$ is a median instance, by properties~(\ref{item:gen-uncovered-elem-size}) and~(\ref{item:gen-s-candidate-element}) from Definition~\ref{def:general-median-instance}, these values satisfy $\card{\sE\setminus(S_1\cup \cdots \cup S_k)} \leq 2p_0^k n$ and $\card{S_k \cap (S_1\cup\cdots\cup S_{k-1})} \geq (1-p_0)(1-p_0^{k-1})n/2$, respectively. By Lemma \ref{lem:gen-k-property}, $p_0^{k/4}\leq 1/2$. Using this and Lemma \ref{lem:gen-k-property} again,
\begin{align*}
(1-p_0)(1-p_0^{k-1})n/2 &\geq p_0^{k/4}\cdot p_0^{k/4} \cdot n/2 \geq  p_0^{k/2}n/2 \geq 2p_0^{k}n.
\end{align*}
That is, in our construction there are sufficiently many possible choices for $e'$ to be matched and swapped with each uncovered element $e$. Moreover, since $\binst$ is a median instance, $\card{\cand(e,e')} \geq {(1-p_0)p_0 m / 2}$ (by property~(\ref{item:gen-candidate-size})), and there are plenty of candidates for each swap.
\end{proof}

\subsubsection{Bounding the Probability of Modification.}\label{sec:gen-almost-uniform}

Similarly to the simplified case, define $P_{\eltq-\setq}:\sE\times\sS \rightarrow [0,1]$ as the probability that an element is swapped by a set, and upper bound it via the following lemma.

\begin{lemma}\label{lem:gen_p_cell_uniform}
For any $e \in \sE$ and $S \in \sS$, $P_{\eltq-\setq}(e,S)\leq \frac{64p_0^k}{(1-p_0)^2m}$ where the probability is taken over the random choices of $\minst\sim\minstD(\binst)$.
\end{lemma}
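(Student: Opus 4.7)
The plan is to mimic the proof of Lemma~\ref{lem:p_cell_uniform} (the $k=2$ case), adapting it to the generalized setting with $k$ chosen sets. Fix $e \in S$ in $\binst$. There are two mutually exclusive events under which \genModifiedInst~alters whether $e \in S$: Case 1, where $S = S_k$ and $e$ is chosen as a swap partner from $S_k \cap (S_1 \cup \cdots \cup S_{k-1})$ for some uncovered element; and Case 2, where $S \notin \{S_1, \ldots, S_k\}$, $e \notin S_1 \cup \cdots \cup S_k$, and $S$ is picked as the candidate performing $\swap(e, e')$ with $S_k$ for some $e' \in S_k \cap (S_1 \cup \cdots \cup S_{k-1})$. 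Accordingly, $P_{\eltq-\setq}(e,S) \leq P_1 + P_2$, where $P_i$ denotes the probability of Case $i$.

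For Case 1, $\Pr[S = S_k] = 1/m$ by uniformity of the ordered choice of $S_1, \ldots, S_k$. Conditioned on $e \in S_k \cap (S_1 \cup \cdots \cup S_{k-1})$, the probability that $e$ is selected as a matching partner is at most $|\sE \setminus (S_1 \cup \cdots \cup S_k)|/|S_k \cap (S_1 \cup \cdots \cup S_{k-1})|$ by the symmetry of the uniformly random matching. Invoking properties~(\ref{item:gen-uncovered-elem-size}) and~(\ref{item:gen-s-candidate-element}) of the median instance, together with $1 - p_0^{k-1} \geq 1 - p_0$ (which holds for $k \geq 2$), this computation yields $P_1 \leq 4 p_0^k / ((1-p_0)^2 m)$.

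For Case 2, conditioning on $e \in S$ and $S \notin \{S_1, \ldots, S_k\}$, the probability that none of the randomly chosen $S_1, \ldots, S_k$ contains $e$ equals $\binom{d_e}{k}/\binom{m-1}{k}$, where $d_e$ is the number of sets not containing $e$. Using property~(\ref{item:gen-degneg}) that $d_e \leq (1 + 1/k) p_0 m$, together with the elementary bounds $(1+1/k)^k \leq e$ and $(m/(m-k))^k \leq e$ (the latter holding whenever $k^2 \ll m$), this is at most $e^2 p_0^k$. Finally, conditioning further on $e \in S \setminus (S_1 \cup \cdots \cup S_k)$, the probability that $S$ swaps $e$ with $S_k$ is a sum over valid partners $e' \in (S_k \cap (S_1 \cup \cdots \cup S_{k-1})) \setminus S$, each weighted by the matching probability $1/|S_k \cap (S_1 \cup \cdots \cup S_{k-1})|$ and the candidate-selection probability $1/|\cand(e,e')|$; applying properties~(\ref{item:gen-s-candidate-element}), (\ref{item:gen-candidate-element}), and~(\ref{item:gen-candidate-size}) telescopes this sum down to $8/((1-p_0)m)$. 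Combining these factors gives $P_2 \leq 8 e^2 p_0^k / ((1-p_0) m) \leq 8 e^2 p_0^k / ((1-p_0)^2 m)$, and adding the two cases produces $P_{\eltq-\setq}(e,S) \leq (4 + 8e^2) p_0^k / ((1-p_0)^2 m) \leq 64 p_0^k / ((1-p_0)^2 m)$, as desired.

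The main technical obstacle is the new factor $\binom{d_e}{k}/\binom{m-1}{k}$ for $k$ random sets all missing $e$, which replaces the simpler $d_e(d_e-1)/((m-1)(m-2))$ of the two-set case. A naive bound would lose factors like $(1+1/k)^k$ and $(m/(m-k))^k$ that could both blow up if $k$ were large relative to $m$; the parameter range in Theorem~\ref{thm:general-lb-2} guarantees $k^2 \ll n \leq m$, which is precisely what is needed to keep both of these factors bounded by small constants and to make $4 + 8e^2 < 64$ absorb all the remaining slack.
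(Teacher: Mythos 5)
Your proposal is correct and follows essentially the same route as the paper's own proof: the same two-case decomposition on whether $S = S_k$ or $S \notin \{S_1,\ldots,S_k\}$, the same bounds on the matching and candidate-selection probabilities via properties~(\ref{item:gen-uncovered-elem-size})--(\ref{item:gen-degneg}), the same $e^2 p_0^k$ estimate for the probability that all $k$ sampled sets miss $e$, and the same final combination $4 + 8e^2 < 64$. The only cosmetic deviation is in bounding $\prod_{i=0}^{k-1}\frac{d_e-i}{m-1-i}$: you use $1/(m-k)^k$ and $(m/(m-k))^k \leq e$ (needing $k^2 \lesssim m$), whereas the paper bounds $m-1 \geq m\tfrac{k}{k+1}$ to get $(1+1/k)^{2k} \leq e^2$ (needing only $m \geq k+1$); both are fine under the theorem's hypotheses and land on the same constant.
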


\begin{proof}
Let $S_1, \ldots, S_k$ denote the first $k$ sets picked (uniformly at random) from $\sS$ to construct a modified instance of $\binst$. For each element $e$ and a set $S$ such that $e\in S$ in the basic instance $\binst$,
\begin{align*}
&P_{\eltq-\setq}(e,S) =  \Prob{S = S_k} 
		\cdot \Prob{e\in \cup_{i\in[k-1]}S_i \sep e\in S_k}\\ 
		&\cdot \Prob{e \text{ matches to } \sE\setminus(\cup_{i\in[k]}S_i) \sep e\in S_k \cap (\cup_{i\in[k-1]}S_i)} \\
&+ \Prob{S\notin\set{S_1, \ldots, S_k}} 
		\cdot \Prob{e\in S\setminus (\cup_{i\in[k]}S_i) \sep e\in S}\\	
		&\cdot \Prob{S \text{ swaps } e \text{ with } S_k \sep e\in S\setminus (S_1\cup \cdots \cup S_k)},
\end{align*}
where all probabilities are taken over $\minst \sim \minstD(\binst)$.
Next we bound each of the above six terms. Clearly, since we choose the sets $S_1,\cdots, S_k$ randomly, $\pr[S=S_k] = 1/m$. We bound the second term by $1$. Next, by properties~(\ref{item:gen-uncovered-elem-size}) and~(\ref{item:gen-s-candidate-element}) of median instances, the third term is at most
\begin{align*}
\frac{\card{\sE\setminus (\cup_{i\in[k]}S_i)}}{\card{S_k\cap (\cup_{i\in[k-1]}S_i)}}
\leq \frac{2p_0^kn}{(1-p_0)(1-p_0^{k-1}){n\over 2}}
\leq \frac{4p_0^k}{(1-p_0)^2}.
\end{align*}
We bound the fourth term by $1$. Let $d_e$ denote the number of sets in $\sS$ that do not contain $e$. Using  property~(\ref{item:gen-degneg}) of median instances, the fifth term is at most
\begin{align*}
\frac{d_e (d_e-1)\cdots (d_e-k+1)}{(m-1)(m-2)\cdots (m-k)} &\leq \left( \frac{d_e}{m-1}\right)^{k} \leq (\frac{(1+1/k)p_0m}{m (1-\frac{1}{k+1})})^k \leq e^2 p_0^k, 
\end{align*}
Finally for the last term, note that by symmetry, each pair of matched elements $ee'$ is picked by $\genModifiedInst$ equiprobably. Thus, for any $e\in S\setminus (S_1 \cup \cdots\cup S_k)$, the probability that each element $e'\in  S_k \cap (S_1\cup\cdots\cup S_{k-1})$ is matched to $e$ is ${1 \over \card{ S_k \cap (S_1\cup\cdots\cup S_{k-1})}}$. By properties~(\ref{item:gen-candidate-size})-(\ref{item:gen-candidate-element}) of median instances, the last term is at most
\begin{align*}
&\sum_{e'\in (S_k \cap (\cup_{i\in[k-1]}S_i))\setminus S}  \Prob{ee'\in \sM}\cdot \Prob{(S,S_k) \text{ swap } (e,e')} \\
& \leq  \card{(S_k \cap (\cup_{i\in[k-1]}S_i))\setminus S}
\!\begin{aligned}[t]
	&\cdot {1 \over \card{S_k \cap (\cup_{i\in[k-1]}S_i)}}\cdot {1 \over \card{\cand(e,e')}} \\
\end{aligned}
\\
&\leq 2p_0(1-p_0)(1-p_0^{k-1})n 
\!\begin{aligned}[t]
	&\cdot \frac{1}{(1-p_0)(1-p_0^{k-1})n/2} \cdot \frac{1}{p_0(1-p_0)m/2} \\ 
\end{aligned}
\\
& \leq \frac{8}{(1-p_0)m}
\end{align*}
Therefore, 
\begin{align*}
P_{\eltq-\setq}(e,S) &\leq \frac{1}{m}\cdot 1 \cdot \frac{4p_0^k}{(1-p_0)^2} + 1\cdot e^2p_0^k \cdot \frac{8}{(1-p_0)m} \\&\leq \frac{4p_0^k}{(1-p_0)^2} + \frac{60p_0^k}{(1-p_0)m} \leq
\frac{64p_0^k}{(1-p_0)^2m}.
\end{align*}
\end{proof}

\subsection{Proof of Theorem \ref{thm:general-lb-2}.}\label{sec:sc-lowerbound-proof}
The remaining part of our proof follows that of the simplified version almost exactly.

\noindent
{\em Proof of Theorem~\ref{thm:general-lb-2}.}
Applying the same argument as that of Lemma~\ref{lem:elemnt-query}, we derive that the probability that $\alg$ returns different outputs on $\binst$ and $\minst$ is at most
\begin{align*}
\Prob{\alg(\binst) \neq \alg(\minst)} &\leq  \sum_{t=1}^{\card{Q}} \Prob{\ans_{\binst}(q_t) \neq \ans_{\minst}(q_t)} \leq \sum_{t=1}^{\card{Q}} P_{\eltq-\setq}(e(q_t), S(q_t)) \leq {64p_0^k \over m(1-p_0)^2} \card{Q},
\end{align*}
via the result of Lemma~\ref{lem:gen_p_cell_uniform}. Then, over the distribution in which we applied Yao's lemma, we have
\begin{align*}
\Prob{\alg \text{ succeeds}} \leq 1-{1\over 2}\pr_{\minst\sim\minstD(\binst)}[\alg(\binst) = \alg(\minst)] &\leq 1- {1 \over 2} \left(1- {{64 p_0^k \over m(1-p_0)^2}\card{Q}} \right) \\ 
	&= {1\over 2} + {32 p_0^k \over m(1-p_0)^2}\card{Q} \\
	&\leq \frac{1}{2} + \frac{32}{m}\left(\frac{8(k\apxF+2)\log m}{n}\right)^{1\over 2\apxF}\card{Q}
\end{align*}
where the last inequality follows from Lemma \ref{lem:gen-k-property}. Thus, if the number of queries made by $\alg$ is less than $\frac{m}{192}(\frac{n}{8(k\apxF+2)\log m})^{1/(2\apxF)}$, then the probability that $\alg$ returns the correct answer over the input distribution is less than $2/3$ and the proof is complete.


\begin{thebibliography}{10}

\bibitem{ams-acskr-06}
N.~Alon, D.~Moshkovitz, and S.~Safra.
\newblock Algorithmic construction of sets for $k$-restrictions.
\newblock {\em ACM Trans. Algo.}, 2(2):153--177, 2006.

\bibitem{a-tsatmpsscp-17}
S.~Assadi.
\newblock Tight space-approximation tradeoff for the multi-pass streaming set
  cover problem.
\newblock In {\em Proc. 36th ACM Sympos. on Principles of Database Systems
  {\em(PODS)}}, pages 321--335, 2017.

\bibitem{aky-tbspscscp-16}
S.~Assadi, S.~Khanna, and Y.~Li.
\newblock Tight bounds for single-pass streaming complexity of the set cover
  problem.
\newblock In {\em Proc. 48th Annu. ACM Sympos. Theory Comput. {\em(STOC)}},
  pages 698--711, 2016.

\bibitem{bem-dcms-16}
M.~Bateni, H.~Esfandiari, and V.~S. Mirrokni.
\newblock Distributed coverage maximization via sketching.
\newblock {\em CoRR}, abs/1612.02327, 2016.

\bibitem{bem-aosacp-17}
M.~Bateni, H.~Esfandiari, and V.~S. Mirrokni.
\newblock Almost optimal streaming algorithms for coverage problems.
\newblock {\em Proc. 29th ACM Sympos. Parallel Alg. Arch. {\em(SPAA)}}, 2017.

\bibitem{bhattacharya2015space}
S.~Bhattacharya, M.~Henzinger, D.~Nanongkai, and C.~Tsourakakis.
\newblock Space-and time-efficient algorithm for maintaining dense subgraphs on
  one-pass dynamic streams.
\newblock In {\em Proc. 47th Annu. ACM Sympos. Theory Comput. {\em(STOC)}},
  pages 173--182, 2015.

\bibitem{cw-igpcs-16}
A.~Chakrabarti and A.~Wirth.
\newblock Incidence geometries and the pass complexity of semi-streaming set
  cover.
\newblock In {\em Proc. 27th ACM-SIAM Sympos. Discrete Algs. {\em(SODA)}},
  pages 1365--1373, 2016.

\bibitem{chazelle2005approximating}
B.~Chazelle, R.~Rubinfeld, and L.~Trevisan.
\newblock Approximating the minimum spanning tree weight in sublinear time.
\newblock {\em SIAM Journal on computing}, 34(6):1370--1379, 2005.

\bibitem{ckt-mcmr-10}
F.~Chierichetti, R.~Kumar, and A.~Tomkins.
\newblock Max-cover in map-reduce.
\newblock In {\em Proc. 19th Int. Conf. World Wide Web {\em(WWW)}}, pages
  231--240, 2010.

\bibitem{dimv-sccsc-14}
E.~D. Demaine, P.~Indyk, S.~Mahabadi, and A.~Vakilian.
\newblock On streaming and communication complexity of the set cover problem.
\newblock In {\em Proc. 28th Int. Symp. Dist. Comp. {\em(DISC)}}, volume 8784
  of {\em Lect. Notes in Comp. Sci.}, pages 484--498, 2014.

\bibitem{ds-aapr-14}
I.~Dinur and D.~Steurer.
\newblock Analytical approach to parallel repetition.
\newblock In {\em Proc. 46th Annu. ACM Sympos. Theory Comput. {\em(STOC)}},
  pages 624--633, 2014.

\bibitem{er-sssc-14}
Y.~Emek and A.~Ros{\'{e}}n.
\newblock Semi-streaming set cover.
\newblock In {\em Proc. 41st Int. Colloq. Automata Lang. Prog. {\em(ICALP)}},
  volume 8572 of {\em Lect. Notes in Comp. Sci.}, pages 453--464, 2014.

\bibitem{f-tasc-98}
U.~Feige.
\newblock A threshold of ln n for approximating set cover.
\newblock {\em Journal of the ACM (JACM)}, 45(4):634--652, 1998.

\bibitem{goel2013perfect}
A.~Goel, M.~Kapralov, and S.~Khanna.
\newblock Perfect matchings in {$O(n\log n)$} time in regular bipartite graphs.
\newblock {\em SIAM Journal on Computing}, 42(3):1392--1404, 2013.

\bibitem{grigoriadis1995sublinear}
M.~D. Grigoriadis and L.~G. Khachiyan.
\newblock A sublinear-time randomized approximation algorithm for matrix games.
\newblock {\em Operations Research Letters}, 18(2):53--58, 1995.

\bibitem{gw-ceaas-97}
T.~Grossman and A.~Wool.
\newblock Computational experience with approximation algorithms for the set
  covering problem.
\newblock {\em Euro. J. Oper. Res.}, 101(1):81--92, 1997.

\bibitem{imv-ttbsscp-15}
S.~{Har-Peled}, P.~Indyk, S.~Mahabadi, and A.~Vakilian.
\newblock Towards tight bounds for the streaming set cover problem.
\newblock In {\em Proc. 35th ACM Sympos. on Principles of Database Systems
  {\em(PODS)}}, 2016.

\bibitem{imruvy-fscsm-17}
P.~Indyk, S.~Mahabadi, R.~Rubinfeld, J.~Ullman, A.~Vakilian, and
  A.~Yodpinyanee.
\newblock Fractional set cover in the streaming model.
\newblock {\em Approximation, Randomization, and Combinatorial Optimization
  {\em(APPROX/RANDOM)}}, pages 198--217, 2017.

\bibitem{kv-iclt-1994}
M.~J. Kearns and U.~V. Vazirani.
\newblock {\em An introduction to computational learning theory}.
\newblock MIT press, 1994.

\bibitem{Koufogiannakis2014}
C.~Koufogiannakis and N.~E. Young.
\newblock A nearly linear-time {PTAS} for explicit fractional packing and
  covering linear programs.
\newblock {\em Algorithmica}, 70(4):648--674, 2014.

\bibitem{kmw-pbns-06}
F.~Kuhn, T.~Moscibroda, and R.~Wattenhofer.
\newblock The price of being near-sighted.
\newblock In {\em Proc. 17th ACM-SIAM Sympos. Discrete Algs. {\em(SODA)}},
  2006.

\bibitem{kmvv-fgams-13}
R.~Kumar, B.~Moseley, S.~Vassilvitskii, and A.~Vattani.
\newblock Fast greedy algorithms in {MapReduce} and streaming.
\newblock In {\em Proc. 25th ACM Sympos. Parallel Alg. Arch. {\em(SPAA)}},
  pages 1--10, 2013.

\bibitem{marko2006distance}
S.~Marko and D.~Ron.
\newblock Distance approximation in bounded-degree and general sparse graphs.
\newblock In {\em Approximation, Randomization, and Combinatorial Optimization.
  Algorithms and Techniques}, pages 475--486. Springer, 2006.

\bibitem{mv-bsamcp-17}
A.~McGregor and H.~T. Vu.
\newblock Better streaming algorithms for the maximum coverage problem.
\newblock In {\em 20th International Conference on Database Theory, {ICDT}
  2017, March 21-24, 2017, Venice, Italy}, pages 22:1--22:18, 2017.

\bibitem{mz-rccsdsm-15}
V.~S. Mirrokni and M.~Zadimoghaddam.
\newblock Randomized composable core-sets for distributed submodular
  maximization.
\newblock In {\em Proc. 47th Annu. ACM Sympos. Theory Comput. {\em(STOC)}},
  pages 153--162, 2015.

\bibitem{m-pgcnsc-12}
D.~Moshkovitz.
\newblock The projection games conjecture and the {NP}-hardness of {$\ln
  n$}-approximating set-cover.
\newblock In {\em Approximation, Randomization, and Combinatorial Optimization.
  Algorithms and Techniques}, pages 276--287. Springer, 2012.

\bibitem{MR}
R.~Motwani and P.~Raghavan.
\newblock {\em Randomized algorithms}.
\newblock Chapman \& Hall/CRC, 2010.

\bibitem{no-ctaali-08}
H.~N. Nguyen and K.~Onak.
\newblock Constant-time approximation algorithms via local improvements.
\newblock In {\em Proc. 49th Annu. IEEE Sympos. Found. Comput. Sci.
  {\em(FOCS)}}, pages 327--336. IEEE, 2008.

\bibitem{orrr-nostaamvcs-12}
K.~Onak, D.~Ron, M.~Rosen, and R.~Rubinfeld.
\newblock A near-optimal sublinear-time algorithm for approximating the minimum
  vertex cover size.
\newblock In {\em Proc. 23rd ACM-SIAM Sympos. Discrete Algs. {\em(SODA)}},
  pages 1123--1131, 2012.

\bibitem{parnas2007approximating}
M.~Parnas and D.~Ron.
\newblock Approximating the minimum vertex cover in sublinear time and a
  connection to distributed algorithms.
\newblock {\em Theoretical Computer Science}, 381(1):183--196, 2007.

\bibitem{rs-sbepl-97}
R.~Raz and S.~Safra.
\newblock A sub-constant error-probability low-degree test, and a sub-constant
  error-probability {PCP} characterization of {NP}.
\newblock In {\em Proc. 29th Annu. ACM Sympos. Theory Comput. {\em(STOC)}},
  1997.

\bibitem{sg-mcsma-09}
B.~Saha and L.~Getoor.
\newblock On maximum coverage in the streaming model {\&} application to
  multi-topic blog-watch.
\newblock In {\em Proc. {SIAM} Int. Conf. Data Mining {\em(SDM)}}, pages
  697--708, 2009.

\bibitem{yyi-ictaammoop-12}
Y.~Yoshida, M.~Yamamoto, and H.~Ito.
\newblock Improved constant-time approximation algorithms for maximum matchings
  and other optimization problems.
\newblock {\em SIAM Journal on Computing}, 41(4):1074--1093, 2012.

\end{thebibliography}
\end{document}